\def\llncs{0}
\def\comments{0}
\newcommand\ketbra[2][]{%
  \def\ketbra@firstarg{#1}%
  \def\ketbra@secondarg{#2}%
  \ifx\ketbra@firstarg\empty%
    \left\lvert\ketbra@secondarg\middle\rangle \! \middle\langle \ketbra@secondarg\right\rvert%
  \else%
    \left\lvert\ketbra@firstarg\middle\rangle \! \middle\langle\ketbra@secondarg\right\rvert%
  \fi%
}
\def\metadef#1#2{%
  \def\metadef@iter##1{\ifx##1;\else \expandafter\newcommand\csname#1\endcsname{#2}\expandafter\metadef@iter\fi}%
  \expandafter\metadef@iter%
}
\newtheorem{all}{Theorem}[section]
\newtheorem{theorem}[all]{Theorem}
\newtheorem{lemma}[all]{Lemma}
\newtheorem{definition}[all]{Definition}
\newtheorem{corollary}[all]{Corollary}
\newtheorem{remark}[all]{Remark}
\newtheorem{claim}[all]{Claim}
\newtheorem{theorem}{Theorem}
\newtheorem{lemma}{Lemma}
\newtheorem{claim}{Claim}
\newtheorem{corollary}{Corollary}
\newtheorem{remark}{Remark}
\newtheorem{definition}{Definition}
\newcommand{\Tr}{Tr}
\newcommand{\TD}{\mathsf{TD}}
\newcommand{\mor}[1]{$\ll$\textsf{\color{red} Tomoyuki: { #1}}$\gg$}
\newcommand{\todo}[1]{\noindent \textcolor{red}{(\textbf{TODO:} #1\noindent)}}
\newcommand{\eli}[1]{\noindent \textcolor{orange}{(\textbf{Eli:} #1\noindent)}}
\newcommand{\takashi}[1]{\noindent \textcolor{blue}{(\textbf{Takashi:} #1\noindent)}}
\newcommand{\saachi}[1]{\noindent \textcolor{green}{(\textbf{Saachi:} #1\noindent)}}
\newcommand{\mor}[1]{}
\newcommand{\todo}[1]{}
\newcommand{\eli}[1]{}
\newcommand{\takashi}[1]{}
\newcommand{\saachi}[1]{}
\newcommand\algo{\mathcal}
\newcommand{\poly}{\mathsf{poly}}
\newcommand{\negl}{\mathsf{negl}}
\newcommand{\randfrom}{\gets}
\newcommand{\U}{U}
\newcommand{\RO}{\mathsf{RO}}
\newcommand{\E}{\mathop{\mathbb{E}}}
\newcommand{\wt}[1]{\widetilde{#1}}
\newcommand{\abs}[1]{\left|#1\right|}
\newcommand{\w}{\omega}
\newcommand{\argmax}{\mathsf{argmax}}
\DeclareMathOperator{\Sim}{Sim}
\newcommand{\norm}[1]{\left\lVert#1\right\rVert}
\newcommand{\N}{\mathbb{N}}
\newcommand{\C}{\mathbb{C}}
\newcommand{\Mint}{\mathsf{Mint}}
\newcommand{\Veri}{\mathsf{Ver}}
\newcommand{\Comm}{\mathsf{Com}}
\newcommand{\Rece}{\mathsf{Rec}}
\newcommand\ExpOwsg[4][]{%
  \def\ExpOwsg@firstarg{#1}%
  \ifx\ExpOwsg@firstarg\empty%
    \mathsf{Exp}_{#2,#3}(#4)
  \else%
    \mathsf{Exp}_{#2,#3,#1}(#4)
  \fi%
}
\newcommand{\ol}[1]{\overline{#1}}
\newcommand{\xor}{\oplus}
\newcommand{\A}{\mathcal{A}}
\newcommand{\Samp}{\mathsf{Samp}}
\newcommand{\Rec}{\mathsf{Rec}}
\newcommand{\Com}{\mathsf{Com}}
\newcommand{\first}{\mathsf{QKDFirst}}
\newcommand{\second}{\mathsf{QKDSecond}}
\newcommand{\decode}{\mathsf{QKDDecode}}
\newcommand{\Ver}{\mathsf{Ver}}
\newcommand{\BigPr}[2]{
\Pr\left[
\begin{array}{c}
#1
\end{array}
:
\begin{array}{c}
#2
\end{array}
\right]
}
\newcommand{\BigE}[2]{
\E\left[
\begin{array}{c}
#1
\end{array}
:
\begin{array}{c}
#2
\end{array}
\right]
}
\newcommand{\secpa}{n} 
\newcommand{\HaarSt}[1]{\mu_{#1}^s}
\newcommand{\HaarUn}[1]{\mu_{#1}}
\newcommand{\SGSamp}{SG_{\text{Samp}}}
\newcommand{\SGSampn}{SG_{\text{Samp},n}}
\newcommand{\SGSampl}{SG_{\text{Samp},\ell}}
\newcommand{\VerSamp}{Ver_{\text{Samp}}}
\newcommand{\VerSampn}{Ver_{\text{Samp},n}}
\newcommand{\VerSampl}{Ver_{\text{Samp},\ell}}
\newcommand{\SimSG}{\text{Sim}_{SG}^{\SGSamp}}
\newcommand{\SimSGn}{\text{Sim}_{SG,n}^{\SGSampn}}
\newcommand{\SimSGl}{\text{Sim}_{SG,\ell}^{\SGSampl}}
\newcommand{\SimMix}{\text{Sim}_{Mix}^{\VerSamp}}
\newcommand{\SimMixn}{\text{Sim}_{Mix,n}^{\VerSampn}}
\newcommand{\SimMixl}{\text{Sim}_{Mix,\ell}^{\VerSampl}}
\definecolor{corlinks}{RGB}{200,0,0}
\definecolor{cormenu}{RGB}{200,0,0}
\definecolor{corurl}{RGB}{200,0,0}
\title{CountCrypt: Quantum Cryptography between QCMA and PP}
\author{}
\institute{}
\author[1]{Eli Goldin}
\author[2]{Tomoyuki Morimae}
\author[3]{Saachi Mutreja}
\author[4,2]{Takashi Yamakawa}
\affil[1]{\small New York University (\texttt{eli.goldin@nyu.edu})}
\affil[2]{\small Yukawa Institute for Theoretical Physics, Kyoto University (\texttt{tomoyuki.morimae@yukawa.kyoto-u.ac.jp})}
\affil[3]{\small Columbia University (\texttt{saachi@berkeley.edu})}
\affil[4]{\small NTT Social Informatics Laboratories (\texttt{takashi.yamakawa@ntt.com})}
\begin{document}

\maketitle
\begin{abstract}
    We construct a unitary oracle relative to which $\mathbf{BQP}=\mathbf{QCMA}$ but 
    quantum-computation-classical-communication (QCCC) commitments and 
    QCCC multiparty non-interactive key exchange exist. 
    We also construct a unitary oracle relative to which $\mathbf{BQP}=\mathbf{QMA}$, but quantum lightning (a stronger variant of quantum money) exists. 
    This extends previous work by Kretschmer [Kretschmer, TQC22], which showed that there is a quantum oracle relative to which $\mathbf{BQP}=\mathbf{QMA}$ but
   pseudorandm unitaries 
exist. 

    We also show that (poly-round) QCCC key exchange, QCCC commitments, and two-round quantum key distribution can all be used to build one-way puzzles. One-way puzzles are a version of ``quantum samplable'' one-wayness and are an 
    intermediate primitive between pseudorandom state generators and EFI pairs, the minimal quantum primitive. In particular, one-way puzzles cannot exist if $\mathbf{BQP}=\mathbf{PP}$.

    Our results together imply that aside from pseudorandom state generators, there is a large class of quantum cryptographic primitives which can exist even if $\mathbf{BQP} = \mathbf{QCMA}$, 
    but are broken if $\mathbf{BQP} = \mathbf{PP}$. Furthermore, one-way puzzles are a minimal primitive for this class. We denote this class ``CountCrypt''.
\end{abstract}

\newpage
\setcounter{tocdepth}{2}
\tableofcontents
 \newpage

\section{Introduction}

Nearly all cryptographic primitives require some form of computational assumption~\cite{FOCS:ImpLub89}. For example, the existence of one-way functions (OWFs) implies that $\mathbf{P}\neq \mathbf{NP}$. However, some primitives seemingly require stronger computational assumptions than others. For instance, it appears that key exchange (KE) requires stronger computational assumptions than OWFs. In particular, OWFs can be built from 
KE~\cite{C:BelCowGol89}, but there exist strong barriers to building KE from OWFs~\cite{STOC:ImpRud89}. In general, the relationships between classical cryptographic primitives are well-understood. Notably, nearly all primitives can be used to build OWFs, and so the existence of any sort of cryptography necessitates that $\mathbf{P}\neq \mathbf{NP}$.

Everything changes once quantum computers enter the picture. It is not hard to define quantum versions of classical cryptographic primitives, where we allow either the protocol description and/or protocol inputs/outputs to be quantum. For instance, ~\cite{C:JiLiuSon18} defines pseudorandom state generators (PRSGs), a quantum analogue of pseudorandom generators where the output is a quantum state, and it is indistinguishable from a Haar random quantum state. There exists a classical oracle relative to which single-copy-secure PRSGs\footnote{A single-copy-secure PRSG is a variant of PRSGs where the adversary receives only a single copy of a pseudorandom state.} 
exist~\cite{STOC:KQST23} or quantumly-computable trapdoor functions exist~\cite{STOC:KreQiaTal25} but $\mathbf{P}=\mathbf{NP}$, or there exists a quantum oracle relative to which pseudorandom unitaries (PRUs) exist but $\mathbf{BQP}=\mathbf{QMA}$~\cite{TQC:Kre21}. This gives evidence that some quantum primitives require weaker assumptions even than OWFs.

Since~\cite{TQC:Kre21,C:MorYam22,C:AnaQiaYue22}, a slew of recent work has begun to map out the relationships between different forms of cryptographic hardness. To quickly summarize recent progress, it seems that most quantum cryptographic primitives fall into one of several buckets:
\begin{enumerate}
    \item ``QuantuMania'': quantum cryptographic primitives which are broken if $\mathbf{BQP} =\mathbf{QCMA}$. This class includes three types of primitives. The first type is just quantumly-secure classical primitives, such as post-quantum classical OWFs or post-quantum classical public-key encryption (PKE). 
    The second type is primitives with ``genuinely quantum'' additional functionalities such as PKE with (publicly verifiable) certified deletion~\cite{TCC:KitNisYam23} and digital signatures with revocable signing keys~\cite{EPRINT:MorPorYam23}.\mor{Maybe only for publicly verifiable certified deletion case?} The third type
    is most non-interactive quantum-computation-classical-communication (QCCC) primitives, where
    locally quantum computing is possible but all communications are classical,
    such as PKE or digital signatures with classical keys, ciphertexts, or signatures~\cite{C:ChuGolGra24}.
    \mor{We can construct QCCC PKE with quantum sk, so shall we change the text to say that all keys are classical?}
    \item ``CountCrypt'': quantum cryptographic primitives which may exist  if $\mathbf{BQP}=\mathbf{QCMA}$, but are broken by a $\mathbf{PP}$ oracle. That is, CountCrypt primitives do not exist if $\mathbf{BQP} = \mathbf{PP}$. 
    This category includes primitives such as PRSGs~\cite{C:JiLiuSon18,TQC:Kre21}, (pure output) one-way state generators (OWSGs)~\cite{C:MorYam22,TQC:MorYam24}, and one-way puzzles (OWPuzzs)~\cite{STOC:KhuTom24,cavalar2023computational}.\footnote{
    OWSGs and OWPuzzs are natural quantum analogues of OWFs. A OWSG is a pair $(\mathsf{Gen},\mathsf{Ver})$ of quantum polynomial-time (QPT) algorithms. $\mathsf{Gen}(k)\to\phi_k$ 
    takes a random classical bit string $k$ as input and outputs a quantum state $\phi_k$.
    The security requires that no QPT adversary, given many copies of $\phi_k$, can output $k'$ such that
    $\mathsf{Ver}(k',\phi_k)\to\top$. A OWPuzz is a pair $(\mathsf{Samp},\mathsf{Ver})$ of algorithms. $\mathsf{Samp}$ is a QPT algorithm that outputs two classical bit strings $s$ and $k$.
    The security requires that no QPT adversary, given $s$, can output $k'$ such that $\mathsf{Ver}(s,k')\to\top$, where $\mathsf{Ver}$ is not necessarily QPT.} 
    There is evidence that these primitives may exist even if $\mathbf{BQP}=\mathbf{QCMA}$~\cite{TQC:Kre21}. 
    On the other hand, post-quantum classical OWFs do not exist if $\mathbf{BQP}=\mathbf{QCMA}$. 
    Thus, it seems plausible that these primitives could be built from weaker assumptions than those necessary for classical cryptography.\footnote{Unlike the classical MicroCrypt, CountCrypt does not collapse to a single primitive. There exists a black-box separation between OWSG and OWPuzz, two CountCrypt primitives~\cite{microsep,bostanci2024oracle}.}
    \item ``NanoCrypt'': quantum cryptographic primitives which can exist even if $\mathbf{BQP}=\mathbf{PP}$, and which we do not know how to break using \textit{any} classical oracle. This category contains exclusively (single-copy) ``quantum output'' primitives. Examples of this category are EFI pairs~\cite{ITCS:BraCanQia23}\footnote{An EFI pair is a two efficiently generatable quantum states that are statistically far but computationally indistinguishable.}, quantum non-interactive bit commitments~\cite{AC:Yan22}, multiparty computation, and single-copy-secure PRSGs~\cite{C:MorYam22}.
    There exists an oracle relative to which these primitives are secure against one-query to an arbitrary classical oracle~\cite{STOC:LomMaWri24}. Nevertheless, it is possible that these primitives all can indeed be broken by (multiple queries to) a $\mathbf{PP}$ oracle, and it is just that no attack has been discovered.  There are also some black-box separations between some primitives in NanoCrypt and those in CountCrypt. In particular, there is an oracle relative to which single-copy-secure PRSGs (and therefore EFI pairs) exist, but (multi-copy secure) PRSGs do not~\cite{chen2024powersinglehaarrandom,EPRINT:AnaGulLin24b}.
\end{enumerate} 
Note that each of these categories also has an associated minimal primitive. Nearly every primitive in QuantuMania implies the existence of efficiently verifiable one-way puzzles (EV-OWPuzzs)~\cite{C:ChuGolGra24}.\footnote{An EV-OWPuzz is a variant of OWPuzzs where $\mathsf{Ver}$ is quantum polynomial-time (QPT).} Like-wise, (inefficiently verifiable) one-way puzzles (OWPuzzs) appear to be minimal for CountCrypt~\cite{STOC:KhuTom24,STOC:KhuTom25}, while EFI pairs seem minimal for NanoCrypt~\cite{TQC:MorYam24,ITCS:BraCanQia23}. From this perspective, we can define QuantuMania as the class of primitives which can be used to build EV-OWPuzzs. We can define CountCrypt as the class of primitives which can be used to build OWPuzzs but not EV-OWPuzzs. And we can define NanoCrypt as the class of primitives which can be used to build EFI pairs but not OWPuzzs. For a figure illustrating the relationships between primitives in these three classes, see~\Cref{fig:graph}.

\usetikzlibrary{positioning} 
\usetikzlibrary{calc} 
\usetikzlibrary {quotes}
\tikzset{>=latex} 

\tikzstyle{mysmallarrow}=[->,black,line width=1.6]
\tikzstyle{mydbarrow}=[<->,black,line width=1.6]
\begin{figure}
\begin{center}
    \begin{tikzpicture}[scale=0.9,every edge quotes/.style = {font=\footnotesize,fill=white}]
      \def\h{-2.0} 
      \def\w{2.6} 

        \node[] (unc) at (3.5*\w,0.3*\h) {PKE with certified deletion};
        \node[] (qcpke) at (2.5*\w,0.8*\h) {QCCC PKE};
        \node[] (pqOWF) at (1.5*\w,\h) {pqOWF};
        \node[] (evowp) at (3*\w,2*\h) {EVOWPuzz};

        \node[] (bigquant) at (-0.5*\w,\h) {\textbf{QuantuMania}};
        \node[] (bigquant2) at (-0.5*\w,1.2*\h) {$\mathbf{BQP} \neq \mathbf{QCMA}$};
    
        \node[] (prs) at (0.6*\w,3.1*\h) {PRSG};
        \node[] (owsg) at (0.8*\w,4.1*\h) {OWSG};
        \node[] (qmon) at (4.2*\w,3.6*\h) {Quantum Lightning};
        \node[] (qcke) at (1.3*\w,3*\h) {QCCC KE};
        \node[] (qccom) at (3.6*\w,3*\h) {QCCC Commitments};
        \node[] (2qkd) at (2.5*\w, 3.5*\h) {2QKD};
        \node[] (owp) at (3*\w,5*\h) {OWPuzz};
        
        \node[] (count) at (-0.5*\w,3.5*\h) {\textbf{CountCrypt}};
        \node[] (count2) at (-0.5*\w,3.7*\h) {$\mathbf{BQP} \neq \mathbf{PP}$};

        \node[] (1prs) at (\w,6*\h) {1-PRSG};
        \node[] (com) at (4*\w,7*\h) {Bit commitments};
        \node[] (MPC) at (1*\w,7*\h) {MPC};
        \node[] (EFI) at (3*\w, 7*\h) {EFI};
        
        \node[] (nano) at (-0.5*\w,6.5*\h) {\textbf{NanoCrypt}};
        

        \draw[mysmallarrow=black] (unc) edge["\tiny{\cite{TCC:KitNisYam23}}"] (evowp);
        \draw[mysmallarrow=black] (pqOWF) edge["\tiny{\cite{C:ChuGolGra24}}"] (evowp);
        \draw[mysmallarrow=black] (qcpke) edge["\tiny{\cite{C:ChuGolGra24}}"] (evowp);
        \draw[] (-\w,2.5*\h) -- (4.5*\w,2.5*\h);

        \draw[mysmallarrow,dashed,color=red][bend right=10] (pqOWF) edge["\tiny{\cite{C:ACCFLM22}}"] (qcke);
        \draw[mysmallarrow][bend right=40] (pqOWF) edge["\tiny{\cite{C:JiLiuSon18,TCC:BraShm19}}"] (prs);
        \draw[mysmallarrow] (pqOWF) edge["\tiny{\cite{C:KMNY24,C:MalWal24}}",pos=0.8] (2qkd);
        \draw[mysmallarrow,dashed,color=blue] (qcke) edge (evowp);
        \draw[mysmallarrow,dashed,color=red][bend left=15] (prs) edge["\tiny{\cite{C:ChuGolGra24}}",pos=0.55] (evowp);
        \draw[mysmallarrow,dashed,color=blue] (qccom) edge (evowp);
        \draw[mysmallarrow,dashed,color=blue][bend right=85] (qmon) edge (evowp);
        \draw[mysmallarrow,dashed,color=blue][bend right=15] (2qkd) edge (evowp);
        \draw[mysmallarrow,color=blue] (2qkd) edge (owp);
    
        \draw[mysmallarrow=black] (prs) edge["\tiny{\cite{C:MorYam22,TQC:MorYam24,cavalar2023computational}}"] (owsg);
        \draw[mysmallarrow=black][bend right=10] (owsg) edge["\tiny{\cite{STOC:KhuTom24}}",pos=0.35] (owp);
        \draw[mysmallarrow,dashed,color=red][bend right=15] (owp) edge["\tiny{\cite{microsep,bostanci2024oracle}}",pos=0.56] (owsg);
        \draw[mysmallarrow=black,color=blue][bend left=15] (qcke) edge (owp);
        \draw[mysmallarrow=black,color=blue] (qccom) edge (owp);
        \draw[mysmallarrow=black] (qmon) edge["\tiny{\cite{STOC:KhuTom25}}"] (owp);
        \draw[] (-\w,5.5*\h) -- (4.5*\w,5.5*\h);

        \draw[mysmallarrow=black][bend right=30] (prs) edge (1prs);
        \draw[mysmallarrow,dashed,color=red][bend left=70] (1prs) edge["\tiny{\cite{chen2024powersinglehaarrandom,EPRINT:AnaGulLin24b}}"] (prs);
        \draw[mysmallarrow,dashed,color=red][bend right=10] (1prs) edge["\tiny{\cite{EPRINT:AnaGulLin24b}}",pos=0.3] (qcke);
        \draw[mysmallarrow,dashed,color=red] (1prs) edge["\tiny{\cite{EPRINT:AnaGulLin24b}}",pos=0.3] (qccom);
        \draw[mysmallarrow] (owp) edge["\tiny{\cite{STOC:KhuTom24}}"] (EFI);
        \draw[mysmallarrow,dashed,color=red] (1prs) edge["\tiny{\cite{bostanci2024oracle}}",pos=0.65] (owsg);

        \draw[mysmallarrow] (1prs) edge (EFI);
        \draw[mydbarrow] (MPC) edge["\tiny{\cite{ITCS:BraCanQia23}}"] (EFI);
        \draw[mydbarrow] (com) edge (EFI);
    \end{tikzpicture}
\end{center}
\caption{A graph of some known implications between primitives in QuantuMania, CountCrypt, and NanoCrypt. Dashed lines represent black-box separations. Blue lines are new in our work.
pqOWF means post-quantum (i.e., quantumly-secure) OWFs. OWSG refer to pure state one-way state generators~\cite{C:MorYam22}.
}
\label{fig:graph}
\end{figure}
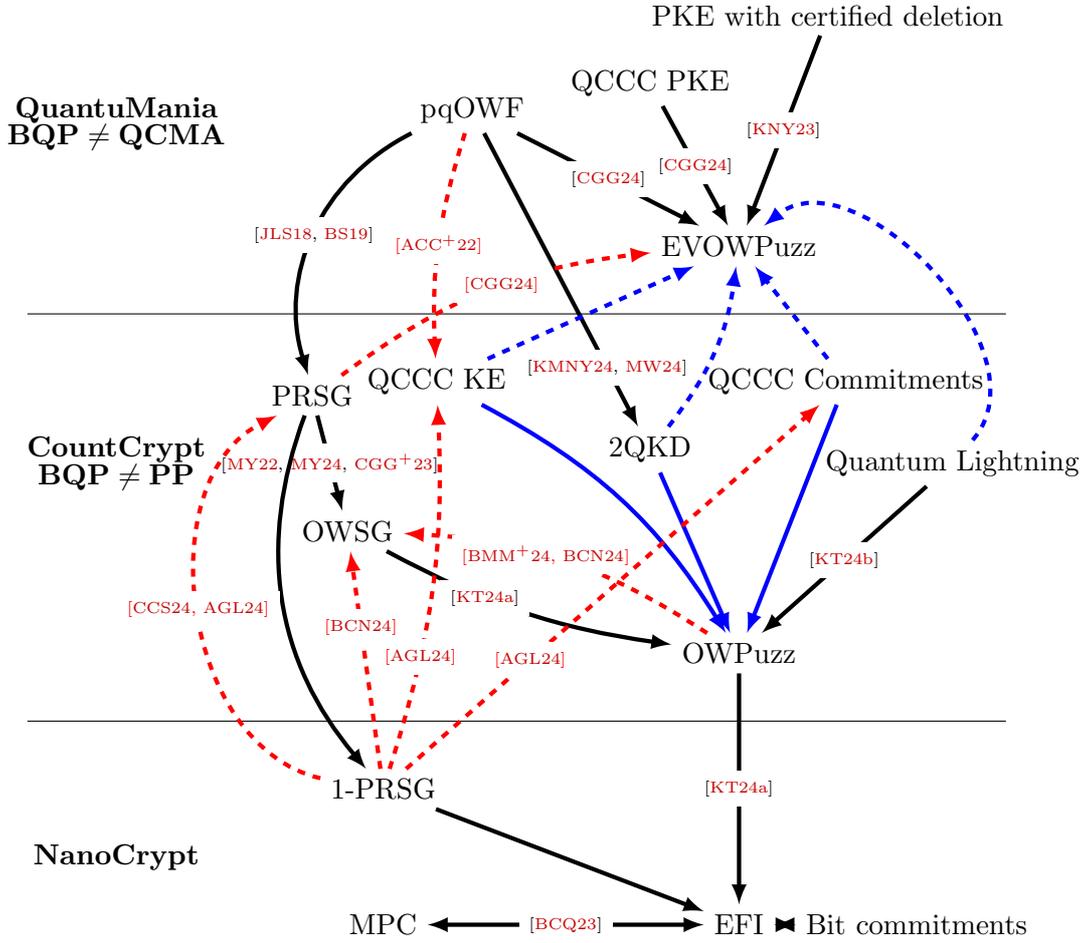

While a great amount of progress has been made in recent years, there are a few primitives which we still do not know how to place within these broad categories.

\paragraph*{Interactive QCCC primitives.} One interesting class of quantum primitives are those where communication is required to be done over classical channels, namely, QCCC primitives. This class contains variants of most classical cryptographic primitives, where the only difference is that we allow the protocol itself to perform quantum operations. Examples include classical-communication key exchange (KE), classical-communication digital signatures, and EV-OWPuzz. 
\cite{STOC:KhuTom24,C:ChuGolGra24} showed that many QCCC primitives can be used to build EV-OWPuzzs. 
In particular, QCCC versions of encryption, signatures, and pseudorandomness all imply EV-OWPuzzs and so are broken if $\mathbf{BQP} = \mathbf{QCMA}$.
\mor{If the secret key is quantum, they exist, so we should say that all keys are classical.}

However, there are a few notable exceptions to this result. In particular, it is not clear how to build EV-OWPuzzs from QCCC primitives whose security game includes multiple rounds of communication. Key examples of such primitives, which we consider in this work, are QCCC KE and QCCC commitments. 
It is easy to see that QCCC KE and QCCC commitments can be broken with a $\mathbf{PP}$ oracle. Thus, QCCC KE and QCCC commitments should lie in either QuantuMania or CountCrypt. If we look from the perspective of minimal primitives, even less is known. It is unknown whether or not QCCC KE and QCCC commitments can be used to build any of EV-OWPuzzs, OWPuzzs, or EFI pairs\footnote{Note that~\cite{STOC:KhuTom24} constructs OWPuzzs from QCCC commitments with a {\it non-interactive} opening phase. However, this is not enough for our purposes. In particular, our construction of QCCC commitments in the oracle setting has an {\it interactive} opening phase.}.

\paragraph*{2-round quantum key distribution.} One of the most surprising initial results in quantum cryptography is that, with quantum communication, KE can exist \textit{unconditionally}~\cite{Wie83,BB84}\footnote{Assuming the existence of classical authenticated channel.}. 
Such protocols, termed quantum key distribution or QKD, are secure when performed over an authenticated classical channel and an \textit{arbitrary, corrupted} quantum channel.
Formally, a QKD protocol is a protocol where Alice and Bob agree on a shared key over a public authenticated classical channel and a public arbitrary quantum channel. In particular, a good QKD protocol should satisfy three properties. 
\begin{enumerate}
    \item (Correctness): If the protocol is executed honestly, then Alice and Bob output the same key.
    \item (Validity): An adversarial party Eve, who controls the quantum channel, cannot make Alice and Bob output different keys without 
    either of them noticing and outputting $\bot$.
    \item (Security): If either Alice or Bob does not output $\bot$, then Eve can not learn their final key.
\end{enumerate}
A two-round QKD protocol, or 2QKD is simply a QKD protocol where Alice and Bob each send at most one message. It is known that 2QKD protocols can be built from post-quantum OWFs~\cite{C:KMNY24,C:MalWal24}.


Interestingly, QKD can be performed in three rounds and still maintain information-theoretic security~\cite{C:GarYueZha17}\footnote{Note that in this protocol, only one party learns if the execution succeeds.}. On the other hand, it is known that 2-round QKD (2QKD) is impossible information-theoretically~\cite{C:MalWal24}. However, there is no other known lower bound on the computational hardness of 2QKD. The best known upper bound is that 2QKD can be constructed from 
OWFs~\cite{C:KMNY24,C:MalWal24}. It is entirely an open question whether 2QKD lies in QuantuMania, CountCrypt, or NanoCrypt.


\paragraph*{Quantum money/quantum lightning.} Quantum money~\cite{Wie83,C:JiLiuSon18,STOC:AarChr12} is one of the most fundamentally ``quantum'' cryptographic primitives. Quantum money essentially models digital cash. In a quantum money scheme, a bank has the ability to mint quantum states which represent some currency. Money states should be verifiable: either the bank or other users should be able to tell if a money state is valid. Money states should also be uncloneable: it should not be possible for users to take several copies of a money state and turn it into more copies. Note that quantum money is fundamentally impossible if we require that states are classical. This is because it is easy to copy any classical string.

For the purposes of this work, we will consider a stronger variant of quantum money, known as quantum lightning~\cite{JC:Zhandry21}. A quantum lightning scheme consists of a minter and a verifier. The minter produces a money state along with a corresponding serial number. The verifier, given a money state and a serial number, checks if the serial number matches. Security says that no adversary can construct two money states which verify under the same serial number. This is a stronger definition than standard quantum money, since adversary can 
choose which state they are trying to clone.

It is again easy to see that quantum lightning is broken if $\mathbf{BQP} = \mathbf{PP}$, since given the verification algorithm a QPT algorithm querying the
$\mathbf{PP}$ oracle can synthesize a valid state~\cite{STOC:KhuTom25,cavalar2023computational,CCC:INNRY22}. Furthermore, recent work has shown that OWPuzzs can be built from quantum lightning~\cite{STOC:KhuTom25}, but given a quantum lightning scheme, it is not known how to build an EV-OWPuzz. Thus, it seems that quantum lightning lies either in CountCrypt or QuantuMania, but it is not clear which is the case. Note that the same reasoning applies to quantum money schemes as well.

\subsection{Our Results} 
The major question of our work is the following.
\begin{center}
    \textit{Do these primitives (QCCC KE, QCCC commitments, 2QKD, quantum lightning) lie in CountCrypt or QuantuMania?}
\end{center}
More specifically,
\begin{center}
    \textit{Do these primitives exist even if $\mathbf{BQP}=\mathbf{QMA}$ or $\mathbf{BQP}=\mathbf{QCMA}$?}
    Do these primitives imply EV-OWPuzzs or OWPuzzs?
\end{center}

In this paper, we show the following results.
\begin{enumerate}
    \item There exists a unitary oracle relative to which $\mathbf{BQP}=\mathbf{QCMA}$, but 
    QCCC commitments (with a single-message commit phase and a two-message reveal phase) and multiparty QCCC non-interactive KE (NIKE) (and therefore 2QKD as well) exist.
    \item  There exists a unitary oracle relative to which $\mathbf{BQP}=\mathbf{QMA}$, but quantum lightning exists.
    \item If any of (poly-round) QCCC KE, QCCC commitments, or 2QKD exist, then OWPuzzs exist. (It is already known that quantum lightning can be used to build 
    OWPuzzs~\cite{STOC:KhuTom24}. A concurrent work also showed that 2QKD implies OWPuzz through similar techniques~\cite{QRZ25}.)
\end{enumerate}
Together, these results imply that quantum lightning, quantum money, QCCC (NI)KE and QCCC commitments, as well as 2QKD are all CountCrypt primitives.
\subsection{Technical Overview}
Here we provide a high-level overview of our proofs.

\paragraph*{Oracle relative to which QCCC NIKE exist and $\mathbf{BQP}=\mathbf{QCMA}$.}
The key idea behind this oracle is that it functions as an idealized NIKE, but where the internal state is forced to be quantum. As a toy example, consider the simple idealized NIKE protocol defined as follows.\footnote{Although we can actually construct multiparty NIKE, here for simplicity we explain the two-party case. 
The multiparty case is similar.}
\begin{enumerate}
    \item Let $\RO:\{0,1\}^\secpa\times \{0,1\}^\secpa\to \{0,1\}^\secpa$ be a random function accessible only to Alice and Bob.
    \item Alice chooses $x$ and sends $x$ to Bob.
    \item Bob chooses $y$ and sends $y$ to Alice.
    \item Alice and Bob both output $\RO(x,y)$.
\end{enumerate}
Note that this protocol is secure against any adversary who does not have access to $\RO$. This is because just given $x$ and $y$, the adversary has no way to compute $\RO(x,y)$.

In order to define an oracle relative to which this NIKE protocol is secure, we need an oracle which allows Alice and Bob to query $\RO(x,y)$, while other parties cannot. To do this, we associate to each $x$ and $y$ a quantum key $\ket{\phi_x}$ and $\ket{\phi_y}$. We then guarantee that the only way for a party to compute $\ket{\phi_x}$ is if it chose $x$ itself. We mandate that in order to query $\RO(x,y)$, a party must have access to either $\ket{\phi_x}$ or $\ket{\phi_y}$.

As a first attempt, we will define a CPTP oracle (that is, a quantum oracle which is allowed to make measurements) as follows. 
Our oracle will be defined respective to a Haar random unitary $\U$ and a random function $\RO:\{0,1\}^\secpa \times \{0,1\}^\secpa \to \{0,1\}^n$.
Our oracle consists of the following three oracles, $SG^{CPTP}$, $Mix^{CPTP}$, and $PSPACE$:\footnote{Precisely speaking, we define 
sequences $\{SG^{CPTP}_n\}_{n\in\mathbb{N}}$ and $\{Mix^{CPTP}_n\}_{n\in\mathbb{N}}$ of oracles defined for each $n\in\mathbb{N}$, but for simplicity, here we ignore that point.}
\begin{enumerate}
    \item $SG^{CPTP}$: samples a bit string $x$ uniformly at random and outputs $(x,\ket{\phi_x})$. Here $\ket{\phi_x}\coloneqq \U|x\rangle$.
    \item $Mix^{CPTP}(x,y,\ket{\psi})$: applies $\U^{\dagger}$ on $\ket{\psi}$, measures the state in the computational basis, and outputs $\RO(x,y)$ if and only if the measurement result is $x$ or $y$.
    \item $PSPACE$: an oracle for a $\mathbf{PSPACE}$-complete problem.\footnote{We use the normal font for the oracle, and the bold font for the complexity class.}
\end{enumerate}

This leads to the following natural NIKE protocol.
\begin{enumerate}
    \item Alice runs $SG^{CPTP}\to (x,\ket{\phi_x})$ and sends $x$ to Bob.
    \item Bob runs $SG^{CPTP}\to (y,\ket{\phi_y})$ and sends $y$ to Alice.
    \item Alice outputs $Mix^{CPTP}(x,y,\ket{\phi_x}) \to \RO(x,y)$.
    \item Bob outputs $Mix^{CPTP}(x,y,\ket{\phi_y}) \to \RO(x,y)$.
\end{enumerate}

Security follows from the fact that given $x$ and $y$, it is impossible for an eavesdropper to compute $\ket{\phi_x}$ or $\ket{\phi_y}$, since with high probability queries to $SG^{CPTP}$ will never return these values. However, without $\ket{\phi_x}$ or $\ket{\phi_y}$, it is impossible to query $\RO(x,y)$, and therefore the agreed key appears uniformly random. Note that since this security holds information-theoretically, it also holds against adversaries with access to the $PSPACE$ oracle.

\paragraph{Making the oracle unitary.} While the oracles $SG^{CPTP},Mix^{CPTP},PSPACE$ do indeed give an oracle where QCCC cryptography exists but $\mathbf{BQP}=\mathbf{QCMA}$, the fact that the oracles are CPTP is a major limitation. In particular, CPTP oracle separations only rule out black-box reductions which do not purify the adversary or query its inverse. These are very common techniques in the quantum setting. Thus, the quantum cryptography community has a strong preference for unitary (or even better, classical) oracle separations, and much work has been done to avoid weaker models~\cite{goldin2025translating,microsep,bostanci2024oracle}.
\mor{It would be better to explain why a CPTP oracle is not sufficient and we want to improve it to a unitary oracle.}

Our first observation is that the oracles $SG^{CPTP}$ and $Mix^{CPTP}$ have natural unitary ``purifications''.

\begin{enumerate}
    \item $SG$: the unitary which swaps the state $\ket{0}$ with the state $|\psi\rangle\coloneqq\sum_k\ket{k}\ket{\phi_k}$.
    More formally, it is the unitary, $|\psi\rangle\langle0|+|0\rangle\langle\psi|+(I-|0\rangle\langle0|-|\psi\rangle\langle\psi|)$.\footnote{For the notational simplicity, for an $n$, we often write $|0^n\rangle$ just as $|0\rangle$.}
    \item $Mix$: the unitary\footnote{$X$ is the Pauli $X$ operator. For an $n$-bit string $x$, $X^x\coloneqq \bigotimes_{i=1}^n X^{x_i}$.}
        \begin{align}
    &\sum_{x,y}|x\rangle\langle x|\otimes|y\rangle\langle y|\otimes(|\phi_x\rangle\langle\phi_x|+|\phi_y\rangle\langle\phi_y|)\otimes X^{\RO(x,y)}\otimes X    \\
    &+\sum_{x,y}|x\rangle\langle x|\otimes|y\rangle\langle y|\otimes(I-|\phi_x\rangle\langle\phi_x|-|\phi_y\rangle\langle\phi_y|)\otimes I\otimes I    
    \end{align}
   acting on five registers.
    Intuitively, it works as follows.
   If the third register is $|\phi_x\rangle$ or $|\phi_y\rangle$ when the first register is $x$ and the second register is $y$,
   then it adds $\RO(x,y)$ to the fourth register, and flips the bit of the fifth register: 
     \begin{align}
    \ket{x}\ket{y}\ket{\phi_x}\ket{z}\ket{b}&\mapsto \ket{x}\ket{y}\ket{\phi_x}\ket{z\xor \RO(x,y)}\ket{\ol{b}},\\
    \ket{x}\ket{y}\ket{\phi_y}\ket{z}\ket{b}&\mapsto \ket{x}\ket{y}\ket{\phi_y}\ket{z\xor \RO(x,y)}\ket{\ol{b}},
    \end{align}
   Otherwise, it acts as the identity.
\end{enumerate}

We will show the following.

\begin{theorem}[\Cref{thm:keexists} restated]
\label{thm:MQNIKE}
QCCC NIKE exists relative to $(SG,Mix,PSPACE)$.
\end{theorem}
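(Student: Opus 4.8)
The plan is to check correctness by inspection and then to prove that the shared bit $F(x,y)$ is unpredictable from the classical transcript $(x,y)$, even against an \emph{unbounded} eavesdropper that makes only $q=\poly(\secpa)$ queries to $SG$ and $Mix$. This suffices, since a polynomial-time eavesdropper with the $PSPACE$ oracle is a special case of such an adversary ($PSPACE$ depends on neither $\U$ nor $F$ and can be computed by an unbounded machine). Extending the one-bit key to one of arbitrary polynomial length is then the usual parallel repetition, so I concentrate on the one-bit version.

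\textbf{Correctness and syntax.} Since $\ket{\phi_x}=\U\ket x$, the call $Mix(x,y,\ket{\phi_x})$ applies $\U^\dagger$, measures the outcome $x\in\{x,y\}$, and hence deterministically returns $F(x,y)$; symmetrically $Mix(x,y,\ket{\phi_y})=F(x,y)$. Thus Alice and Bob always agree on $F(x,y)$ (even in the event $x=y$), the communication is classical and the local computation is quantum, so the protocol is a correct QCCC key exchange, and over the random choice of $F$ the agreed bit is uniform.

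\textbf{Reducing security to a search problem.} Fix a $q$-query eavesdropper $\mathcal E(x,y)$. Each $SG$ call returns a fresh uniform string, so the event $E$ that no $SG$-output of $\mathcal E$ equals $x$ or $y$ holds except with probability $2q/2^\secpa$. Next, $F(x,y)$ enters the experiment only through $Mix$ queries of the form $(x,y,\cdot)$, and even then only on a query whose internal measurement outcome $z$ lands in $\{x,y\}$; on every other query $Mix$ returns a fixed symbol $\bot$ (or $F$ at a pair $\neq(x,y)$), with a distribution independent of $F(x,y)$. Letting $B$ be the event that some $(x,y,\cdot)$-query of $\mathcal E$ produces $z\in\{x,y\}$, the view of $\mathcal E$ conditioned on $E\wedge\neg B$ is independent of $F(x,y)$, and hence $\Pr[\mathcal E=F(x,y)]\le \tfrac12+\tfrac12\Pr[B]+\Pr[\neg E]$. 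It therefore remains to bound $\Pr[B]$.

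\textbf{The main obstacle: bounding $\Pr[B]$.} On a query $Mix(x,y,\ket\psi)$ the outcome lands in $\{x,y\}$ with probability exactly $\abs{\inner{\phi_x}{\psi}}^2+\abs{\inner{\phi_y}{\psi}}^2$, so causing $B$ amounts to producing a state with non-negligible overlap with one of the two hidden vectors $\ket{\phi_x},\ket{\phi_y}$. Conditioned on $E$ and on everything $\mathcal E$ has observed, $\U$ is still Haar-random on the $(2^\secpa-q)$-dimensional orthogonal complement of the basis states returned by $SG$, so $\ket{\phi_x},\ket{\phi_y}$ form a uniformly random orthonormal pair about which $\mathcal E$ has almost no information. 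To make this precise I would puncture $Mix$ on \emph{all} pairs meeting $\{x,y\}$ — replacing it by the oracle that ignores its input register and returns $\bot$ — so that in the resulting ideal experiment $\ket{\phi_x},\ket{\phi_y}$ are information-theoretically independent of $\mathcal E$'s state; there the total overlap of $\mathcal E$'s query registers with $\mathrm{span}\{\ket{\phi_x},\ket{\phi_y}\}$ over all $q$ queries is $O(q\cdot 2^{-\secpa})$, and a (semiclassical) one-way-to-hiding argument transfers this back to the real experiment, giving $\Pr[B]=\poly(q)\cdot 2^{-\Omega(\secpa)}=\negl(\secpa)$ for all but a $\negl(\secpa)$ fraction of Haar-random $\U$ (a standard Borel--Cantelli step then fixes a single oracle). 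This hybrid — in particular the claim that the $\U^\dagger$ buried inside $Mix$ gives the adversary no usable handle on $\U$, and that there is no Grover-type amplification because $Mix$ is measure-and-destroy rather than a reflection — is the step I expect to require the most care; the union bound over queries, the $\neg B$-conditioning bookkeeping, and the key-length amplification are all routine.
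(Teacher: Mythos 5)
Your correctness argument is fine, and your high-level reduction --- security boils down to bounding the probability $\Pr[B]$ that some $Mix$ query measures to $x$ or $y$ --- is conceptually the same as the paper's final hybrid ($\mathbf{G}_5$ together with its \Cref{cor:testcor}). But there is a genuine gap before you ever get to that step, and it is exactly the part the paper spends most of its effort on.

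The adversary in the security definition is \emph{non-uniform} QPT. Its polynomial-size advice string $c$ may depend arbitrarily on the fixed oracle, i.e.\ on the particular $\{U_n\}$ and $\{f_n\}$ baked into $(SG,Mix)$. For instance, $c$ could be a hard-coded table of $f_n(a,b)$ for all pairs with $n \le 20$, or an approximate classical description of $U_n\ket{x,0^n}$ for small $n$. Your argument asserts that, conditioned on $E \wedge \neg B$, the view of $\mathcal{E}$ is ``independent of $F(x,y)$'' and that $\ket{\phi_x},\ket{\phi_y}$ are ``information-theoretically independent of $\mathcal E$'s state.'' Both claims are false once the adversary carries $(U,F)$-dependent advice, and the ``a polynomial-time eavesdropper with the $PSPACE$ oracle is a special case of an unbounded $q$-query adversary'' reduction does not rescue you, because the unbounded adversary you describe is implicitly \emph{uniform}. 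This is precisely why the paper inserts two extra hybrids before the puncturing step: (i) it uses L\'evy concentration (\Cref{thm:conc}, \Cref{lem:lipshitz}) to show that for each fixed advice string, replacing the fixed $U_n$ by a fresh Haar $\wt U_n$ changes the acceptance probability by at most $2^{-\Omega(n)}$ except with probability $\exp(-2^n)$ over $U_n$; and (ii) it proves a new presampling/concentration lemma for fixed random functions (\Cref{thm:functionconc}, \Cref{lem:bitfix}) to replace the fixed $f_n$ by a bit-fixing oracle and then by a fresh random $\wt f_n$, with doubly-exponential failure probability. Those doubly-exponential tails are not a luxury: they are what allows the union bound over all $2^{\poly(n)}$ advice strings $c$ and, via Borel--Cantelli, the fixing of a single oracle that works against every non-uniform adversary. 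Your proposal's ``$1-\negl(\secpa)$ fraction of $U$'' is far too weak to survive that union bound.

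Once the fresh-$(\wt U,\wt f)$ world has been reached, your $\Pr[B]$ step is morally the paper's \Cref{lem:testlem}/\Cref{cor:testcor}, and the paper proves it by a direct per-query hybrid argument rather than an O2H-style argument (each hybrid changes acceptance by at most $\frac{2^n-(i-1)}{2^{2n}-(i-1)}$; no semiclassical O2H machinery is used or needed since $Mix$ is a destructive measurement rather than a reflection). So the plan after the concentration step is sound in spirit but can be made elementary. In short: fill in the two concentration/presampling hybrids for $U$ and $F$ (the ``AI-ROM''-style issue is the crux), strengthen your tail bounds to doubly-exponential, and then your puncturing-and-overlap bound goes through essentially as you sketched.
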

Note that two-party QCCC NIKE is a special case of 2QKD where no quantum channel is used.
We therefore obtain the following corollary.
\begin{corollary}
    2QKD exists relative to $(SG,Mix,PSPACE)$.
\end{corollary}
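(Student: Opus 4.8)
The plan is to obtain this as an immediate corollary of the QCCC key exchange construction above (\Cref{thm:keexists}), so essentially no new technical work is needed. First I would fix a definition of two-round quantum key distribution (2QKD): Alice and Bob interact by exchanging exactly two messages, one in each direction, over an \emph{authenticated} classical channel and a \emph{fully adversarial} quantum channel, and at the end each outputs a key. Correctness demands that an honest execution makes the two outputs equal with overwhelming probability; security demands that for every adversary $\A$ controlling the quantum channel and eavesdropping on the classical channel, the joint state $(\rho_\A, k_{\text{Alice}})$ is within negligible trace distance of $\rho_\A \otimes (\text{uniform bit})$, and symmetrically for Bob.

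Second, I would observe that the QCCC KE protocol from \Cref{thm:keexists} — Alice runs $SG\to(x,\ket{\phi_x})$ and sends the classical string $x$; Bob runs $SG\to(y,\ket{\phi_y})$ and sends the classical string $y$; Alice outputs $Mix(x,y,\ket{\phi_x})=F(x,y)$ and Bob outputs $Mix(x,y,\ket{\phi_y})=F(x,y)$ — is, read verbatim, a 2QKD protocol that simply never transmits anything on the quantum channel. It uses exactly two rounds of classical communication, which is what ``two-round'' means here.

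Third, I would check that it satisfies both properties relative to $(SG,Mix,PSPACE)$. Correctness is immediate: in an honest run Alice holds $\ket{\phi_x}$ and Bob holds $\ket{\phi_y}$, so $Mix$ returns $F(x,y)$ to both, and they agree with certainty. For security, the key point is that any 2QKD adversary has no quantum channel to tamper with, so its power collapses to that of a passive eavesdropper reading the classical transcript $(x,y)$ — precisely the attack already handled in the proof of \Cref{thm:keexists}. Moreover that proof establishes security information-theoretically (since queries to $SG$ hit neither $x$ nor $y$ except with negligible probability, $F(x,y)$ is never queried and the agreed key is uniform from the eavesdropper's view, even given the $PSPACE$ oracle), so it covers arbitrary, even unbounded, 2QKD adversaries and yields the required negligible trace-distance bound.

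The only place requiring any care — not really an obstacle — is bookkeeping around the precise 2QKD definition one adopts: whether it permits aborts, whether it asks for one-sided or two-sided agreement, and whether security is phrased as a trace-distance bound on the joint adversary–key state or as a distinguishing advantage. In all standard formulations the statement of \Cref{thm:keexists} is already strictly stronger than what is needed, because ``classical communication only'' makes the quantum-channel adversary vacuous and information-theoretic eavesdropper security dominates the (computational) guarantee 2QKD asks for; so the corollary follows simply by plugging in.
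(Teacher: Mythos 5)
Your proposal matches the paper's proof exactly: the paper's Theorem~\ref{thm:2qkdexists} is established by the same one-line observation that a two-round QCCC KE protocol is, verbatim, a 2QKD protocol that never uses the quantum channel, so an adversary's quantum-tampering power is vacuous and the information-theoretic eavesdropper security from \Cref{thm:keexists} carries over. The minor variation in how you phrase 2QKD security (trace distance from uniform vs.\ the paper's key-guessing game) is immaterial, as you correctly note the underlying guarantee from \Cref{thm:keexists} is strong enough for any standard formulation.
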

\begin{remark}
    Note that by allowing $Mix$ to take in an arbitrary number of inputs, we can define new oracles such that multi-party NIKE exists but $\mathbf{BQP}=\mathbf{QCMA}$. We will omit this argument, since it follows analogously to~\Cref{thm:MQNIKE}.
\end{remark}

The key idea to show \Cref{thm:MQNIKE} is that we can simulate $SG$ and $Mix$ by the following CPTP oracles:
\begin{enumerate}
    \item On initialization, sample a fresh random unitary $\wt{U}$. Sample strings $k_1,\dots,k_t$.
    \item $\SGSamp$: on the $i$th query, output 
    $|k_{i}\rangle\otimes(\wt{U}|k_{i}\rangle)$.
    \item $\VerSamp$: apply the unitary
    \begin{align}
    \sum_{i\in[t]} |k_i\rangle\langle k_i|\otimes \wt{U}|k_i\rangle\langle k_i|\wt{U}^\dagger \otimes X   
    +\left(I-\sum_{i\in[t]} |k_i\rangle\langle k_i|\otimes \wt{U}|k_i\rangle\langle k_i|\wt{U}^\dagger\right) \otimes I,
    \end{align}
    which, intuitively, works as follows: If the second register is
    $\wt{U}|k_i\rangle$ when the first register is $k_i$, then it flips the bit of the third register
    as
    \begin{align}
    \ket{k_i}\otimes \wt{U}|k_i\rangle \otimes \ket{b} \mapsto \ket{k_i}\otimes \wt{U}|k_i\rangle \otimes \ket{\ol{b}}.
    \end{align}
    Otherwise, it acts as the identity.
\end{enumerate}
\if0
    \mor{Below, you define oracles for each $n$. However, in the all above arguments, you omitted $n$ for simplicity.
    I suggest to do the same thing here as well.}
\begin{enumerate}
    \item On initialization, sample a fresh random family $\{\wt{U}_n\}$\mor{$\wt{U}$}. Sample strings $k_1,\dots,k_t$.
    \item $\SGSamp{_{n}}$\mor{$\SGSamp$}: on the $i$th query, output 
    $\ket{k_{i}} \wt{U}_n\ket{k_{i}}$.
    \mor{$|k_{i}\rangle\otimes(\wt{U}|k_{i}\rangle)$}
    \item $\VerSamp{_{n}}$\mor{$\VerSamp$}: apply the map which, for $i\in [t]$,
    \begin{align}
    \ket{k_i}\otimes \wt{U}_n\ket{k_i} \otimes \ket{b} \mapsto \ket{k_i}\otimes \wt{U}_n\ket{k_i} \otimes \ket{\ol{b}}
    \end{align}
    \mor{$\ket{k_i}\otimes \wt{U}|k_i\rangle \otimes \ket{b} \mapsto \ket{k_i}\otimes \wt{U}|k_i\rangle \otimes \ket{\ol{b}}$}
    and acts as identity on all other \mor{orthogonal} states.
\end{enumerate}
\fi
That is, there exists a simulator $(\SimSG,\SimMix)$ which is indistinguishable from $(SG,Mix)$ under bounded query access, even against adversaries with non-uniform advice that can depend on $U$ and $\RO$.

In particular, to analyze an adversary for our NIKE protocol, we can replace all instances of $(SG,Mix)$ in the security game with $(\SimSG,\SimMix)$. Now, a similar argument to before can be used to show security.

\paragraph{Constructing the simulator.}
Our goal is to show the following theorem.
\begin{theorem}\label{thm:keylemmatechoverview}(Lemma \ref{keylem}, restated)
    With high probability over the oracles $U$ and $\RO$, there exists an efficient simulator such that for all oracle algorithms $\A$ 
    and poly-length advice strings $c$, 
    \begin{align}
    \abs{\Pr[\A^{SG,Mix,PSPACE}(1^\secpa,c)\to 1] - \Pr[\A^{\SimSG,\SimMix,PSPACE}(1^\secpa,c)\to 1]} \leq \negl(n).
    \end{align}
\end{theorem}

To show the theorem, we
first take inspiration from the super-concentration bound for Haar random unitaries from~\cite{TQC:Kre21}, which observes that with all but doubly exponential probability over $\U$, any efficient algorithm querying $\U$ acts identically if the algorithm instead samples $\U$ itself. Thus, by a union bound over all non-uniform advice strings, $\A^{SG,Mix,PSPACE}(1^\secpa,c)$ acts the same on all advice when we replace $\U$ with a freshly sampled Haar random unitary $\wt{\U}$. 

Next, it is known from~\cite{ITCS:Zhandry24b,microsep,goldin2025translating} that for a Haar random state $\ket{\phi}$, the map
\begin{align}
\ket{0}\mapsto \ket{\phi}
\end{align}
can be simulated using many copies of $\ket{\phi}$.
A straightforward generalization of this approach then shows that the map
\begin{align}
\ket{0}\mapsto \sum_{k}\ket{k}\ket{\phi_k}
\end{align}
can be simulated using many samples of $(k,\ket{\phi_k})$. In particular, $SG$ utilizing a fresh random unitary $\wt{U}$ can be simulated using $\SGSamp$.

Next we observe that since $\A^{\SimSG,Mix,PSPACE}(1^\secpa,c)$ can never receive $\ket{\phi_k}$ for a $k$ not sampled by $\SimSG$, we can enforce $Mix$ to always ignore inputs not produced by $\SimSG$.

The final observation we make is that we can replace $\RO(k_1,k_2)$ for $k_1,k_2$ sampled by $\SGSamp$ with random values. The key intuition is as follows: \textit{if you can only query a fixed random function at random values, then that fixed random function appears like a uniform random function}. We call this a concentration bound for random functions on random inputs.
We will explain more details about it in the next paragraph.

Hence we can simulate $Mix$ by using $\VerSamp$ to coherently check whether the inputs are $(x,y,\ket{\phi_x})$ or $(x,y,\ket{\phi_y})$ for some $x,y$ sampled by $\SGSamp$, and then if so outputting some fixed $z_{x,y}$ sampled uniformly at random during initialization.

Putting these together, we construct the simulator $(\SimSG,\SimMix)$. For full details on the simulator, see~\Cref{sec:unisep}.

\paragraph*{Concentration bound for random functions.} 
Let us explain here more details about the concentration bound for random functions.
That is, with high probability over the choice of some fixed random oracle $\RO$, $\RO$ will be indistinguishable under random query access from a freshly sampled random oracle $\wt{\RO}$. The intuition here stems from the similarity between a fixed random oracle and the auxiliary-input random oracle model (AI-ROM)~\cite{C:Unruh07}. 

The AI-ROM is a variant of the random oracle model which allows adversaries to have non-uniform advice \textit{depending on the chosen random oracle}. One of the most useful ways to prove things in the AI-ROM is the so-called presampling technique~\cite{C:Unruh07,EC:CDGS18,TCC:GLLZ21}. A bit-fixing, or presampling, oracle is a random function where its evaluation on polynomially many fixed points is chosen adversarially. The presampling technique states that any protocol secure with bit-fixing oracles is also secure in the AI-ROM. 

When adversaries have quantum query access to the oracle, presampling becomes a bit more complicated.~\cite{TCC:GLLZ21} shows that in the quantum setting, any protocol secure with a so-called ``postselected'' oracle is also secure in the AI-ROM. In particular, a ``postselected'' oracle is an oracle which has been sampled from the conditional distribution satisfying some efficient quantum predicate.

Intuitively, the only way to distinguish a fixed random function $\RO$ from a fresh $\wt{\RO}$ is by hardcoding some information about $\RO$ into the distinguisher. This is morally equivalent to giving the distinguisher non-uniform advice depending on $\RO$, i.e. the AI-ROM. In fact, a straightforward argument shows that presampling/postselecting lemmas also hold for fixed random functions under classical/quantum query access respectively. See~\Cref{cor:fixedpresampling} for details.

It thus remains to show that for a postselected random oracle, queries to random points look random. This follows from a straightforward application of the compressed oracle framework~\cite{C:Zhandry19}. In this framework, a random oracle is instead represented by a database containing a superposition of lists of queried points. In particular, after postselection the database can only contain at most a polynomial number of points, so any point chosen at random will necessarily be absent from the database. Queries to points outside of the database will always return uniform random values, and therefore queries to random points on a postselected oracle will return uniform random values.

\paragraph*{$\mathbf{BQP}=\mathbf{QCMA}$.}
Since we know that NIKE exists relative to $(SG,Mix,PSPACE)$, it remains to show that 
\begin{theorem}[\Cref{thm:bqpqcma} restated]
\begin{align}
    \mathbf{BQP}^{SG,Mix,PSPACE}=\mathbf{QCMA}^{SG,Mix,PSPACE}.
\end{align}
\end{theorem}

In particular, we show that for any input $x$ and oracle verifier $\Ver^{SG,Mix,PSPACE}$ for a language in $\mathbf{QCMA}^{SG,Mix,PSPACE}$, there exists a verifier $\wt{\Ver}$ such that for all $x,w$,
\begin{align}
\Pr[\wt{\Ver}^{PSPACE}(x,w) \to 1] \approx \Pr[\Ver^{SG,Mix,PSPACE}(x,w)\to1].
\end{align}
Thus, given $x$ and $\Ver^{SG,Mix,PSPACE}$, we can use the $PSPACE$ oracle to check if $\wt{\Ver}^{PSPACE}$ has a witness $w$ accepting $x$. Since $\wt{\Ver}^{PSPACE}$ acts similarly to $\Ver^{SG,Mix,PSPACE}$, this also reveals whether $\Ver^{SG,Mix,PSPACE}$ has a witness accepting $x$. In other words, we can use the $PSPACE$ oracle to decide languages in $\mathbf{QCMA}^{SG,Mix,PSPACE}$.

The proof of this result is a straightforward application of~\Cref{thm:keylemmatechoverview}. However, some care must be taken to simulate $SG,Mix$ on small domains. Following the approach of~\cite{TQC:Kre21}, for $\ell = O(\log n)$ we construct simulators to emulate $SG,Mix$ efficiently by applying a learning algorithm to learn their behavior up to minimal error.

\paragraph*{Existence of QCCC commitments.}
We can observe that under the same oracle, we can also build QCCC commitments. Our protocol goes as follows.
\begin{enumerate}
    \item The committer queries $|0\rangle$ to the unitary oracle $SG$ to generate
    $SG\ket{0}=\sum_k|k\rangle|\phi_k\rangle$ and measures the first register to obtain the pair $(x,\ket{\phi_x})$ of the measurement result $x$ and
    the post-measurement state $|\phi_x\rangle$. It repeats this process until the first bit of $x$ is the bit that it is trying to commit to. The committer then sends all but the first bit of $x$ to the receiver, which we will denote $x_{>1}$.
    \item To open to a bit $b$, the receiver queries $|0\rangle$ to $SG$ to generate $SG|0\rangle=\sum_k|k\rangle|\phi_k\rangle$,
    measures the first register, obtains $(y,\ket{\phi_y})$, and sends $y$ to the committer.\footnote{If we wish, this step can be done in the commit phase instead of the opening phase. 
    One advantage of doing it in the opening phase is that the receiver does not need to store state long term and the commit phase becomes non-interactive.}
    \item Upon receiving $y$, the committer queries 
    $\ket{x}\ket{y}\ket{\phi_x}\ket{0}\ket{0}$ 
    to the unitary oracle $Mix$ to generate
    $Mix\ket{x}\ket{y}\ket{\phi_x}\ket{0}\ket{0}$, 
    measures the fourth register to get the result $c$, and sends $c$ to the receiver.
    \item The receiver sets $x = (b, x_{>1})$, queries 
    $\ket{x}\ket{y}\ket{\phi_y}\ket{0}\ket{0}$ 
    to the unitary oracle $Mix$ to generate
    $Mix\ket{x}\ket{y}\ket{\phi_y}\ket{0}\ket{0}$, 
    measures the fourth register, and
    accepts if the result is $c$.
\end{enumerate}
Hiding of this protocol is trivial, since the committer just sends a random string in the commit phase. Binding is slightly more difficult, but the proof follows very similar lines to the security of NIKE. In particular, the only way that the committer can open to a bit $b$ is if it knows $\ket{\phi_{b,x_{>1}}}$. However, it is impossible for the receiver to know both $\ket{\phi_{0,x_{>1}}}$ and $\ket{\phi_{1,x_{>1}}}$, since it would have needed to receive both of these separately from $SG$. Thus, once the committer knows $\ket{\phi_{b,x_{>1}}}$, it can not change its mind and decide to open to $\ol{b}$.

We therefore oftain the following theorem.
\begin{theorem}[\Cref{thm:bitcommitment}, restated]
    QCCC commitments exist relative to $(SG,Mix,PSPACE)$.
\end{theorem}

\paragraph*{Existence of quantum lightning.}
Note that it is easy to see that, for any key $k$, no adversary can obtain two copies of $\ket{\phi_k}$. 
This follows immediately from the fact that the oracle $\SGSamp$
only ever returns at most one copy of $\wt{U}\ket{k}$ for each $k$ sampled during initialization. 
\mor{This explanaion is not clear. Please improve it. In particular, we did not explain details of $\SimSG$, so it is not immediate for readers who do not know the definition of $\SimSG$.} 
This leads to the natural quantum lightning protocol as follows.
\begin{enumerate}
    \item $\textbf{Mint}$: Query $|0\rangle$ to $SG$ to generate $SG\ket{0}=\sum_k|k\rangle|\phi_k\rangle$, measure the first register to obtain 
    the pair $(k,\ket{\phi_k})$ of the measurement result $k$ and the post-measurement state $|\phi_k\rangle$.
    Output $k$ as the serial number and $|\phi_k\rangle$ as the bolt state.
    \item $\textbf{Verify}(k,\ket{\psi})$: Query
    $\ket{k}\ket{0}\ket{\psi}\ket{0}\ket{0}$ 
    to $Mix$ to obtain
    $Mix\ket{k}\ket{0}\ket{\psi}\ket{0}\ket{0}$, 
    and
    measure the last register. Accept if the result is 1, and reject otherwise.
\end{enumerate}

The security goes via a reduction to the complexity-theoretic no-cloning theorem defined by~\cite{STOC:AarChr12}. In particular, this theorem states that with only polynomially many queries to a verification oracle, it is impossible to clone a Haar random state with non-negligible success probability. To clone a Haar random state $\ket{\psi}$ using a quantum lightning breaker, we will run the breaker on $(\SimSG,\SimMix)$, with a random $U\ket{k_i}$ replaced by $\ket{\psi}$. With probability approximately $\frac{1}{\#\text{ queries made by the lightning breaker}}$, the quantum lightning breaker will clone $\ket{\psi}$, violating complexity-theoretic no-cloning.

\begin{theorem}[\Cref{thm:lightning}, restated]
    Quantum lightning exists relative to $(SG,Mix,PSPACE)$.
\end{theorem}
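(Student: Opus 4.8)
The plan is to check correctness by inspection and to prove security by reducing any successful quantum-lightning adversary to the complexity-theoretic no-cloning theorem of~\cite{STOC:AarChr12}, with $\ket{\phi_x}=\U\ket{x}$ taken to be a $10n$-qubit state. Correctness is immediate: the honest minter makes one query $SG\to(x,\ket{\phi_x})$ and the honest verifier makes one query $V(x,\cdot)$; since $V(x,\cdot)$ applies $\U^\dagger$, measures in the computational basis, and accepts on outcome $x$, it accepts $\ket{\phi_x}=\U\ket{x}$ with probability $1$. Both parties are efficient, and the oracle $C$ (independent of $\U$) is irrelevant for this direction.

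For security, suppose for contradiction that a QPT $\A^{SG,V,C}$ outputs, with non-negligible probability $\eps$, a serial number $x$ together with a two-register state $\sigma$ both halves of which pass $V(x,\cdot)$. Since $\{\ket{\phi_z}\}_z$ is orthonormal, the accepting branch of $V(z,\cdot)$ corresponds to the projector $\ketbra{\phi_z}$, and the two register checks commute, so the event that both halves verify under $x$ has probability exactly $\bra{\phi_x}^{\otimes 2}\sigma\ket{\phi_x}^{\otimes 2}$; that is, $\A$ produces a state of average fidelity $\eps$ with $\ket{\phi_x}^{\otimes2}$. The first step is to argue that, up to $2^{-\Omega(n)}$, the serial $x$ that $\A$ outputs lies in the set $\{x_1,\dots,x_T\}$ ($T=\poly(n)$) of serials it received from $SG$: for any $z$ never returned by $SG$, $\ket{\phi_z}$ is statistically close to Haar random and independent of $\A$'s view, so every query $V(z,\cdot)$ accepts with only exponentially small probability and $\A$ cannot produce even one verifying copy under such a $z$. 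Guessing uniformly the index $j^\ast\in[T]$ of the attacked serial then costs only a factor $1/T$.

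The reduction is then as follows: given one copy of a Haar-random $10n$-qubit $\ket{\psi}$ and oracle access to the two-outcome measurement $\{\ketbra{\psi},\,I-\ketbra{\psi}\}$, the cloner --- computationally unbounded, but making only $\poly(n)$ oracle queries --- simulates $\A^{SG,V,C}$. It answers $C$-queries directly; it answers $SG$-queries by lazily sampling distinct serials together with freshly sampled i.i.d.\ Haar-random $10n$-qubit states $\ket{\phi_{x_i}}$ for $i\neq j^\ast$, handing over its single copy of $\ket{\psi}$ when $i=j^\ast$; and it answers a $V$-query on serial $z$ by applying the known measurement $\{\ketbra{\phi_{x_i}},\cdot\}$ if $z=x_i$ with $i\neq j^\ast$, by forwarding to the external oracle if $z=x_{j^\ast}$, and by rejecting otherwise. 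When $\A$ halts, the cloner outputs $\A$'s two-register state whenever $\A$'s serial equals $x_{j^\ast}$. This uses at most one external query per $V$-query of $\A$, hence $\poly(n)$ in total, and produces a state of average fidelity at least $\eps/T-2^{-\Omega(n)}$ with $\ket{\psi}^{\otimes2}$; for non-negligible $\eps$ this exceeds the exponentially small cloning bound of~\cite{STOC:AarChr12} for $10n$-qubit states, a contradiction. A standard Borel--Cantelli argument over a countable dense family of adversaries and all security parameters then yields a single fixed oracle relative to which quantum lightning exists.

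The crux is making the simulation faithful. One must show that a $\poly(n)$-query algorithm cannot tell the states returned by the simulated $SG$ --- namely $T$ i.i.d.\ Haar-random $10n$-qubit states, one of which is the challenge $\ket{\psi}$ --- apart from the real ones, i.e., the vectors $\U\ket{x_1},\dots,\U\ket{x_T}$ forming a Haar-random orthonormal $T$-frame; this is a hybrid argument in the style of approximate state designs, exploiting that $T\ll2^{\Omega(n)}$ so all pairwise overlaps are $2^{-\Omega(n)}$ with overwhelming probability. One must also bound the error from (i) answering $V$-queries on un-returned serials by ``reject'' --- a gentle-measurement argument, since the true oracle accepts with only exponentially small total probability --- and (ii) replacing the honest full-basis verifier on $x_{j^\ast}$ by the idealized two-outcome measurement $\{\ketbra{\psi},\cdot\}$, since the honest verifier's extra collapse on the reject branch acts only within the span of ``fresh'' $\ket{\phi_z}$'s, with which $\A$'s state has negligible overlap. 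Each ingredient is routine given~\cite{STOC:AarChr12} and the techniques of~\cite{TQC:Kre21}, so the proof is mainly careful bookkeeping around the complexity-theoretic no-cloning theorem.
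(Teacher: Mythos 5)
Your high-level plan matches the paper's: establish correctness by inspection and reduce a quantum-lightning breaker to the complexity-theoretic no-cloning theorem of~\cite{STOC:AarChr12} by substituting the Haar-random challenge $\ket{\psi}$ for one of the minted money states. The bookkeeping differs a bit: you guess which of the $T$ queries the adversary attacks (loss $1/T$) and therefore need the extra argument that the output serial lies among the returned ones plus a gentle-measurement argument for rejecting $V$-queries on unseen serials; the paper instead lets the reduction sample all $2^n$ states up front and guesses the output serial uniformly from $\{0,1\}^n$ (loss $2^{-n}$), which makes those two subsidiary arguments unnecessary since the no-cloning bound for $10n$-qubit states absorbs the exponential factor. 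Your factor is tighter but the paper's is simpler and suffices. Your replacement of the orthonormal $T$-frame by i.i.d.\ Haar states corresponds to the paper's \Cref{lem:randomortho}, and where you wave at ``a hybrid argument in the style of approximate state designs'' the paper first passes through a reflection-simulation step (\Cref{lem:simref}) so that the game's usage of the states can be counted in copies, which is what makes the trace-distance bound from \Cref{lem:randomortho} actually bite.

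There is, however, a genuine gap in how you close the argument. Your reduction feeds a fresh Haar-random $\ket{\psi}$ in for one slot and fresh i.i.d.\ Haar states elsewhere, so the contradiction you obtain from~\cite{STOC:AarChr12} is against an adversary whose advantage is measured \emph{on average over a freshly Haar-sampled family $\U$}. But the theorem is a statement about a fixed oracle, and you then invoke Borel--Cantelli plus a union bound over the $2^{\poly(n)}$ advice strings. Markov's inequality on the averaged advantage is far too weak for that union bound: to beat $2^{\poly(n)}$ advice strings per input length you need, for each fixed $(c,n)$, a doubly-exponentially small failure probability over the draw of $\{U_m\}$. That is exactly what the paper's hybrid $\mathbf{G}_1 \to \mathbf{G}_2$ supplies, via the Lipschitz bound (\Cref{lem:lipshitz}) and Levy's concentration (\Cref{thm:conc}): for each fixed advice string the acceptance probability with the fixed $\U_n$ deviates from its Haar average by more than $2^{-n}$ with probability at most $\approx 2^{-2^n}$ over $\U_n$. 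Your sketch never performs this fixed-to-fresh replacement; you go directly from ``the $T$ states from $SG$ are a Haar-random orthonormal frame'' to i.i.d.\ states, but for a \emph{fixed} $\U$ they are not Haar-random — a concentration argument is required to justify that reading, and without it the Borel--Cantelli step does not go through.
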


\begin{remark}
    Note that for the setting of quantum lightning, the random oracle is entirely redundant. Thus, we can replace $Mix$ with an oracle $Ver \ket{k}\ket{\phi_k}\ket{b}\mapsto \ket{k}\ket{\phi_k}\ket{\ol{b}}$. Note that relative to this oracle, the proof of~\cite{TQC:Kre21} can be easily adapted to show that $\mathbf{BQP}^{SG,Ver,PSPACE} = \mathbf{QMA}^{SG,Ver,PSPACE}$.
\end{remark}

\paragraph*{Building OWPuzzs from QCCC KE.}
To construct OWPuzzs from (poly-round) QCCC KE, we look to the construction of OWFs from classical KE~\cite{FOCS:ImpLub89}. In particular, this paper constructs distributional OWFs from KE, and shows that distributional OWFs and OWFs are equivalent. 
A distributional OWF is a OWF which is hard to \textit{distributionally} invert. 
That is, given $y = f(x)$, it is hard to sample from the set $f^{-1}(y)$.

Recent work~\cite{C:ChuGolGra24} analogously defines a distributional variant of OWPuzzs. 
In particular, a distributional OWPuzzs consists of a sampler producing puzzle-key pairs $(s,k)$ 
such that given a puzzle $s$, it is hard to sample from the conditional distribution over keys $k$. It turns out that distributional OWPuzzs and OWPuzzs are equivalent~\cite{C:ChuGolGra24}.

It is then not hard to build a distributional OWPuzzs from a QCCC KE. In particular, the puzzle will be the transcript (i.e., the sequence of messages exchanged) between Alice and Bob
while the OWPuzz key will be the corresponding shared key of the KE produced by Alice. 
For a random transcript, the conditional distribution over Alice's key will be close to the constant distribution, as otherwise Bob will disagree with Alice on the shared key. Distributionally inverting this puzzle is enough to recover the shared key and thus break the KE.

\begin{theorem}[\Cref{thm:ketoowpuzz}, restated]
If there exists (poly-round) QCCC KE, then there exist OWPuzzs.
\end{theorem}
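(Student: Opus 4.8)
The plan is to mirror, in the quantum-samplable world, the classical construction of distributional one-way functions from key exchange~\cite{FOCS:ImpLub89}: I will build a \emph{distributional} one-way puzzle from the QCCC KE protocol and then invoke the equivalence between distributional one-way puzzles and one-way puzzles proved in~\cite{C:ChuGolGra24}. Given a QCCC KE protocol between Alice and Bob, define the puzzle sampler $\Samp(1^n)$ to run the protocol honestly with fresh randomness, producing a classical transcript $\tau$ together with Alice's output key $k_A$, and output the puzzle--key pair $(s,k)=(\tau,k_A)$. The intuition is that a distributional inverter, on input $\tau$, samples from a distribution close to the conditional distribution of $k_A$ given $\tau$; I will argue that this conditional distribution is essentially a point mass on the true shared key, so the inverter recovers the key and hence breaks the KE.

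The crucial structural step is to show that, for all but a $\negl(n)$-fraction of transcripts $\tau$ (weighted by their probability in an honest run), the conditional distribution of $k_A$ given $\tau$ is $\negl(n)$-close to a point mass at some value $v_\tau$, and moreover $v_\tau=k_A$ with probability $1-\negl(n)$. This is where correctness of the KE enters. Observe that, conditioned on the classical transcript $\tau$, Alice's final register and Bob's final register are in a product state: the protocol interleaves local quantum operations with classical messages, and classical communication cannot create entanglement, so a straightforward induction on the number of rounds shows the global state remains a product once one conditions on the transcript. Consequently $k_A$ and $k_B$ are \emph{independent} conditioned on $\tau$. Correctness gives $\E_\tau\big[\Pr[k_A=k_B\mid\tau]\big]\geq 1-\negl(n)$, and for any $\tau$ with $\Pr[k_A=k_B\mid\tau]\geq 1-\delta$ independence yields $\max_v\Pr[k_A=v\mid\tau]\geq 1-\delta$; a Markov argument over the choice of $\tau$ then delivers the claimed near-determinism.

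With this in hand the reduction is routine. Suppose toward a contradiction that the construction is not a secure distributional one-way puzzle; then there is a QPT adversary $\A$ with $\Delta\big((\tau,k_A),(\tau,\A(\tau))\big)\leq 1/3$ for all large $n$. Combining the bad-transcript error with the $\negl(n)$ gap from the previous paragraph, $\A(\tau)$ outputs the true shared key $k_A$ with probability at least $1-1/3-\negl(n)>1/2$. If KE security is formulated in indistinguishability style, the distinguisher that runs $\A$ on the transcript and tests equality against the challenge key then separates the real key from a uniform one with non-negligible advantage; if it is formulated in search style, $\A$ is already a direct attack. In either case we contradict the security of the KE. Finally, invoking the equivalence between distributional one-way puzzles and one-way puzzles~\cite{C:ChuGolGra24} converts the distributional one-way puzzle into a one-way puzzle (this step also absorbs any amplification needed when the KE key is short).

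The main obstacle is precisely the structural lemma that Alice's and Bob's outputs are independent conditioned on the transcript, i.e.\ the step that turns correctness of the KE into near-determinism of the conditional key distribution. In the purely classical setting this is textbook, but here one must verify that it is not spoiled by the local quantum computation; it is not, exactly because the communication channel is classical and therefore entanglement-free. Everything else — the Markov averaging over transcripts, the distinguishing reduction, and the passage from short keys to proper one-way puzzles — is either immediate or handled by the cited equivalence.
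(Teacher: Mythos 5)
Your proof follows essentially the same route as the paper: construct a distributional OWPuzz with puzzle $=$ transcript and key $=$ Alice's output, use correctness to argue the conditional key distribution given the transcript is nearly a point mass, reduce a distributional inverter to a KE attacker, and finally invoke the distributional-OWPuzz $\Leftrightarrow$ OWPuzz equivalence from \cite{C:ChuGolGra24}. The one place you add content the paper leaves implicit is the observation that Alice's and Bob's outputs are independent conditioned on the classical transcript (because classical communication keeps the two private registers in product form) — this is exactly what licenses the paper's terse claim that low concentration of $A_\tau$ would force $A$ and $B$ to disagree, so your extra detail is a correct and welcome elaboration rather than a deviation.
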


\paragraph*{Building OWPuzzs from QCCC commitments.}
To build OWPuzzs from (general poly-round) QCCC commitments, we again take advantage of distributional OWPuzzs. In particular, we observe that if distributional OWPuzzs do not exist, 
then it is possible to do conditional sampling for any efficiently sampleable distribution. 
This gives an approach to break hiding for QCCC commitment schemes. 

In particular, given the transcript $\tau$ of the commit phase for a commitment to a bit $b$, 
a corrupted receiver can sample the committer's first message in the opening phase conditioned on $\tau$.\footnote{Remember that we are now considering general commitments where the opening phase is not necessarily non-interactive.} The receiver can then repeat this process to sample the committer's second and third messages, up until the entire opening phase has been simulated. Since the receiver can run the entire opening phase on its own, it can learn the output $b$. By binding, this should be the same as the committed bit, and therefore this attack breaks hiding.

\begin{theorem}[\Cref{thm:commtoowpuzz}, restated]
If there exist QCCC commitments, then there exist OWPuzzs.
\end{theorem}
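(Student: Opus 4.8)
The plan is to argue by contrapositive: assuming one-way puzzles do not exist, I will break every QCCC commitment scheme. The first ingredient is the equivalence between one-way puzzles and \emph{distributional} one-way puzzles together with their hardness amplification~\cite{C:ChuGolGra24}: a distributional one-way puzzle is a QPT sampler of classical pairs $(s,k)$ such that no QPT algorithm given $s$ can sample from a distribution close to the conditional law of $k$ given $s$, and non-existence of one-way puzzles implies non-existence of distributional one-way puzzles. Consequently, for every QPT-sampleable classical joint distribution there is a QPT \emph{conditional sampler}; after the amplification of~\cite{C:ChuGolGra24} we may take its statistical error to be negligible. I will use this one protocol message at a time, in the spirit of the Impagliazzo--Luby construction of distributional one-way functions from key exchange~\cite{FOCS:ImpLub89} adapted to puzzles.

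Concretely, fix a QCCC commitment and consider the honest execution that commits to a uniform bit $b$, yielding a commit transcript $\tau$, followed by an honest run of the (possibly interactive) opening phase, yielding opening messages $m_1,\dots,m_\ell$. For each round $i$, the map $(\tau,m_1,\dots,m_{i-1}) \mapsto m_i$ is QPT-sampleable, so non-existence of distributional one-way puzzles gives a QPT algorithm that samples the next message from a distribution negligibly close to its honest conditional law; for rounds where $m_i$ is a \emph{receiver} message the attacker simply computes it honestly, so the samplers are needed only for the committer's messages. The attack on hiding is then: the corrupted receiver behaves honestly through the commit phase, obtains $\tau$, and internally simulates a full opening phase using these conditional samplers for the committer's messages and the honest receiver code for its own; it runs the honest receiver's decoding on the resulting transcript to get a bit $b'$, and outputs $b'$.

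To see why $b' = b$ with overwhelming probability, I would run a small win-win argument. The simulated opening transcript is statistically close to a genuine honest opening of $\tau$, hence accepting with overwhelming probability; let $p_c(\tau)$ be the probability it decodes to bit $c$. If for a noticeable fraction of $b=0$ transcripts one had $p_1(\tau)$ noticeable (and symmetrically), then a malicious committer that commits honestly to $0$, keeps its real state to open honestly to $0$ when asked, but runs the conditional sampler on $\tau$ to open to $1$ when asked, would violate (sum-)binding; so binding forces $p_{\bar b}(\tau)$ to be negligible on average over honest $\tau$. Hence the decoded bit equals the committed bit with overwhelming probability and hiding is broken --- contradicting the security of the commitment. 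Reading the chain backwards gives the theorem.

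The step I expect to be the main obstacle is controlling accumulated error. The assumed non-existence of distributional one-way puzzles a priori yields only weak conditional samplers (inverse-polynomial closeness, or success only with inverse-polynomial probability), while the reduction composes $\ell = \poly(n)$ of them sequentially and the conditioning event drifts as the attacker inserts its own honest replies, so naively the errors compound. I would deal with this as in~\cite{C:ChuGolGra24,FOCS:ImpLub89}: first invoke their amplification to upgrade the samplers to negligible error, then bound the total statistical distance between the simulated opening and a genuine one by a hybrid over the $\ell$ rounds. A secondary technical point is to use a binding notion robust to an interactive opening phase (e.g., sum-binding against QPT malicious committers) and to verify that honestly sampled commit transcripts inherit the resulting uniqueness-of-opened-bit property; once these are handled, the remaining bookkeeping is routine.
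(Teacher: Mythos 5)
Your proposal follows essentially the same route as the paper's proof: build a distributional one-way puzzle whose puzzle is a prefix of (commit transcript, partial opening transcript) with the next committer message as key, use a distributional inverter so the corrupted receiver can simulate the whole opening phase on its own, bound the drift of the simulation by a hybrid over rounds, and invoke binding to argue that the simulated decoding must equal the committed bit — contradicting hiding. Whether you use a single puzzle that draws a random round index $i$ (as the paper does, so the inverter handles every round) or a per-round conditional sampler (as you do) is cosmetic. Your ``win-win'' formulation of the binding step is an equivalent rephrasing of the paper's observation that the receiver's simulated transcript is syntactically identical to a transcript produced by a malicious committer $\mathrm{Com}'$ who simply runs the inverter, so weak honest binding already rules out decoding to $m' \notin \{m, \bot\}$. (In fact the paper needs only weak honest binding, which is weaker than the sum-binding you invoke.)

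One imprecision worth flagging: you claim that the amplification of~\cite{C:ChuGolGra24} lets you take the conditional sampler's statistical error to be \emph{negligible}. That is not what non-existence of (distributional) one-way puzzles buys. It only gives, for each fixed polynomial $\secpa^{-c}$, an inverter achieving error $\leq \secpa^{-c}$ infinitely often; there is no single inverter with negligible error. The paper is careful here: it fixes $c$ so that $\secpa^{-c} \leq \frac{1}{4t^2}$, runs the telescoping hybrid to get $\norm{\mathcal{I}_t - \mathcal{J}_t}_1 \leq t^2 \secpa^{-c} \leq \frac{1}{4}$, and concludes that the receiver outputs $\bot$ with probability $\leq \frac{1}{4} + \negl$ and the correct bit with probability $\geq \frac{3}{4} - \negl$. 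So ``decoded bit equals committed bit with overwhelming probability'' and ``accepting with overwhelming probability'' should both be weakened to ``with probability $\geq 1 - O(1/\poly)$'', which is still a noticeable distinguishing advantage and therefore still breaks hiding. This is a constant-versus-negligible slip in the bookkeeping, not a flaw in the approach.
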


\paragraph*{Building OWPuzzs from 2QKD.}
\if0
Formally, a QKD protocol is a protocol where Alice and Bob agree on a shared key over a public authenticated classical channel and a public arbitrary quantum channel. In particular, a good QKD protocol should satisfy three properties. 
\begin{enumerate}
    \item (Correctness): If the protocol is executed honestly, then Alice and Bob output the same key.
    \item (Validity): An adversarial party Eve, who controls the quantum channel, cannot make Alice and Bob output different keys without them noticing and outputting $\bot$.
    \item (Security): If either Alice or Bob does not output $\bot$, then Eve can not learn their final key.
\end{enumerate}
A two-round QKD protocol, or 2QKD is simply a QKD protocol where Alice and Bob each send at most one message. It is known that 2QKD protocols can be built from post-quantum OWFs~\cite{C:KMNY24,C:MalWal24}.
\fi
A concurrent work~\cite{STOC:KhuTom25} defines a quantum version of OWPuzzs, state puzzles, and shows that state puzzles are equivalent to OWPuzzs. 
A state puzzle is essentially a OWPuzz where the key is a quantum state. Since the key is quantum, there is a canonical verification algorithm which just projects onto the target state.\footnote{Note that we can always assume that the key is a pure state, since adding the extra qubits from its purification into the key only improves security.}

We observe that the existence of a 2QKD protocol immediately gives a state puzzle. In particular, the puzzle will consist of Alice's first classical message to Bob, 
and the key will be Alice's internal state as well as her first quantum message.\footnote{Remember that in QKD we have authenticated classical channel and non-authenticated quantum channel. We therefore assume that Alice's first message consists of a quantum part and a classical part.} In particular, if this is not a state puzzle, then we can attack the 2QKD protocol by replacing Alice's quantum message with the result from the state puzzle breaker. This will allow an adversary to impersonate Alice and obtain a corresponding private state. Since Alice does not send any more messages beyond this, Bob will not be able to detect this impersonation and so the adversary will be able to predict Bob's output.

\begin{theorem}[\Cref{thm:2qkd}, restated]
If there exist 2QKD, then there exist OWPuzzs.
\end{theorem}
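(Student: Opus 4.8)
The plan is to route through the equivalence between one-way puzzles and \emph{state puzzles}~\cite{cryptoeprint:2024/1490}, so that it suffices to extract a state puzzle from a given $2$QKD protocol. Recall that a state puzzle is a QPT sampler outputting a classical puzzle $s$ together with a quantum key $\ket{k_s}$, with the guarantee that no QPT adversary given $s$ can output a state passing the canonical verification $\ketbra{k_s}{k_s}$ with non-negligible probability; since the key is quantum we may take it pure, as adding the purifying register to the key only helps the reduction. Given a $2$QKD protocol, decompose Alice's single message into an authenticated classical part $c_1$ and a (possibly quantum) part $q_1$ sent over the adversarial channel, and let $I_A$ be Alice's private register; conditioned on the outcome $c_1$, the pair $(I_A,q_1)$ is a pure state $\ket{k_{c_1}}$. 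Define the state puzzle sampler to run honest Alice's first round and output $s:=c_1$ and $\ket{k_s}:=\ket{k_{c_1}}$. Correctness of the state puzzle is immediate, since the sampler \emph{is} (part of) honest Alice.

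For security I would prove the contrapositive: a QPT adversary $\cala$ that, on input $c_1$, outputs $\rho$ with $\bra{k_{c_1}}\rho\ket{k_{c_1}}\geq \delta$ for non-negligible $\delta$ yields an eavesdropper $\cale$ breaking the $2$QKD protocol. The eavesdropper sits on the quantum channel: it lets Alice send her first round, reads $c_1$ off the public authenticated classical channel, discards the genuine $q_1$, runs $\cala(c_1)\to\rho$ on registers $(I_A',q_1')$, forwards $q_1'$ to Bob (who receives the true $c_1$ regardless), receives Bob's reply $(c_2,q_2)$, and -- crucially, since in a two-round protocol Alice sends nothing after this -- runs honest Alice's final key-derivation on $(I_A',c_2)$ to produce a guess $K'$ that $\cale$ outputs. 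Let $M=M(c_1)$ be the POVM element ``Bob does not output $\bot$ and Bob's output key equals $K'$'', acting on the register into which Alice's first message is placed. In an honest execution (this register holding $\ket{k_{c_1}}$), correctness and validity give $\bra{k_{c_1}}M\ket{k_{c_1}}\geq 1-\eta$ for negligible $\eta$. Feeding $\rho$ instead, $\cale$ succeeds with probability $\Tr(M\rho)$, and a short calculation -- using that $M\preceq\one$ forces the off-diagonal entry of $M$ in the $\ket{k_{c_1}}$-vs-$\ket{k_{c_1}}^{\perp}$ block to be $O(\sqrt{\eta})$ -- gives $\Tr(M\rho)\geq\tfrac{1}{2}\bra{k_{c_1}}\rho\ket{k_{c_1}}-O(\eta)\geq \delta/2-\negl(n)$, which is non-negligible because $\eta$ is negligible while $\delta$ is not. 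This contradicts $2$QKD security; if a cleaner quantitative relationship is needed, apply amplification for one-way puzzles.

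I expect the main obstacle to be exactly that last quantitative step: bootstrapping ``the breaker outputs a state of non-negligible fidelity with Alice's private state'' into ``$\cale$ impersonates Alice well enough to predict Bob's key with non-negligible probability''. Naively replacing $\ket{k_{c_1}}$ by $\rho$ in Bob's view can change everything when $\rho$ is only $\delta$-close with $\delta$ small, so the argument must avoid a trace-distance bound and instead exploit that Bob's ``accept and agree'' event is a bona fide measurement whose behaviour on $\ket{k_{c_1}}^{\perp}$ is controlled by the honest error $\eta$. The rest is routine: formalizing the classical/quantum split of the two messages, tracking that the authenticated channel delivers $c_1$ and $c_2$ unmodified, checking that $\cale$ needs nothing from the now-disconnected real Alice, and the observation -- essential here, and the reason this does not contradict the unconditional security of three-round QKD -- that having Alice speak only once means no later challenge from Bob can expose the impersonation.
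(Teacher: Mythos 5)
Your construction is the same as the paper's: extract a state puzzle by running honest Alice's first round, taking the classical message as the puzzle and the purified joint state of Alice's internal register and her quantum message as the key, then build an eavesdropper for $QKDSec$ from a state-puzzle breaker by having it synthesize $\rho$ from the classical message, forward the appropriate register to Bob, and run $QKDDecode$ on the retained register after Bob's reply. Where you diverge is in the quantitative step. You work with the fidelity formulation of state-puzzle security and argue directly about the POVM element $M$ (``Bob accepts and agrees'') via a $2\times 2$ block decomposition, showing the cross-block of $M$ is $O(\sqrt{\eta})$ so that $\Tr(M\rho)\geq \bra{k_{c_1}}\rho\ket{k_{c_1}}-O(\sqrt{\eta})$. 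The paper instead works with the trace-distance formulation and compares your eavesdropper $\cale$ (call it $\A'$) to an inefficient ideal inverter $\A''$ that plants the exact conditional state $\ket{\phi_{msg}}$. Since the $QKDSec$ game is a CPTP map of the state-puzzle output, $|\Pr[1\gets QKDSec(\A')]-\Pr[1\gets QKDSec(\A'')]|\leq \TD(\sigma^\A,\sigma)\leq 1-\epsilon$, and $\Pr[1\gets QKDSec(\A'')]\geq 1-\negl$ by correctness, so $\Pr[1\gets QKDSec(\A')]\geq \epsilon-\negl$ with no matrix calculation at all. The data-processing argument buys you exactly what you flag as the main obstacle — it replaces the delicate ``impersonation survives Bob's test'' analysis with a one-line hybrid — and it also sidesteps the averaging-over-$c_1$ bookkeeping.

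Two small imprecisions in your version. First, having shown the off-diagonal block of $M$ is bounded by $\sqrt{\eta}$ in operator norm, the resulting bound is $\Tr(M\rho)\geq (1-\eta)\bra{k_{c_1}}\rho\ket{k_{c_1}}-2\sqrt{\eta}$; there is no factor of $\tfrac{1}{2}$, and the error term is $O(\sqrt{\eta})$ rather than $O(\eta)$. Both still vanish when $\eta$ is negligible, so this does not break the argument, but the stated inequality is not what the calculation gives. Second, the correctness bound $\bra{k_{c_1}}M(c_1)\ket{k_{c_1}}\geq 1-\eta$ and your assumed advantage $\bra{k_{c_1}}\rho\ket{k_{c_1}}\geq\delta$ both hold on average over $c_1$, not pointwise, so you need a Markov argument to combine them; the paper's trace-distance route folds that averaging in automatically.
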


\paragraph{Paper organization} We begin with preliminaries and definitions in~\Cref{prelims-sec}. In~\Cref{sec:presampling}, we outline and modify the quantum version of the presampling lemma from~\cite{TCC:GLLZ21}. In~\Cref{sec:sepsec}, we introduce our oracles $SG,Mix,PSPACE$ and prove the key lemma we will use to simulate this oracle by a stateful efficient procedure. This argument contains the novel key ideas needed to show all of our results relative to $SG,Mix,PSPACE$. We include the proofs of security for quantum lightning, key exchange, and commitments relative to $SG,Mix,PSPACE$ in~\Cref{sec:lightning,app:ke,sec:commitoracle}. \Cref{sec:compequivunit} contains the proof that $\mathbf{BQP}^{SG,Mix,PSPACE}=\mathbf{QCMA}^{SG,Mix,PSPACE}$. We conclude with the constructions of one-way puzzles from CountCrypt primitives in~\Cref{sec:owpz}.

\section{Preliminaries}
\label{prelims-sec}
In this section, we provide a refresher on basics of quantum information, Haar measure and its concentration, as well as definitions of primitives in quantum cryptography.  

\subsection{Notation}
Throughout the paper, $[n]$ denotes the set of integers $\{1,2,\dots, n\}$. If $X$ is a set, we use $x \gets X$ to denote that $x$ is sampled uniformly at random from $X$. 
Similarly, if $A$ is an algorithm and $x$ is an input, we use $y \gets A(x)$ to say that $y$ is the output of $A$ on input $x$. 
For a distribution $D$, $x\gets D$ means that $x$ is sampled according to $D$. A function $f$ is \emph{negligible} if for every constant $c>0$, $f(n)\leq \frac{1}{n^c}$ for all sufficiently large $n\in\mathbb{N}$. We will use $\negl(n)$ to denote an arbitrary negligible function, and $\poly(n)$ to denote an arbitrary polynomially bounded function. For two interactive algorithms $A$ and $B$ that interact over a classical or quantum channel,
$A\rightleftarrows B$ means the execution of the interactive protocol, or the distribution of the transcript, i.e. the sequence of messages exchanged between $A$ and $B$. $A\leftrightarrows B \to y$ means that the execution of the interactive protocol produces output $y$.
We use the abbreviation QPT for a uniform quantum polynomial time algorithm. We will also use non-uniform QPT to refer to quantum polynomial time algorithms with polynomial \textit{classical} advice. We use the notation $A^{(\cdot)}$ to refer to a (classical or quantum) algorithm
that makes queries to an oracle. For an algorithm $\A$, we use $|\A|$ to refer to the length of some canonical description of $\A$. We use $\TD(\rho,\sigma)=\frac{1}{2}\|\rho-\sigma\|_1$ to denote the trace distance between density matrices $\rho$ and $\sigma$. 
For two distributions $D_1,D_2$, we use $\Delta(D_1,D_2) = \frac{1}{2}\sum_{x}\abs{\Pr[x\gets D_1] - \Pr[x\gets D_2]}$ to denote the total variation distance (aka statistical distance) between $D_1$ and $D_2$. 
For algorithms which take in a unary security parameter $1^n$, we will sometimes omit $1^n$ as input when the value is implicit. We will use $\omega_T$ to denote a (canonical) $T$th root of unity. That is, $\omega_T^T = \omega_T^0 = 1$, and for all $0 < i < T$, $\omega_T^i\neq 1$. For a projector $\Pi$, we will say "apply the measurement $\{\Pi,I-\Pi\}$" to mean apply the map $\rho\mapsto \Pi\rho \Pi + (I-\Pi)\rho(I-\Pi)$.

\subsection{Quantum Information}

A quantum channel $\mathcal{A}$ is any completely-positive trace-preserving (CPTP) map. Intuitively, quantum channels represent all ``physical'' processes. That is, the set of CPTP maps is the quantum analogue of the set of all functions.

For a quantum channel $\mathcal{A}$, we let $\|A\|_{\diamond}$ denote its \emph{diamond norm}. The diamond norm and trace distance satisfy the following relation:
\begin{theorem}[\cite{NC}]\label{fact}
     Let $\mathcal{A}$ and $\mathcal{B}$ be quantum channels and $\rho$ be a density matrix. Then,
     $\mathsf{TD}(\mathcal{A}(\rho),\mathcal{B}(\rho)) \leq ||\mathcal{A}-\mathcal{B}||_{\diamond}$.
 \end{theorem}
Hence if two quantum channels are close in diamond distance, triangle inequality implies that they are indistinguishable via bounded query access.

We note the following relationship between trace distance and fidelity of pure states.
\begin{lemma}\label{lem:tlem}
    Let $\ket{\phi},\ket{\psi}$ be two pure states. Then,
    \begin{align}
    \TD(\ketbra{\phi},\ketbra{\psi}) = \sqrt{1 - \abs{\braket{\phi|\psi}}^2}.
    \end{align}
\end{lemma}

In this work, we will often consider quantum channels corresponding to performing some projective measurement, which we call
measurement channels.
\begin{definition}[Measurement Channels]
    Let $\mathcal{M} = \{\Pi, I - \Pi\}$ be a binary measurement corresponding to a projector $\Pi$. 
    The measurement channel corresponding to $\mathcal{M}$ is the following channel:
    \begin{align}
    \rho \mapsto \Pi \rho \Pi + (I - \Pi) \rho (I - \Pi).
    \end{align}
\end{definition}

We make use of the following fact about measurement channels.
\begin{lemma}[\cite{diamond}]\label{lem:diamond}
    Let $\mathcal{A} = \{\Pi_A, I - \Pi_A\}$, $\mathcal{B} = \{\Pi_B, I - \Pi_B\}$ be two measurement channels. Then
    \begin{align}
    ||\mathcal{A}-\mathcal{B}||_{\diamond} = 2 \max_{\rho} \abs{{\rm Tr}(\Pi_A \rho) - {\rm Tr}(\Pi_B \rho)}.
    \end{align}
\end{lemma}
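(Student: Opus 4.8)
The plan is to establish the two inequalities ``$\geq 2c$'' and ``$\leq 2c$'' separately, where $c\coloneqq\max_{\rho}\abs{{\rm Tr}(\Pi_A\rho)-{\rm Tr}(\Pi_B\rho)}$, working with each measurement channel in its outcome-recording form $\mathcal{A}\colon\rho\mapsto{\rm Tr}(\Pi_A\rho)\,\ketbra{0}+{\rm Tr}((I-\Pi_A)\rho)\,\ketbra{1}$ (which writes the outcome into a classical one-bit register $C$; this is the form in which these channels arise in our oracle constructions), and similarly for $\mathcal{B}$. First I would record the elementary identity $c=\norm{\Pi_A-\Pi_B}_{\mathrm{op}}$, which is just the variational formula $\max_{\rho\succeq 0,\ {\rm Tr}\rho=1}\abs{{\rm Tr}(H\rho)}=\norm{H}_{\mathrm{op}}$ applied to the Hermitian operator $H=\Pi_A-\Pi_B$. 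For the lower bound I would feed in, with no ancilla, the pure state $\rho=\ketbra{\psi}$ where $\ket{\psi}$ is an eigenvector of $\Pi_A-\Pi_B$ whose eigenvalue $\lambda$ has maximal modulus $\abs{\lambda}=c$; then $(\mathcal{A}-\mathcal{B})(\rho)={\rm Tr}((\Pi_A-\Pi_B)\rho)\,(\ketbra{0}-\ketbra{1})=\lambda\,(\ketbra{0}-\ketbra{1})$, whose trace norm is $2\abs{\lambda}=2c$.

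For the upper bound I would take an arbitrary input $\rho_{SR}$ on the system plus an ancilla $R$ and use $\norm{\mathcal{A}-\mathcal{B}}_\diamond=\sup_{\rho_{SR}}\norm{((\mathcal{A}-\mathcal{B})\otimes I_R)(\rho_{SR})}_1$ (restricting to density matrices suffices since $\mathcal{A}-\mathcal{B}$ is Hermiticity preserving). Since both channels write their outcome into $C$, act as the identity on $R$, and satisfy $(I-\Pi_A)-(I-\Pi_B)=-(\Pi_A-\Pi_B)$, the output equals $(\ketbra{0}-\ketbra{1})_C\otimes\sigma_R$, where $\sigma_R\coloneqq{\rm Tr}_S[((\Pi_A-\Pi_B)\otimes I_R)\rho_{SR}]$ is Hermitian, so its trace norm is $2\norm{\sigma_R}_1$. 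It would then remain to show $\norm{\sigma_R}_1\leq c$; for this I would use the dual characterization $\norm{\sigma_R}_1=\max_{\norm{M_R}_{\mathrm{op}}\leq 1}\abs{{\rm Tr}(M_R\sigma_R)}$, rewrite ${\rm Tr}(M_R\sigma_R)={\rm Tr}[((\Pi_A-\Pi_B)\otimes M_R)\rho_{SR}]$, and combine $\abs{{\rm Tr}(N\rho)}\leq\norm{N}_{\mathrm{op}}$ (for any density matrix $\rho$) with $\norm{(\Pi_A-\Pi_B)\otimes M_R}_{\mathrm{op}}=\norm{\Pi_A-\Pi_B}_{\mathrm{op}}\cdot\norm{M_R}_{\mathrm{op}}\leq c$. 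Since this holds for every $\rho_{SR}$, I get $\norm{\mathcal{A}-\mathcal{B}}_\diamond\leq 2c$, and combined with the lower bound this gives the claimed equality.

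I expect the only genuinely nontrivial point to be that last step of the upper bound --- bounding $\norm{{\rm Tr}_S[((\Pi_A-\Pi_B)\otimes I_R)\rho_{SR}]}_1$ by $\norm{\Pi_A-\Pi_B}_{\mathrm{op}}$ uniformly over \emph{all} ancilla states $\rho_{SR}$, i.e.\ that an entangled probe register buys no advantage over the single unentangled probe used in the lower bound; the remaining manipulations are routine. (The equality is specific to the outcome-recording form of the channel: for the non-selective ``pinching'' map $\rho\mapsto\Pi\rho\Pi+(I-\Pi)\rho(I-\Pi)$ it already fails for $\Pi_A=\ketbra{0}$, $\Pi_B=\ketbra{1}$, where the two channels coincide while $c=1$.)
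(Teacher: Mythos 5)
Your proof is correct, and you have also caught a real definitional glitch in the paper. The paper gives no proof of this lemma (it simply cites a reference), so there is no in-paper argument to compare against; what can be evaluated is whether your argument is sound, and it is. The identity $c = \norm{\Pi_A - \Pi_B}_{\mathrm{op}}$ for the Hermitian operator $\Pi_A - \Pi_B$ is the standard variational characterization; the unentangled top-eigenvector probe gives the lower bound $\norm{\mathcal{A}-\mathcal{B}}_\diamond \geq 2c$ directly; and the upper bound correctly reduces to $\norm{\sigma_R}_1 \leq c$ for $\sigma_R = {\rm Tr}_S[((\Pi_A-\Pi_B)\otimes I_R)\rho_{SR}]$, which follows from the duality $\norm{\sigma_R}_1 = \sup_{\norm{M_R}_{\mathrm{op}}\leq 1}\abs{{\rm Tr}(M_R\sigma_R)}$, the partial-trace identity ${\rm Tr}(M_R\sigma_R)={\rm Tr}[((\Pi_A-\Pi_B)\otimes M_R)\rho_{SR}]$, H\"older, and multiplicativity of the operator norm under tensor products. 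All steps check out.

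The caveat you append at the end is the important observation. The paper's stated Definition of ``measurement channel'' is the non-selective pinching $\rho\mapsto\Pi\rho\Pi + (I-\Pi)\rho(I-\Pi)$, and for that map the claimed equality is genuinely false: your example $\Pi_A=\ketbra{0}$, $\Pi_B=\ketbra{1}$ yields identical pinching channels (both equal the fully dephasing map) yet $c=1$. What the paper actually uses in its hybrid arguments is the outcome-recording form --- the $Mix$ and $V$ oracles measure and then emit a classical symbol conditioned on the outcome, so the relevant channel writes the result into a classical register, which is precisely the form you prove the lemma for. So you have chosen the right reading, and your proof is sound for the statement as the paper actually applies it; the paper's written definition of measurement channel is the thing that does not match.
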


 \subsection{The Compressed Oracle Technique/The Recording Method}
\label{sec:comporacle} 
Here we explain the compressed oracle technique and the recording method~\cite{C:Zhandry19}. 
First, for the random oracle $H:\{0,1\}^m \rightarrow \{0,1\}^n$, we denote the usual quantum random oracle, or the standard oracle by $\mathsf{Sto}$. In this model, $H$ is represented as a truth table: a vector of size $2^m$ where each component is an $n$ bit string.  
The compressed  oracle for $H:\{0,1\}^m \rightarrow \{0,1\}^n$, $\mathsf{CStO}$, acts on the query and database registers $(\mathbf{Q}, \mathbf{D})$, where the database register $\mathbf{D}$ will be a collection $D$ of input output pairs  $(x,y)$, and $(x,y)\in D$ denotes that the random oracle has been specified to have value $y$ on input $x$. We will write $D(x)=y$ in this case. If, for an input $x$, there is no $(x,y)\in D$, we will write $D(x)=\bot$, indicating that the function has not been specified on $x$.
\def\CStO{\mathsf{CStO}}
\def\Decomp{\mathsf{Decomp}}
\begin{align*}
    \CStO \coloneqq \Decomp \circ \CStO' \circ \Decomp \circ \mathsf{Increase},
\end{align*}
We will refer to the unitary $\CStO$ as the compressed standard oracle. 
where $\mathsf{Increase}$ is the procedure which initializes a new register $\ket{(\bot, 0^n)}$ and appends it to the end: $\mathsf{Increase} \ket{x, y}_{\mathbf{Q}} \otimes \ket{D}_{\mathbf{D}} = \ket{x, y} \otimes \ket{D} \ket{(\bot, 0^n)}$.\mor{Isn't there any typo here? The number of regisers are increased.} 
 $\Decomp$ and $\CStO'$ are defined as follows:
\begin{itemize}
    \item $\Decomp$ is defined by $\ket{x, u}_\mathbf{Q} \otimes \ket{D}_\mathbf{D} \mapsto \ket{x, u}_{\mathbf{Q}} \otimes \Decomp_x\ket{D}_{\mathbf{D}}$, where $\Decomp_x$ is defined by its action on basis states as follows. Let $D$ be such that $D(x) = \bot$:
    \begin{align}
        \ket{D} &\mapsto \frac{1}{\sqrt{|\mathcal{Y}|}} \sum_{y \in \mathcal{Y}} \ket{D \cup \{(x,y)\}}\\
        \frac{1}{\sqrt{|\mathcal{Y}|}} \sum_{y \in \mathcal{Y}} \ket{D \cup \{(x,y)\}} &\mapsto \ket{D}\\
        \frac{1}{\sqrt{|\mathcal{Y}|}} \sum_{y \in \mathcal{Y}} (-1)^{y \cdot u} \ket{D \cup \{(x,y)\}} &\mapsto \frac{1}{\sqrt{|\mathcal{Y}|}} \sum_{y \in \mathcal{Y}} (-1)^{y \cdot u} \ket{D \cup \{(x,y)\}} \quad \text{ for } u \neq 0
    \end{align}
    In words, $\Decomp_x$ swaps the states $\ket{D}$ and $\frac{1}{\sqrt{|\mathcal{Y}|}} \sum_{y \in \mathcal{Y}} \ket{D \cup \{(x,y)\}}$, and acts as the identity on the space orthogonal to these two states.
    \mor{Is the third state orthogonal to the first and second states?}
    \item $\CStO'$ maps $\ket{x, u}_\mathbf{Q} \otimes \ket{D}_\mathbf{D} \mapsto \ket{x, u \oplus D(x)}_{\mathbf{Q}} \otimes \ket{D}_{\mathbf{D}}$.
\end{itemize}
\begin{lemma}[Compressed oracle method~\cite{C:Zhandry19}]\label{lem:comporacle}
$\mathsf{Ctso}$ and $\mathsf{Sto}$ are perfectly indistinguishable. That is, for any adversary $\mathcal{A}$, we have $\Pr[A^{\mathsf{Ctso}}(\cdot)=1]=\Pr[A^{\mathsf{Sto}}(\cdot)=1]$.
\end{lemma}
Next, we will state the following theorem, which gives a general formulation of the overall state of $\mathcal{A}$ and the compressed standard oracle 
after $\mathcal{A}$ makes $t$ queries, even conditioned on arbitrary measurement results.
\begin{theorem}[\cite{C:Zhandry19,TCC:GLLZ21}]\label{thm:presampledatabase}
   Let $M,N\in \N$, and let $f:[M]\rightarrow [N]$. For any algorithm $\A$ making $t$ queries to a compressed standard oracle
   , assuming the overall state of $\A$ and the compressed standard oracle
   is $\ket{A^{\mathsf{Ctso}}}=\sum_{z,D}\alpha_{z,D}\ket{z}_{\textbf{Q}}\ket{D}_{\textbf{D}}$ where $\textbf{Q}$  is the query register and $\textbf{D}$ is the database register of the oracle, 
    \begin{align}
    \E_{f:[M]\to[N]}\left[|A^f\rangle\langle A^f|\right] = {\rm Tr}_{\mathbf{D}}\left[\ketbra{\A^{\mathsf{Ctso}}}\right].
    \end{align}
    where $\ket{\A^f}$ is the state of $\A$ after making $t$ queries to $f$.
    Furthermore, define 
    \begin{align}
    \Pi_{\leq t}\coloneqq  \text{projection onto the set of databases $D$}: |D|\leq t. 
    \end{align}
    Then the $\textbf{D}$ register will only have support on all databases $D$ which lie within the subspace specified by $\Pi_{\leq t}$. 
    That is, 
    \begin{align}
    {\rm Tr}\left((I\otimes \Pi_{\leq t})\ketbra{\A^{\mathsf{Ctso}}}\right)=1.
    \end{align}
    Moreover, this is true even if the state is conditioned on arbitrary outcomes (with non zero probability) of $\A$'s intermediate measurements. 
\end{theorem}
\mor{Did we explain about the recording method? If the above is the recording method, it should be said so.}

\subsection{One Way To Hiding}
We use the following lemma which combines~\cite[Theorem 3]{cryptoeprint:2018/904} and ~\cite[Lemma 8]{cryptoeprint:2018/904}, where the latter is rooted in Vazirani's \emph{Swapping Lemma}~\cite{Vaz98}. 

\begin{lemma}[One-Way-to-Hiding Lemma,~\cite{cryptoeprint:2018/904}]\label{lem:O2H}
Let $\algo X,\algo Y$ be arbitrary sets and let $\algo S \subseteq \algo X$ be a (possibly random) subset. Let $G,H: \algo X \rightarrow \algo Y$ be arbitrary (possibly random) functions such that $H(x)=G(x)$, for all $x \notin \algo S$. Let $z$ be a classical bit string or a (possibly mixed) quantum state. (Note that $G,H,S,z$ may have arbitrary joint distribution.) Let $\algo A$ be an
oracle-aided quantum algorithm that makes at most $q$ quantum queries. Let $\algo B$ be an algorithm that on input $z$ chooses a random query index $i \leftarrow [q]$, runs $\algo A^H(z)$, measures $\algo A$'s $i$-th query and outputs the measurement
outcome. Then, we have    
\begin{align}
\left|\Pr[\algo A^G(z)=1] - \Pr[\algo A^H(z)=1] \right| \leq 2 q \sqrt{\Pr[\algo B^H(z) \in \algo S]}.
\end{align}
Moreover, for any fixed choice of $G,H,S$ and $z$ (when $z$ is a classical string or a pure state), we get
\begin{align}
\big\| \ket{\psi^H_q} - \ket{\psi^G_q} \big\| \leq 2q\sqrt{\frac{1}{q}\sum_{i = 0}^{q-1} \big\|\Pi_{\algo S}\ket{\psi^H_i}\big\|^2},
\end{align}
where $\ket{\psi^H_i}$ denotes the intermediate state of $\algo A$ just before the $(i+1)$-st query, where the initial state at $i=0$ corresponds to $z$, and $\Pi_{\algo S}$ is a projector onto $\algo S$.
\end{lemma}

In particular, we will actually make use of the slightly more generalized version.
 \begin{lemma}
 \label{cor:o2h}
    Let $U,V$ be any unitaries acting on a Hilbert space $\mathcal{H}$, with the guarantee that, $U \ket{\psi}= V\ket{\psi}$, for all $\ket{\psi}$ $\notin S$, which is a (possibly random) subspace of $\mathcal{H}$.  Let $z$ be a classical bit string or a (possibly mixed) quantum state. Let $\algo A$ be an
oracle-aided quantum algorithm that makes at most $q$ quantum queries. Then, we have

\begin{align}
\big\| \ket{\psi^V_q} - \ket{\psi^U_q} \big\| \leq 2\sum_{i = 0}^{q-1}\sqrt{ \big\|\Pi_{\algo S}\ket{\psi^U_i}\big\|^2},
\end{align}
where $\ket{\psi^U_i}$ denotes the intermediate state of $\algo A$ just before the $(i+1)$-st query, where the initial state at $i=0$ corresponds to $z$, and $\Pi_{\algo S}$ is a projector onto $\algo S$.
 \end{lemma}

\begin{proof}
    This follows directly from gentle measurement. Let $\Pi_S$ be the projector onto the subspace $S$. Define $\ket{\psi_i^{Int,U}}$ to be the state generated as follows:
    \begin{itemize}
        \item Run $\A^{U}$ up until just before the $(i+1)$-st query, but before every query $j \in [i]$, apply the measurement $\{\Pi_S,I-\Pi_S\}$ onto $\ket{\psi_j}$. 
        \item If the result is the first outcome, output $\bot$. 
        \item Otherwise, continue.
    \end{itemize} 
    Note that since $U$ and $V$ agree on ${\sf Im}(I - S)$, $\ket{\psi_i^{Int,U}}=\ket{\psi_i^{Int,V}}$.

    We know that $\ket{\psi_1^U}=\ket{\psi_1^{Int,U}}$. Gentle measurement immediately gives that $\norm{\ket{\psi_2^{U}} - \ket{\psi_2^{Int,U}}} \leq 2\sqrt{\norm{\Pi_S\ket{\psi_1^U}}^2}$. 
    Now for our induction hypothesis, assume that,
    $\norm{\ket{\psi_{q-1}^{U}} - \ket{\psi_{q-1}^{Int,U}}} \leq 2\sum_{i=0}^{q-2} \sqrt{\norm{\Pi_S \ket{\psi_i^U}}^2}$. \\
    By gentle measurement and triangle inequality
    \begin{align*}
        \norm{\ket{\psi_q^U} - \ket{\psi_q^{Int,U}}}&=\norm{\ket{\psi_{q-1}^U} - \ket{\psi_{q-1}^{Int,U}}} + 2\sqrt{\norm{\Pi_S \ket{\psi_{q-1}^U}}^2}\\
        &\leq 2\sum_{i=0}^{q-1}\sqrt{ \norm{\Pi_S \ket{\psi_i^U}}^2}
    \end{align*}
    A symmetric argument gives
    $$\norm{\ket{\psi_q^V} - \ket{\psi_q^{Int,V}}}\leq 2\sum_{i=0}^{q-1}\sqrt{ \norm{\Pi_S \ket{\psi_i^U}}^2}$$
    and the result follows from triangle inequality.
\end{proof}

\subsection{Haar Measure And Its Concentration}
We use $\mathbb{U}(N)$ to denote the group of $N \times N$ unitary matrices, and $\HaarUn{N}$ to denote the Haar measure on $\mathbb{U}(N)$. We use $\C^N$ to denote the set of quantum pure states of dimension $N$, and $\HaarSt{N}$ to denote the Haar measure on $\C^N$.
Given a metric space $(\mathcal{M},d)$ where $d$ denotes the metric on the set $\mathcal{M}$, a function $f: \mathcal{M}\rightarrow \mathbb{R}$ 
is $L$-lipschitz if for all $x,y \in \mathcal{M},\left|f(x)-f(y)\right|\leq L\cdot d(x,y)$. 
For $d$, we will in particular be concerned about the Frobenius norm. For a complex matrix $M$, its Frobenius norm is $\sqrt{\mbox{Tr}(M^\dagger M)}$.

The following inequality involving Lipschitz continuous functions captures the strong concentration of Haar measure.
\begin{theorem}[\cite{Mec19}]
\label{thm:conc}
Let $k\geq 1$ be an integer. Given $N_1,N_2, \dots, N_k \in \mathbb{N}$, let $X = \mathbb{U}(N_1)\bigoplus \dots \bigoplus  \mathbb{U}(N_k) $ be the space of block diagonal unitary matrices with blocks of size $N_1,N_2,\dots ,N_k$. 
Let $\mu=\mu_{N_1}\times \dots\times \mu_{N_k}$
be the product of Haar measures on $X$, 
where, for each $i\in[k]$, $\mu_{N_i}$ is the Haar measure on $\mathbb{U}(N_i)$.
Suppose that $f: X\rightarrow \mathbb{R}$ is $L$-Lipshitz with respect to the Frobenius norm. Then for every $t>0$,
 \begin{align}
 \Pr_{U \leftarrow \mu}\left[\left|f(U) - \mathbb{E}_{V \leftarrow \nu}[f(V)]\right| \geq t\right]\leq 2\exp\left({-\frac{(N-2)t^2}{24L^2}}\right) ,
 \end{align}
 where $N=\min(N_1,\dots ,N_k)$.
 \end{theorem}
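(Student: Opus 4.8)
The plan is to deduce the estimate from the general principle that a compact Riemannian manifold whose Ricci curvature is bounded below by a positive constant $c$ satisfies Gaussian concentration of measure with variance proxy of order $1/c$ (the Bakry--\'Emery / Gromov--L\'evy phenomenon): for the normalized Riemannian volume and any $1$-Lipschitz $g$, one has $\Pr[|g-\E g|\ge t]\le 2\exp(-c t^2/2)$. Granting this, the whole task reduces to three steps: (i) identify the correct Riemannian structure on $X$ so that the Haar measure $\mu$ becomes its normalized volume and the Frobenius-Lipschitz hypothesis translates into a geodesic-Lipschitz bound; (ii) lower bound the Ricci curvature of $X$; (iii) plug the bound into the concentration inequality and track constants.

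For a single factor, equip $\mathbb{U}(N)$ with the bi-invariant metric induced by the inner product $\langle A,B\rangle=\Tr(A^\dagger B)$ on the Lie algebra $\mathfrak{u}(N)$; then the Haar measure $\mu_N$ is exactly the normalized volume, and the induced geodesic distance dominates the chordal (Frobenius) distance $\norm{U-V}_F$, so any function that is $L$-Lipschitz for $\norm{\cdot}_F$ is automatically $L$-Lipschitz for the geodesic distance --- the direction we need. Writing $\mathfrak{u}(N)=\mathfrak{su}(N)\oplus\mathbb{R}\,(iI)$, the summand $\mathbb{R}\,(iI)$ is a flat circle factor with zero Ricci curvature, while on $\mathfrak{su}(N)$ the Ricci form of a bi-invariant metric is a positive multiple of the metric, and a structure-constant computation (equivalently, evaluation of the Killing form of $\mathfrak{su}(N)$, which equals $2N$ times the trace form) shows this multiple is of order $N$. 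Handling the flat circle direction carefully --- restricting to, or projecting onto, $SU(N)$ and paying a small additive loss --- yields an effective curvature lower bound of the shape $(N-2)/12$ in the normalization in which $\norm{\cdot}_F$ is the reference metric; this is the source of the $(N-2)$ and of the $24=2\cdot 12$ appearing in the statement.

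For the general block-diagonal case, observe that $X=\mathbb{U}(N_1)\times\cdots\times\mathbb{U}(N_k)$ and that the Frobenius norm of a block-diagonal matrix is the $\ell_2$-sum of the Frobenius norms of its blocks, i.e. the metric on $X$ is exactly the product of the per-factor metrics and $\mu=\mu_{N_1}\times\cdots\times\mu_{N_k}$ is its normalized volume. Since the Ricci curvature of a Riemannian product is the direct sum of the factors' Ricci curvatures, it is bounded below by the minimum of the per-factor bounds, which corresponds to $N=\min_i N_i$. Applying the Bakry--\'Emery concentration estimate on $X$ to $f/L$ and unwinding the normalization gives precisely the claimed inequality; note that the comparison with the expectation $\E_{V\gets\nu}[f(V)]$ over the same product measure is built into the concentration statement.

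The main obstacle is the bookkeeping in the second step: pinning down the curvature constant under a fixed normalization and, more delicately, absorbing the flat $U(1)$ direction of each $\mathbb{U}(N_i)$ so that one still obtains a clean lower bound linear in $N$ (this is exactly where the $-2$ rather than $-1$ enters). Everything else --- the reduction to Ricci-based concentration, the Frobenius-versus-geodesic comparison, and the tensorization over the $k$ blocks --- is routine once that computation is in hand. An alternative route that avoids the Riemannian-geometry language runs the same argument through a logarithmic Sobolev inequality for the heat semigroup on $\mathbb{U}(N)$ combined with Herbst's argument, producing a bound of the same form; I would fall back on this if the curvature constants prove awkward to normalize directly.
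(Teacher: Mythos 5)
The paper does not actually prove this theorem: it cites it to [Mec19] (E.~Meckes, \emph{The Random Matrix Theory of the Classical Compact Groups}), so there is no in-paper proof to compare against. That said, your sketch is the standard route, and it is essentially the one taken in the cited source: equip each factor with the bi-invariant Riemannian metric induced by the Frobenius inner product, observe that geodesic distance dominates chordal (Frobenius) distance so the Lipschitz hypothesis transfers in the needed direction, bound the Ricci curvature below, apply Bakry--\'Emery (or the log-Sobolev--plus--Herbst version you mention as a fallback), and tensorize over blocks using that Ricci of a product is the minimum of the per-factor bounds.

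The one place where your argument as written is genuinely too loose is the flat $U(1)$ direction, and it is worth being precise because the naive phrasing (``restricting to, or projecting onto, $SU(N)$'') does not give a well-defined operation that preserves both the Haar measure and the Lipschitz constant. The Ricci curvature of $\mathbb{U}(N)$ has a zero eigenvalue on the center, so Bakry--\'Emery does not apply directly to $\mathbb{U}(N)$. The actual fix passes through the $N$-fold covering $SU(N)\times U(1)\to \mathbb{U}(N)$, $(V,\theta)\mapsto e^{i\theta/N}V$: a function $f$ that is $L$-Lipschitz in Frobenius norm on $\mathbb{U}(N)$ pulls back to something that is $L$-Lipschitz in the $SU(N)$ coordinate but only $L/\sqrt{N}$-Lipschitz in $\theta$ (because $\|\partial_\theta(e^{i\theta/N}V)\|_F=\sqrt{N}/N=1/\sqrt{N}$). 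Since $\theta$ ranges over a bounded interval, the total variation along the center direction is $O(L/\sqrt{N})$, i.e.\ already at the concentration scale set by the curved $SU(N)$ factor, which is why the flat direction does not destroy the bound. Making this explicit is what produces the somewhat unnatural $N-2$ and the extra constant in the $24$. Apart from that bookkeeping --- which you correctly flag as the delicate part --- the proposal is sound and is a faithful reconstruction of how the cited result is proved.
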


We will use the following lemma.
 \begin{lemma}[\cite{TQC:Kre21}]
 \label{lem:lipshitz}
     Let $A^{(\cdot)}$ be a quantum algorithm that makes $T$ queries to a unitary oracle operating on a $D$-dimensional state space, and let $\ket{\psi}$ be any input to $A^{(\cdot)}$. 
     For $U \in \mathbb{U}(D)$, define $f(U)=\Pr[A^{U}(\ket{\psi})=1]$. Then, $f$ is $2T$-Lipshitz with respect to the Frobenius norm.
 \end{lemma}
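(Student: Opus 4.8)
The plan is to quantify how much the final pre-measurement state of $A^{(\cdot)}$ moves when the oracle is perturbed, and then to convert that state perturbation into an acceptance-probability perturbation. By deferring all intermediate measurements, we may assume without loss of generality that $A^{U}(\ket{\psi})$ applies a fixed sequence of unitaries $W_0, W_1, \dots, W_T$ on the oracle register together with a workspace register, interleaved with $T$ invocations of $\widetilde{U} := U \otimes I_{\mathrm{work}}$, and then performs a single binary projective measurement $\{\Pi, I - \Pi\}$; thus $f(U) = \norm{\Pi \ket{\psi_U}}^2$ where $\ket{\psi_U} := W_T \widetilde{U} W_{T-1} \cdots \widetilde{U} W_0 \ket{\psi}$. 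The core estimate is the hybrid bound $\norm{\ket{\psi_U} - \ket{\psi_V}} \leq T \cdot \norm{U - V}_{\mathrm{op}}$: swapping the $i$-th oracle call from $\widetilde{V}$ to $\widetilde{U}$ changes the running state by exactly $\norm{(\widetilde{U} - \widetilde{V}) \ket{\chi_i}}$ for some unit vector $\ket{\chi_i}$, since every other gate in the circuit is norm-preserving, and $\norm{(\widetilde{U} - \widetilde{V}) \ket{\chi_i}} \leq \norm{\widetilde{U} - \widetilde{V}}_{\mathrm{op}} = \norm{U - V}_{\mathrm{op}}$ because tensoring with the identity does not change the operator norm. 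Summing over the $T$ hybrids gives the bound.

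For the second step I would write $f(U) - f(V) = \langle \psi_U | \Pi | \psi_U \rangle - \langle \psi_V | \Pi | \psi_V \rangle$ and insert $\pm \langle \psi_U | \Pi | \psi_V \rangle$, so that $f(U) - f(V) = \langle \psi_U | \Pi | \psi_U - \psi_V \rangle + \langle \psi_U - \psi_V | \Pi | \psi_V \rangle$. Cauchy--Schwarz together with $\norm{\Pi \ket{\psi_U}} \leq 1$ and $\norm{\Pi \ket{\psi_V}} \leq 1$ then gives $\abs{f(U) - f(V)} \leq 2 \norm{\ket{\psi_U} - \ket{\psi_V}}$. Combining with the hybrid bound and the elementary inequality $\norm{M}_{\mathrm{op}} \leq \norm{M}_F$ yields $\abs{f(U) - f(V)} \leq 2T \norm{U - V}_F$, which is precisely $2T$-Lipschitzness with respect to the Frobenius norm. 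Routing the estimate through the operator norm and only then bounding by the Frobenius norm is exactly what keeps it independent of the workspace dimension, and the Frobenius norm is the form required to plug into the Haar concentration inequality of \Cref{thm:conc}.

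The argument is essentially routine, and I do not expect a serious obstacle. The two points that need a little care are (i) justifying the deferred-measurement reduction, so that $A^U$ genuinely becomes an alternating product with no adaptivity between queries (this is standard and does not change the query count), and (ii) keeping the workspace register explicit so that the hybrid step uses the operator norm --- which is invariant under $\otimes I_{\mathrm{work}}$ --- rather than the Frobenius norm directly, since the latter would introduce a spurious factor growing with the workspace dimension and would destroy the dimension-independence that the lemma needs.
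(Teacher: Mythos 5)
Your proof is correct, and it is essentially the standard BBBV-style hybrid argument that appears in the cited Kretschmer paper: replace the oracle one query at a time, bound the per-swap state displacement by $\norm{U-V}_{\mathrm{op}}$ (which is preserved under tensoring with the workspace identity), convert the final state displacement into an acceptance-probability change with a factor of $2$ via Cauchy--Schwarz, and finally dominate the operator norm by the Frobenius norm. Your two cautionary remarks --- deferring measurements to get a fixed alternating product, and routing the per-call bound through the operator norm to avoid a spurious workspace-dimension factor --- are exactly the right points to flag, and they are handled correctly.
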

 
\subsection{State Tomography}
\begin{theorem}[Quantum State Tomography, Proposition 2.2 from~\cite{HKOT23}]\label{thm:statetom}
    There is a pure state tomography procedure with the following behavior. Let $d\in\mathbb{N}$. 
    For any $n>d,\epsilon>0$, given $O(n/\epsilon)$ copies of a pure state $\ket{\phi}\in \C^d$, it outputs a classical description of an estimate pure state $\ket{\phi'}$ such that with probability $\geq 1 - e^{-5n}$, 
    \begin{align}
    \abs{\braket{\phi|\phi'}}^2 \geq 1 - \epsilon.
    \end{align}
\end{theorem}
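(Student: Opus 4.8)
The plan is to prove this by the standard Haar-covariant (``pretty good'') tomography measurement on the symmetric subspace, with $N=\Theta(n/\epsilon)$ copies (a concrete constant such as $N=40\,n/\epsilon$ suffices). Since $\ket{\phi}^{\otimes N}$ lies in the symmetric subspace $\mathrm{Sym}^N(\C^d)$ and
\[
\int \bigl(\dim \mathrm{Sym}^N(\C^d)\bigr)\,\ketbra{\psi^{\otimes N}}\; d\psi \;=\; \Pi_{\mathrm{Sym}^N(\C^d)},
\]
where $d\psi$ is the Haar measure on unit vectors of $\C^d$, the family $\{(\dim \mathrm{Sym}^N(\C^d))\,\ketbra{\psi^{\otimes N}}\,d\psi\}_\psi$ is a legitimate POVM on $\mathrm{Sym}^N(\C^d)$. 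The procedure measures the $N$ copies with this POVM and outputs (a bounded-precision classical description of) the label $\ket{\phi'}=\ket{\psi}$ of the outcome.

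For the analysis, observe that on input $\ket{\phi}^{\otimes N}$ the outcome $\ket{\psi}$ has density proportional to $\abs{\braket{\psi|\phi}}^{2N}$, so $x:=\abs{\braket{\psi|\phi}}^2$ is distributed as $\mathrm{Beta}(N+1,\,d-1)$: under the Haar measure $\abs{\braket{\psi|\phi}}^2$ already has density $(d-1)(1-x)^{d-2}$, and the extra weight $x^N$ turns this into the density $\propto x^N(1-x)^{d-2}$. By the classical identity relating the regularized incomplete Beta function to binomial tail probabilities,
\[
\Pr\bigl[\,1-x>\epsilon\,\bigr] \;=\; \Pr\bigl[\,\mathrm{Bin}(N+d-1,\epsilon)\le d-2\,\bigr].
\]
This binomial has mean $(N+d-1)\epsilon\ge N\epsilon=\Theta(n)$, which because $n>d$ sits far above $d-2$; hence a Chernoff bound for its lower tail gives $\Pr[1-x>\epsilon]\le e^{-\Omega(N\epsilon)}\le e^{-5n}$ once the constant in $N=\Theta(n/\epsilon)$ is chosen large enough. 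Thus with probability $\ge 1-e^{-5n}$ the reported state satisfies $\abs{\braket{\phi|\phi'}}^2\ge 1-\epsilon$, and the number of copies consumed is $N=O(n/\epsilon)$.

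The one remaining point is to replace the continuous covariant POVM by an honest, finitely describable measurement with bounded-length output: one discretizes the unit sphere by a sufficiently fine net (or uses an appropriate complex-projective design) so that the discretized resolution of the identity is within negligible diamond distance of $\Pi_{\mathrm{Sym}^N(\C^d)}$, and rounds the reported vector to polynomially many bits; both approximations perturb the output distribution only negligibly and are absorbed into $\epsilon$. I expect the main thing to get right is the Chernoff accounting in the previous paragraph --- balancing the exponential concentration of $\mathrm{Beta}(N+1,d-1)$ near $x=1$ against the range $[0,d-2]$ of the binomial variable, which is exactly where the hypothesis $n>d$ enters --- so that the failure probability is genuinely $e^{-5n}$ while the copy budget stays $O(n/\epsilon)$. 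The remaining ingredients (validity of the covariant POVM, the Beta/binomial identity, and the net/design argument) are standard.
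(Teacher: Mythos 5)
The paper does not prove this statement: it imports it as Proposition 2.2 of \cite{HKOT23}, and the accompanying remark notes only that the $n$-dependent failure-probability form stated here is ``immediate upon observing the proof'' of \cite{HKOT23} (and is made explicit in the predecessor \cite{GKKT20}). So there is no paper-internal argument to compare against; your proposal supplies a standalone proof where the paper relies on a citation.

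That said, your route is correct and is essentially the standard analysis underlying these tomography results. The covariant POVM $\bigl\{(\dim\mathrm{Sym}^N(\C^d))\,\ketbra{\psi^{\otimes N}}\,d\psi\bigr\}$ is legitimate by the symmetric-subspace resolution of the identity, and on input $\ket{\phi}^{\otimes N}$ the overlap $x=\abs{\braket{\psi|\phi}}^2$ of the outcome is indeed $\mathrm{Beta}(N+1,d-1)$ (Haar marginal density $(d-1)(1-x)^{d-2}$ reweighted by $x^N$). The incomplete-Beta/binomial identity then gives $\Pr[1-x>\epsilon]=\Pr[\mathrm{Bin}(N+d-1,\epsilon)\le d-2]$; with $N=Cn/\epsilon$ the mean is $\ge Cn$ while the threshold $d-2<n$ by the hypothesis $n>d$, and a Chernoff lower-tail bound yields $e^{-\Omega(Cn)}\le e^{-5n}$ for a modest constant (your $C=40$ gives roughly $e^{-19n}$). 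This is exactly where $n>d$ is needed, and you identified that correctly. The discretization to a finite-precision output via a net or projective design is also a standard cleanup step. One small presentational caution: the phrase ``rounds the reported vector to polynomially many bits'' should really say polynomial in $\log d$ and $\log(1/\epsilon)$ per coordinate with $d$ coordinates, but this does not affect the $O(n/\epsilon)$ copy count or the failure probability. Overall the argument is sound and matches the approach one would expect \cite{HKOT23,GKKT20} to take.
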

\begin{remark}
    The statement from~\cite{HKOT23} specifies an error depending on the dimension of the state. However, this stronger statement is immediate upon observing the proof, and indeed the predecessor of this result from~\cite{GKKT20} makes the dependence on $n$ explicit.
\end{remark}

\subsection{Quantum Complexity}

We will consider promise problem versions of $\mathbf{BQP}$ and $\mathbf{QCMA}$.

A promise problem $\Pi$ is a pair of disjoint sets $\Pi_{yes},\Pi_{no}\subseteq \{0,1\}^*$ with $\Pi_{yes}\cap \Pi_{no} = \emptyset$.

\begin{definition}
    Let $\mathcal{O}$ be an oracle. We say that a promise problem $\Pi = (\Pi_{yes},\Pi_{no})$ is in $\mathbf{BQP}$ if there exists constants $c_1>c_2$ and a randomized polynomial-time quantum oracle algorithm $\mathcal{A}^{\mathcal{O}}(x)$ such that
    \begin{enumerate}
        \item If $x \in \Pi_{yes}$, then $\Pr[\A^{\mathcal{O}}(x) = 1] \geq c_1$
        \item If $x \in \Pi_{no}$, then $\Pr[\A^{\mathcal{O}}(x) = 1] \leq c_2$
    \end{enumerate}
\end{definition}

\begin{definition}
    Let $\mathcal{O}$ be an oracle. We say that a promise problem $\Pi = (\Pi_{yes},\Pi_{no})$ is in $\mathbf{QCMA}$ if there exists constants $c_1>c_2$, a randomized polynomial-time quantum oracle algorithm $\Ver^{\mathcal{O}}(x)$, and a polynomial $p(\cdot)$ such that
    \begin{enumerate}
        \item If $x \in \Pi_{yes}$, then there exists a witness $w\in \{0,1\}^{p(|x|)}$ such that $$\Pr[\Ver^{\mathcal{O}}(x,c) = 1] \geq c_1$$
        \item If $x \in \Pi_{no}$, then for all witnesses $w\in \{0,1\}^{p(|x|)}$, $$\Pr[\Ver^{\mathcal{O}}(x,w) = 1] \leq c_2$$
    \end{enumerate}
\end{definition}

 \subsection{Quantum Cryptography Definitions}
\label{sec:primitive}
In this subsection, we provide definitions of quantum cryptographic primitives we consider in this paper.

\begin{definition}[QCCC Key Exchange (KE)]
    A QCCC key exchange (KE) protocol is an interactive two-party protocol between two parties $\mathsf{A}$ and $\mathsf{B}$. 
    Both $\mathsf{A}$ and $\mathsf{B}$ are QPT, but all communications are classical.
    At the end of the protocol, $\mathsf{A}\leftrightarrows \mathsf{B}$ produces a classical transcript $\tau$, 
    $\mathsf{A}$'s output bit $a$, and $\mathsf{B}$'s output bit $b$. A KE protocol must satisfy the following properties.
    \begin{itemize}
        \item Correctness: At the conclusion of the protocol, both parties agree on the output. Formally, 
        \begin{align}
        \Pr[a = b : \mathsf{A}(1^\secpa) \leftrightarrows \mathsf{B}(1^\secpa) \to (\tau,a,b)] \geq 1 - \negl(\secpa).
        \end{align}
        \item Security: An adversary with the ability to see all communication should not be able to learn the output. 
        That is, for all non-uniform QPT adversaries $\mathcal{E}$,
        \begin{align}
        \Pr[a\gets \mathcal{E}(\tau) : \mathsf{A}(1^\secpa)\leftrightarrows \mathsf{B}(1^\secpa) \to (\tau,a,b)] \leq \negl(\secpa).
        \end{align}
    \end{itemize}

\end{definition}

\begin{definition}[QCCC Multiparty Non-Interactive Key Exchange (NIKE)]
    An $N$-party QCCC multiparty non-interactive key exchange (NIKE) is a pair $(\mathsf{Publish},\mathsf{KeyGen})$ of QPT algorithms with the following syntax:
   \begin{itemize}
       \item 
       $\mathsf{Publish}(1^\secpa,i)\to (x_i,\rho_i)$: It is a QPT algorithm that, on input the security parameter $1^\secpa$ and an integer $i\in[N]$, 
       outputs a classical bit string $x_i$ and a quantum state $\rho_i$.
       \item 
    $\mathsf{KeyGen}(i,x_1,...,x_N,\rho_i)\to k_i$: It is a QPT algorithm that, on input $i$, $(x_1,...,x_N)$ and $\rho_i$, outputs a classical key $k_i$.   
   \end{itemize} 
   We require the following two properties:
   \begin{itemize}
   \item
   Correctness:
   \begin{align}
\Pr[\forall i,j\in[N], k_i=k_j:(x_i,\rho_i)\gets\mathsf{Publish}(1^\secpa,i),k_i\gets\mathsf{KeyGen}(i,x_1,...,x_N,\rho_i)]\ge1-\negl(\secpa).       
   \end{align}
    \item 
    Security:
    For any non-uniform QPT adversary $\cE$,
    \begin{align}
\Pr\left[
\exists i\in[N]~ s.t.~ e=k_i:
\begin{array}{l}
(x_i,\rho_i)\gets\mathsf{Publish}(1^\secpa,i)\\
k_i\gets\mathsf{KeyGen}(i,x_1,...,x_N,\rho_i)\\
e\gets\cE(x_1,...,x_N)
\end{array}
\right]\le\negl(\secpa).       
    \end{align}
    \end{itemize}
\end{definition}
    
\begin{definition}[Quantum Lightning~\cite{JC:Zhandry21}]
    A quantum lightning protocol is a tuple 
    $(\Mint, \Veri)$ 
    of uniform QPT algorithms 
    with the following syntax.
    \begin{itemize}
        \item $\Mint(1^n) \to (\sigma,\ket{\$})$: takes in the security parameter $1^n$ and outputs a serial number $\sigma$ and a bolt state $\ket{\$}$.
        \item $\Veri(1^n,\sigma, \ket{\$}) \to \top/\bot$: takes in the security parameter $1^n$ as well as a serial number $\sigma$ and a money state $\ket{\$}$, 
        and outputs $\top/\bot$.
    \end{itemize}
    We require the following properties.
    \begin{itemize}
        \item Correctness: Honestly generated bolt states verify under their serial number. That is, 
        \begin{align}
        \Pr[\top\gets\Veri(1^n,\Mint(1^n))] \geq 1-\negl(n).
        \end{align}
        \item Security: No attacker can generate two bolt states for the same serial number. That is, for all non-uniform QPT attackers $\A$,
        \begin{align}
        \BigPr{\top\gets \Veri(1^n,\sigma, \rho_{\mathbf{A}}) \\\top\gets \Veri(1^n,\sigma, \rho_{\mathbf{B}})}{(\sigma,\rho_{\mathbf{AB}})\gets \A(1^n) } \leq \negl(n).
        \end{align}
        Here, $\rho_{\mathbf{AB}}$ is a state over two registers $\mathbf{A}$ an $\mathbf{B}$, 
        and $\rho_{\mathbf{A}}$ ($\rho_{\mathbf{B}}$) is the state on the register $\mathbf{A}$ ($\mathbf{B}$).
    \end{itemize}
\end{definition}
\begin{definition}[Distributional One-Way Puzzles~\cite{C:ChuGolGra24}]
    A $\beta$-distributional one-way puzzle (OWPuzz) is a uniform QPT algorithm $\Samp(1^\secpa) \to (k,s)$ which takes in a security parameter $1^\secpa$ and produces two classical bit strings, a key $k$ and a puzzle $s$, such that given a puzzle $s$, it is computationally infeasible to sample from the conditional distribution over keys $k$. More formally, we require that for all non-uniform QPT algorithms $\mathcal{A}$, for all sufficiently large $\secpa\in\mathbb{N}$,
    \begin{align}
    \Delta((k,s),(\mathcal{A}(1^\secpa,s),s)) \geq \beta(\secpa),
    \end{align}
    where $(k,s)\gets\Samp(1^\secpa)$.
\end{definition}

\begin{remark}
    Distributional OWPuzzs are equivalent to OWPuzzs~\cite{C:ChuGolGra24}, defined in~\cite{STOC:KhuTom24}. 
    A OWPuzz is a QPT sampler paired with an inefficient verifier such that given a puzzle $s$, it is hard to find a key $k$ which verifies. Since we will not work directly with OWPuzzs, we omit the formal definition.
\end{remark}

\begin{definition}[QCCC Commitments~\cite{STOC:KhuTom24}]\label{def:com}
    A QCCC commitment scheme is a two-party protocol between a QPT committer $\Comm$ and a QPT receiver $\Rece$ 
    over a classical channel
    consisting of a commit stage and an opening stage operating on a private input $m$ described as follows.
    \begin{enumerate}
       \item Commit stage: Both parties receive the security parameter $1^\secpa$. 
        The committer $\Comm$ receives a private input $m$. 
        $\Comm$ interacts with the receiver $\Rece$ using only classical messages, and together they produce a transcript $z$. At the end of the stage, both parties hold a private quantum state $\rho_{\Comm}$ and $\rho_{\Rece}$, respectively.
        \item Opening stage: Both parties receive the transcript $z$ as well as their private quantum states $\rho_{\Comm}$ and $\rho_{\Rece}$, respectively. The committer $\Comm$ interacts with the receiver $\Rece$ using only classical messages. At the end of the stage, $\Rece$ either outputs a message or the reject symbol $\bot$.
    \end{enumerate}
    We require the following two properties.
    \begin{enumerate}
        \item Correctness: For all messages $m$, when $\Comm$ and $\Rece$ interact honestly, the probability that $\Rece$ outputs $m$ at the end of the opening stage is at least $1 - \negl(\secpa)$.
        \item (Computational) hiding: For all $m\neq m'$ and for all non-uniform QPT adversarial receivers $\Rece'$, 
        the view to $\Rece'$ interacting with $\Comm$ with input $m$ is computationally indistinguishable from the view of
        $\Rece'$ interacting with $\Comm$ with input $m'$. That is,
        for any non-uniform QPT adversary $\Rece'$ that interacts with the honest $\Comm$ in the commit stage and outputs a bit
        after the end of the commit stage,
        \begin{align}
|\Pr[1\gets\Rece'_{1^\secpa,m}]-\Pr[1\gets\Rece'_{1^\secpa,m'}]|\le\negl(\secpa).            
        \end{align}
        Here, $\Rece'_{1^\secpa,m}$ is a non-uniform QPT adversary that interacts with the honest $\Comm$ whose input is $(1^\secpa,m)$.
        \item (Computational weak honest) binding: For all $m$ and for any non-uniform QPT adversarial committer 
        $\Comm'$, the probability that $\Comm'$ wins the following game is $\leq \negl(\secpa)$.
        \begin{enumerate}
            \item In the commit stage, an honest receiver $\Rece$ interacts with the honest committer $\Comm$ 
            to produce a transcript $z$ and the output states $\rho_{\Comm}$ and $\rho_{\Rece}$.
            \item In the opening stage, the honest receiver $\Rece$ is given $\rho_{\Rece}$ and $z$, while $\Comm'$ is given $z$ (but not $\rho_{\Comm}$). They then proceed to run the opening stage with the committer replaced by $\Comm'$, and $\Rece$ produces a final output $m'$. $\Comm'$ wins if $m' \neq m$ and $m' \neq \bot$. 
        \end{enumerate}
    \end{enumerate}
\end{definition}

\begin{remark}
    Here we use a very weak security definition for QCCC commitments. This only serves to make our construction of OWPuzz from QCCC commitments stronger. For our separation, we use a stronger security notion detailed in~\Cref{sec:commitoracle}.
\end{remark}
\subsection{State Verification Unitaries}

\begin{definition}
    For a state $\ket{\phi}$, we will define two unitaries representing ``verification of the state''
    as
    \begin{equation}
        \begin{split}
            R_{\ket{\phi}} \ket{\phi} \mapsto (-1)\ket{\phi}\\
            P_{\ket{\phi}} \ket{\phi}\ket{b} \mapsto \ket{\phi}\ket{\bar{b}}
        \end{split}
    \end{equation}
    and on all orthogonal inputs both maps will act as the identity. It is well known that $R_{\ket{\phi}}$ can be used to compute $P_{\ket{\phi}}$ and vice-versa. 
\end{definition}

\begin{lemma}[\cite{C:JiLiuSon18}, Theorem 4]\label{lem:simref}
    For any $\epsilon>0$, for any $t \geq \frac{2}{\epsilon^2}$ and an efficient algorithm $\mathcal{B}$ such that for all $\ket{\phi}$,
    \begin{align}
    \|R_{\ket{\phi}}(\cdot)R_{\ket{\phi}}^\dagger - \mathcal{B}(\ket{\phi}^{\otimes t},\cdot)\|_{\diamond} \leq \epsilon.
    \end{align} \takashi{I believe $R_{\ket{\phi}}\cdot  R_{\ket{\phi}}^\dagger$ means the channel that maps $\rho$ to $R_{\ket{\phi}}\rho  R_{\ket{\phi}}^\dagger$, but this may be misunderstood as a product as matrices (which trivially give the  identity). To avoid the misunderstanding, I think writing like $R_{\ket{\phi}}(\cdot)  R_{\ket{\phi}}^\dagger$ would be more common, and we should explain what this notation means for clarity.}
   In particular, let $\Pi_{sym}^{t+1}$ be the projection onto the symmetric subspace $Span(\ket{\phi}^{\otimes t+1})$. 
   Then $\mathcal{B}(|\phi\rangle^{\otimes t},|\psi\rangle)$ will apply the measurement $\{\Pi_{sym}^{t+1},I-\Pi_{sym}^{t+1}\}$ on $\ket{\phi}^{\otimes t}\ket{\psi}$.
   \mor{It is not clear whether $\mathcal{B}$ does $\rho\to\Pi_{sym}\rho\Pi_{sym}/{\rm Tr}[\Pi_{sym}\rho]$ or $\rho\to\Pi_{sym}\rho\Pi_{sym}+(I-\Pi_{sym})\rho(I-\Pi_{sym})$.}
\end{lemma}

The following corollary follows immediately from the fact that $R_{\ket{\phi}}$ can be used to implement $P_{\ket{\phi}}$.

\begin{lemma}[Variant of Lemma 4.1 from~\cite{EC:AGKL24}]\label{lem:randomortho}
    Let $N,M>0$ be integers with $N \leq M$. Let $\rho$ be a quantum state of dimension $NM$ defined by sampling $N$ states $\ket{\phi_k}\randfrom \HaarSt{M}$ for $k\in[N]$. That is, 
    \begin{align}
    \rho\coloneqq\E_{\ket{\phi_k}\gets \HaarSt{M}, k\in [N]}\left[\bigotimes_{k=1}^N \ketbra{\phi_k}\right].
    \end{align}
    Let $\rho'$ be the same state but where we require that each $\ket{\phi_k},\ket{\phi_{k'}}$ are orthogonal. Formally, $\rho'$ is the mixed state of dimension $NM$ defined as 
    \begin{align}
    \rho' \coloneqq \E_{U\gets\HaarUn{M}}\left[\bigotimes_{k=1}^N U\ket{k}\langle k|U^\dagger\right].
    \end{align}
    Then ${\rm TD}(\rho,\rho') \leq \frac{N(N+1)}{2M}$.
\end{lemma}

A formal proof of this lemma is included in~\Cref{sec:randomortho}.

\begin{corollary}\label{cor:ortho}
    Let $t=\poly(n)$ be a polynomial.  Let $\A^{(\cdot)}$ be any $t$-query quantum algorithm. Define
    \begin{align}
            p_1&\coloneqq\Pr_{\ket{\phi_k}\gets \HaarSt{2^n}, k\in [t]}\left[\A^{R_{\ket{\phi_1}},\dots, R_{\ket{\phi_t}}}(\ketbra{\phi_1}\otimes \dots \otimes \ketbra{\phi_t})\right],\\
            p_2&\coloneqq\Pr_{U\gets \HaarUn{2^n}}\left[\A^{R_{U\ket{1}},\dots, R_{U\ket{t}}}(U\ketbra{1}U^\dagger\otimes \dots \otimes U\ketbra{t}U^\dagger)\right].
    \end{align}

    Then $\abs{p_1-p_2} = \negl(n)$.
\end{corollary}

\begin{proof}
    This follows from the fact that $R_{\ket{\phi}}$ can be simulated using $2^{n/8}$ copies of $\ket{\phi}$ up to negligible error by~\Cref{lem:simref}. Therefore, plugging in $M=2^n$ and $N=2^{n/8}\cdot t$, by~\Cref{lem:randomortho} each query can perturb the state by at most $O(N^2/M) = \negl(n)$. 
    Thus, by induction, $\A$'s internal state when acting on orthogonal $U\ket{i}$ is $t\cdot \negl(n)=\negl(n)$ close in trace distance to $\A$'s internal state when acting on random $\ket{\phi_i}$. 
\end{proof}


\subsection{Complexity-Theoretic No-Cloning}

\begin{theorem}[Complexity-Theoretic No-Cloning~\cite{STOC:AarChr12}]\label{lem:comnocloning}
    Let $n\in\N$ and let $\epsilon\in(0,1)$ such that $1/\epsilon = o(2^n)$. Let $\ket{\psi}$ be an $n$-qubit state chosen uniformly at random from the Haar measure. Given one copy of $\ket{\psi}$, as well as oracle access to $P_{\ket{\psi}}$, a counterfeiter $C$ needs $\Omega(\sqrt{\epsilon}2^{n/2})$ queries to prepare a $2n$-qubit state $\rho$ such that a projector $V^{\otimes 2}_{\ket{\psi}}$ onto $\ketbra{\psi}^{\otimes 2}$ accepts $\rho$ with probability at least $\epsilon$. Here the probability is taken over the choice of $\ket{\psi}$ and the behavior of $V_{\ket{\psi}}^{\otimes 2}$ and $C$.
\end{theorem}

\section{Quantum Presampling Lemmas}
\label{sec:presampling}
In this subsection, we outline a technique for proving security of indistinguishability style games relative to fixed random functions. In particular, we adapt the postselected random oracle model from~\cite{TCC:GLLZ21}. This model will operate on functions over arbitrary domains, and so for the following definitions let $N,M\in \N$ be arbitrary, and let $\mathcal{F} \coloneqq \{f:[N]\to [M]\}$.
\begin{definition}
[Postselected random oracle model~\cite{TCC:GLLZ21}]
\label{def:pbfqrom}
    Let $\mathcal{O}^{f}_0$ and $\mathcal{O}^{f}_1$ be two (potentially stateful) oracle algorithms querying $f$. We say that $\mathcal{O}^f_0,\mathcal{O}_1^f$\mor{what is the definition of $\mathcal{O}^f$? An oracle that queries $f$?} are $(s,t,\epsilon)$-indistinguishable in the postselected random oracle model if the following holds:

    Let $\A^f=(\A_s^f,\A_t^f)$ be any pair of (possibly inefficient) quantum oracle algorithms making at most $s,t$ queries respectively. For $b\in \{0,1\}$, define the game $G_b(\A)$ as follows.
    \begin{enumerate}
        \item Sample a random function $f$ using the following process.
        \begin{enumerate}
            \item Sample $f$ uniformly at random from $\mathcal{F}$.
            \item Run $\A_s^{f} \to b$.
            \item If $b=0$, restart this process from (a).
        \end{enumerate}
        \item Run $\A_t^{\mathcal{O}^f_b} \to b'$.
        The output of the game is $b'$.
    \end{enumerate}
    Then 
    \begin{align}
    \max_{\mathcal{A}}\left|\Pr[G_0(\A)\to 1] - \Pr[G_1(\A)\to 1]\right| \leq \epsilon.
    \end{align}
\end{definition}

\begin{theorem}[Theorem 1 from~\cite{TCC:GLLZ21}]\label{thm:presampling}
    Let $\mathcal{O}_1^f,\mathcal{O}_2^f$ be two oracle algorithms making at most $t_\mathcal{O}$ queries to $f\in \mathcal{F}$. 
    Let $S,T\in \N$. Let $\epsilon,\gamma \in (0,1)$. If $\mathcal{O}_1^f,\mathcal{O}_2^f$ are $((S+\log \gamma^{-1})\cdot t\cdot t_{\mathcal{O}},t,(\epsilon-\gamma)/2)$-indistinguishable in the postselected random oracle model, then for all inefficient maps $ADV$ from functions to strings of length $S$, for all $T$-query quantum oracle algorithms $\A^{(\cdot)}$,
    \begin{align}
    \abs{\Pr_{f\gets \mathcal{F}}[\A^{\mathcal{O}_1^f}(ADV(f))\to 1] - \Pr_{f\gets \mathcal{F}}[\A^{\mathcal{O}_2^f}(ADV(f))\to 1]} \leq \epsilon.
    \end{align}
\end{theorem}

We adapt this technique to give the main tool we will use for arguing indistinguishability relative to a fixed random oracle.
\begin{corollary}\label{cor:fixedpresampling}
    Let $\mathcal{O}_1^f,\mathcal{O}_2^f$ be two oracle algorithms making at most $t_\mathcal{O}$ queries to $f$. 
    Let $S,T\in \N$. Let $\epsilon,\gamma \in (0,1)$. If $\mathcal{O}_1^f,\mathcal{O}_2^f$ are $((S+\log \gamma^{-1})\cdot t\cdot t_{\mathcal{O}},t,(\epsilon^2-\gamma)/2)$-indistinguishable in the postselected random oracle model, then for a random function $f$, with probability $\geq 1-\epsilon$ over $f \gets \mathcal{F}$, for all $T$-query quantum oracle algorithms $\A^{(\cdot)}$ and advice strings $c$ such that $|\A|+|c|+1\leq S$,
    \begin{align}
    \abs{\Pr[\A^{\mathcal{O}_1^f}(c)\to 1] - \Pr[\A^{\mathcal{O}_2^f}(c)\to 1]} \leq \epsilon.
    \end{align}
\end{corollary}

\begin{proof}
    Assume towards contradiction that for a random function $f$, with probability $> \epsilon$  over $f$, there exists an adversary $\A_f$ and an advice string $c_f$  such that $\abs{\A_f}+\abs{c_f}\leq S$ and
    \begin{align}
    \abs{\Pr[\A_f^{\mathcal{O}_1^f}(c_f)\to 1] - \Pr[\A_f^{\mathcal{O}_2^f}(c_f)\to 1]} > \epsilon.
    \end{align}
    We will call the set of such $f$: $GOOD$.
    
    We will define $b_f$ to be $1$ if and only if 
    \begin{align}
    \Pr[\A_f^{\mathcal{O}_1^f}(c_f)\to 1] > \Pr[\A_f^{\mathcal{O}_2^f}(c_f)\to 1].
    \end{align}
    In particular,
    $$\Pr[\A_f^{\mathcal{O}_1^f}(c_f)\to b_f] - \Pr[\A_f^{\mathcal{O}_2^f}(c_f)\to b_f] > \epsilon$$
    
    We can define an advice function $ADV(f) = \A_f.c_f.b_f$ for $f\in GOOD$ and $ADV(f) = \bot$ for $f\notin GOOD$. We will define $\mathcal{B}(ADV(f)) = \mathcal{B}(\A_f,c_f,b_f)$ to output $\A_f(c_f) \xor b_f \xor 1$. We will define $\mathcal{B}(\bot) = 0$. We have that for all $f\in GOOD$,
    \begin{equation}
        \begin{split}
            \Pr[\mathcal{B}^{\mathcal{O}_1^f}(ADV(f))\to 1]=\Pr[\mathcal{A}_f^{\mathcal{O}_1^f}(c_f)\to b_f]\\
            \Pr[\mathcal{B}^{\mathcal{O}_2^f}(ADV(f))\to b_f\xor 1]=\Pr[\mathcal{A}_f^{\mathcal{O}_2^f}(c_f)\to b_f].
        \end{split}
    \end{equation}

    We can then compute 
    \begin{equation}
    \begin{split}
            &\Pr_f[\mathcal{B}^{\mathcal{O}_1^f}(ADV(f))\to 1] - \Pr_f[\mathcal{B}_f^{\mathcal{O}_2^f}(ADV(f))\to 1]\\
            &= \Pr_f[f \in GOOD]\cdot \left(\Pr_f[\mathcal{B}^{\mathcal{O}_1^f}(ADV(f))\to 1|f\in GOOD] - \Pr[\mathcal{B}_f^{\mathcal{O}_2^f}(ADV(f))\to 1|f\in GOOD]\right)\\
            &+\Pr_f[f\notin GOOD]\cdot \left(\Pr_f[\mathcal{B}^{\mathcal{O}_1^f}(ADV(f))\to 1|f\notin GOOD] - \Pr[\mathcal{B}_f^{\mathcal{O}_2^f}(ADV(f))\to 1|f\notin GOOD]\right)\\
            &\geq \epsilon\cdot \left(\Pr_f[\mathcal{A}^{\mathcal{O}_1^f}(c)\to b_f|f\in GOOD] - \Pr[\mathcal{A}_f^{\mathcal{O}_2^f}(c)\to b_f|f\in GOOD]\right) + 0\\
            &> \epsilon^2,
    \end{split}
    \end{equation}
    which contradicts~\Cref{thm:presampling} and so we are done.
\end{proof}

\section{Separating Unitary Oracle}\label{sec:sepsec}
In this section, we introduce our oracle, and show a key lemma, \Cref{keylem}, that will be used later.

\subsection{Definition of Oracle}
\label{sec:unisep}
Our oracle will be parameterized by a sequence of unitaries $\{U_m\}_{m \in \N}$ and 
a sequence of random oracles $\{\RO_\ell\}_{\ell\in\mathbb{N}}$, where $\RO_\ell:\{0,1\}^\ell \times \{0,1\}^\ell \to \{0,1\}^\ell$. 
Each $U_\ell$ will be sampled from the Haar distribution over $\ell$-qubit unitaries and each 
$\RO_\ell$ will be uniformly sampled from the set of functions $\{0,1\}^\ell\times \{0,1\}^\ell\to \{0,1\}^\ell$.
Our oracle consists of two sequences of unitary oracles, $SG\coloneqq\{SG_\ell\}_{\ell\in\mathbb{N}}$ and $Mix\coloneqq\{Mix_\ell\}_{\ell\in\mathbb{N}}$, 
and the classical oracle $PSPACE$:
\begin{enumerate}
    \item $SG_{\ell}$: It is the unitary which swaps $\ket{0}$ \footnote{$\ket{0}$ is a shorthand for $\ket{0...0}$ with the appropriate length.}  and $\frac{1}{\sqrt{2^{\ell}}}\sum_{x\in \{0,1\}^{\ell}\setminus \{0\}}\ket{x}\ket{\phi_x}$, 
    where $\ket{\phi_x}\coloneqq U_{\ell}\ket{x}$, and acts as identity everywhere else. 
      That is,
    \begin{align}
     SG_{\ell}\coloneqq |0\rangle\langle\psi_m|+|\psi_{\ell}\rangle\langle0|+ (I-|0\rangle\langle 0|-|\psi_{\ell}\rangle\langle\psi_{\ell}|),  
    \end{align}
    where $|\psi_{\ell}\rangle\coloneqq\frac{1}{\sqrt{2^{\ell}}}\sum_{x\in\{0,1\}^{\ell}\setminus \{0\}}|x\rangle|\phi_x\rangle$.
  
    \item $Mix_{\ell}$: 
    It is the following unitary.
    \begin{align}
    &\sum_{x,y\in\{0,1\}^{\ell}}|x\rangle\langle x|\otimes|y\rangle\langle y|\otimes(|\phi_x\rangle\langle\phi_x|+|\phi_y\rangle\langle\phi_y|)\otimes X^{\RO(x,y)}\otimes X    \\
    &+\sum_{x,y\in\{0,1\}^{\ell}}|x\rangle\langle x|\otimes|y\rangle\langle y|\otimes(I-|\phi_x\rangle\langle\phi_x|-|\phi_y\rangle\langle\phi_y|)\otimes I\otimes I.
    \end{align}
    Here, $X$ is the Pauli $X$ operator and $X^r\coloneqq\bigotimes_{i=1}^\secpa X_i^{r_i}$ for any $\secpa$-bit string $r$.
    In other words, $Mix_{\ell}$ performs the following operation:
    \begin{align}
        \ket{x}\ket{y}\ket{\phi_x}\ket{z}\ket{b}&\rightarrow \ket{x}\ket{y}\ket{\phi_x}\ket{z\oplus \RO(x,y)}\ket{\bar{b}}\\
        \ket{x}\ket{y}\ket{\phi_y}\ket{z}\ket{b}&\rightarrow\ket{x}\ket{y}\ket{\phi_y}\ket{z\oplus \RO(x,y)}\ket{\bar{b}}
    \end{align}
    and acts as identity for other orthogonal states.
    \item $PSPACE$: an oracle for a $\mathbf{PSPACE}$-complete problem.\footnote{For the oracle, we use the normal font, and for the complexity class, we use the bold font.}
\end{enumerate}

\subsection{Key Lemma}
 
We will define the following oracles $SG_{\text{Samp}}\coloneqq\{SG_{\text{Samp},\ell}\}_{\ell \in \mathbb{N}}$ and $Ver_{\text{Samp}}\coloneqq\{Ver_{\text{Samp},\ell}\}_{\ell \in \mathbb{N}}$ which we will use to simulate $SG$ and $Mix$. These oracles will be further parameterized by some function $t$, which will represent the maximum number of times these oracles can be queried.

   \paragraph{Initialization:} 
    \begin{itemize}
        \item For each $\ell\in\mathbb{N}$, sample an $\ell$-qubit Haar random unitary $\wt{U}_\ell$. (Note that $\wt{U}_\ell$ are sampled independently from $U_\ell$).
        \item For each $\ell\in\mathbb{N}$, sample $\wt{\RO_{\ell}}:\{0,1\}^\ell\times\{0,1\}^\ell\to\{0,1\}^\ell$. (Note that
        $\wt{\RO_{\ell}}$ are sampled independently from $\RO_\ell$). 
        \item Sample $k_1, \dots k_t \gets \{0,1\}^\ell\setminus \{0\}$, where $t$ is some sufficiently large polynomial in $\ell$, and set $\wt{\ket{\phi_{k_1}}},...,\wt{\ket{\phi_{k_t}}}$, where 
        $\wt{\ket{\phi_{k_i}}}\coloneqq \wt{U}_\ell\ket{k_i}$. 
    \end{itemize}
     \begin{enumerate}
   \item $SG_{\text{Samp},\ell}$:
     \begin{itemize}
        \item On the $i^{th}$ query for $i\leq t$, output $\ket{k_i}\wt{\ket{\phi_{k_i}}}$.
        \item On the $i^{th}$ query for $i>t$, output $\ket{\bot}$.
    \end{itemize}
    \item $Ver_{\text{Samp},\ell}$:
    \begin{itemize}
        \item For all $i \in [t]$ perform the mapping
        $\ket{k_i}\wt{\ket{\phi_{k_i}}}\ket{b}\rightarrow \ket{k_i}\wt{\ket{\phi_{k_i}}}\ket{\bar{b}}$
    and acts as identity on all other orthogonal states.
    \end{itemize}

    \end{enumerate}
     Next, we will describe the oracle algorithms 
     $\text{Sim}_{{Mix}}^{Ver_{\text{Samp}}}\coloneqq\{\text{Sim}_{{Mix}_{\ell}}^{Ver_{\text{Samp},\ell}}\}_{\ell \in \mathbb{N}}$ and $\text{Sim}_{{SG}}^{SG_{\text{Samp}}}\coloneqq\{\text{Sim}_{{SG}_{\ell}}^{SG_{\text{Samp},\ell}}\}_{\ell \in \mathbb{N}}$. These algorithms will be parameterized by some function $t'$, which will be specified as required.
    \paragraph{$\text{Sim}_{{Mix}_{\ell}}^{Ver_{\text{Samp},\ell}}$}
   
    \begin{enumerate}
        \item Initialize 3 ancilla registers $\ket{0}_{\mathbf{F}}\ket{0}_{\mathbf{G}}\ket{0}_{\mathbf{H}}$.
        \item On input $\ket{x}_{\mathbf{A}}\ket{y}_{\mathbf{B}}\ket{\psi}_{\mathbf{C}}\ket{z}_{\mathbf{D}}\ket{b}_{\mathbf{E}}$, do the following:
        \begin{itemize}
            \item Query $Ver_{\text{Samp},\ell}$ 
            on input $|x\rangle_{\mathbf{A}}|\psi\rangle_{\mathbf{C}}|0\rangle_{\mathbf{F}}$.
            \item Query $Ver_{\text{Samp},\ell}$ 
            on input $|y\rangle_{\mathbf{B}}|\psi\rangle_{\mathbf{C}}|0\rangle_{\mathbf{G}}$.
            \item Compute the OR of the contents of registers $\mathbf{F}$ and $\mathbf{G}$, and XOR the result into the register $\mathbf{H}$.
        \end{itemize}
        \item Controlled on the register $\mathbf{H}$, add $\wt{\RO_\ell}(x,y)$ to the register $\mathbf{D}$, i.e.,
        $|x\rangle_{\mathbf{A}}|y\rangle_{\mathbf{B}}|z\rangle_{\mathbf{D}}\rightarrow |x\rangle_{\mathbf{A}}|y\rangle_{\mathbf{B}}|z\oplus \wt{\RO_\ell}(x,y)\rangle_{\mathbf{D}}$. 
        \item Controlled on the register $\mathbf{H}$, flip the bit of the register $\mathbf{E}$.
        \item Uncompute ancilla registers $\mathbf{F,G,H}$ by querying 
        $Ver_{\text{Samp},\ell}$ 
        again.
    \end{enumerate}
    
    \paragraph{$\text{Sim}_{{SG}_{\ell}}^{SG_{\text{Samp},\ell}}$}
    
    \begin{enumerate}
        \item Sample $\eta\gets Bin(t,1/2)$ where $Bin(t,1/2)$ is the binomial distribution. That is,
        \begin{align}
        \Pr[Bin(t,1/2)=\eta] = \frac{{t\choose \eta}}{2^t}.
        \end{align}
        \item Query 
        $SG_{\text{Samp},\ell}$ 
        a total of $\eta$ times, receiving output states 
        $\ket{k_1}\wt{\ket{\phi_{k_1}}}\dots \ket{k_\eta}\wt{\ket{\phi_{k_\eta}}}$.
        
        \item For $i>\eta$, set $k_i=0$. For ease of notation, we will define 
        $\wt{\ket{\phi_0}}=\ket{0}$.
        
        \item Generate the state
        \begin{align}
        \left(\sum_{\pi \in Sym(t)} \bigotimes_{i=1}^t
        |k_{\pi(i)}\rangle_{\mathbf{A_i}}
        |0\rangle_{\mathbf{B_i}}\right)\otimes 
        \left(\bigotimes_{j=1}^t |k_j\rangle_{\mathbf{C_j}}\wt{|\phi_{k_j}\rangle}_{\mathbf{D_j}}\right).
        \end{align}
        \item For all $i,j$, apply $SWAP_{\mathbf{B_i},\mathbf{D_j}}$ controlled on 
        registers $\mathbf{A_i},\mathbf{C_j}$ containing the same value in the standard basis. This results in the state
        \begin{align}
        \left(\sum_{\pi \in Sym(t)} \bigotimes_{i=1}^t|k_{\pi(i)}\rangle_{\mathbf{A_i}}\wt{|\phi_{k_{\pi(i)}}\rangle}_{\mathbf{B_i}}\right)\otimes \left(\bigotimes_{j=1}^t |k_j\rangle_{\mathbf{C_j}}|0\rangle_{\mathbf{D_j}}\right).
        \end{align}
        \item Next, trace out registers $\mathbf{C_1 D_1\dots C_t D_t}$. This will contain the state
        \begin{align}
        \sum_{\pi \in Sym(t)} \bigotimes_{i=1}^t\ket{k_{\pi(i)}}_{\mathbf{A_i}}\wt{\ket{\phi_{k_{\pi(i)}}}}_{\mathbf{B_i}}.
        \end{align}
       
        \item Finally, let $\mathbf{In}$ be the input register. 
        Apply the measurement $\{\Pi_{sym}^{t'+1},I-\Pi_{sym}^{t'+1}\}$\mor{$t$, not $t'$?} on $\mathbf{In,A_1B_1},\dots,\mathbf{A_tB_t}$, where $\Pi_{sym}^{t'+1}$ is projection onto the symmetric subspace.
        \mor{It is helpful to remind the reader the definition of $\Pi_{sym}$, because its definition is far away, and hidden in a text of a lemma.}
    \end{enumerate}

\begin{lemma}
\label{keylem}
 Let $n$ be the security parameter, and let
 $S(n),T(n)$ be any functions such that $T\leq 2^{n/10}$. With probability at least $1-O(S\cdot 2^{-n/4})$ over 
 $U_n$ and $\RO_n$, 
 for all $T$ query algorithms $\mathcal{A}(1^{\secpa}, c)$ with $|\A|+|c|\leq S$\mor{needs an explanation for the definition of $|\A,c|$}\eli{I added this to notation section}\mor{Yes, still $|\A,c|$ is not defined. Do you mean $|\A|+|c|$?}, where $c$ is the advice string, for all $\epsilon$, there exist  simulators $\text{Sim}_{{Mix}_{n}}^{Ver_{\text{Samp},n}},\text{Sim}_{{SG}_{n}}^{SG_{\text{Samp},n}}$ running in time $O(n,1/\epsilon)$ such that \saachi{ fixed params here}
    \begin{align}
    \Big|\Pr[\mathcal{A}^{Mix_n, SG_n}(1^n,  c)=1]-\Pr[\mathcal{A}^{\text{Sim}_{{Mix}_{n}}^{Ver_{\text{Samp},n}},\text{Sim}_{{SG}_{n}}^{SG_{\text{Samp},n}}}(1^n, c)=1]\Big|\leq \epsilon + O\left(\frac{T^{2.5}+S}{2^{n/8}}\right).
    \end{align}

\end{lemma}

If both $S$ and $T$ are polynomials, we obtain the following corollary.
\begin{corollary}\label{cor:keylem}
Let $n$ be the security parameter.
    Let $S(n),T(n)$ be two polynomials. With probability at least $1-\negl(n)$ over  $U_n$ and  $\RO_n$,  for all $T$ query algorithms $\mathcal{A}(1^{\secpa}, c)$ with $|\A|+|c|\leq S$\mor{definition of $|\A,c|$ should be given}, where $c$ is the advice string, for all polynomials $1/\epsilon$, there exist simulators $\text{Sim}_{{Mix}_{n}}^{Ver_{\text{Samp},n}},\text{Sim}_{{SG}_{n}}^{SG_{\text{Samp},n}}$ running in time $\poly(n,1/\epsilon)$ such that for all sufficiently large $n$,
  \begin{align}
    \Big|\Pr[\mathcal{A}^{Mix_n, SG_n}(1^n,  c)=1]-\Pr[\mathcal{A}^{\text{Sim}_{{Mix}_{n}}^{Ver_{\text{Samp},n}},\text{Sim}_{{SG}_{n}}^{SG_{\text{Samp},n}}}(1^n, c)=1]\Big|\leq \epsilon.
    \end{align} 
\end{corollary}

Before jumping into the proof of~\Cref{keylem}, we will formally state an important lemma
\begin{lemma}\label{lem:sgsim}
    There exists an efficient oracle algorithm $\Sim^{(\cdot)}(1^t)$ such that the following is true:
    Let $n$ be any natural number.
    Let $\{\ket{\phi_k}\}_{k\in \{0,1\}^n}$ be any collection of states. For a function $f:\{0,1\}^n \to \{0,1\}^n=[2^n]$, define the state 
    \begin{align}
    \ket{\phi^f}\coloneqq\frac{1}{\sqrt{2}}\ket{0} - \frac{1}{\sqrt{2}}\frac{1}{\sqrt{2^n-1}}\sum_{k\in \{0,1\}^n\setminus \{0\}}\omega_{2^n}^{f(k)}\ket{k}\ket{\phi_k}.
    \end{align}
    Let $Gen^{\{\ket{\phi_k}\}_k}$ be the CPTP oracle which acts as follows
    \begin{enumerate}
        \item Sample $k\gets \{0,1\}^n\setminus \{0\}$.
        \item Output $\ket{k}\ket{\phi_k}$.
    \end{enumerate}
    Then for all $t$, we have
    \begin{align}
    \E_f\left[|\phi^f\rangle\langle\phi^f|^{\otimes t}\right] = \Sim^{Gen^{\{|\phi_k\rangle\}_k}}(1^t).
    \end{align}
\end{lemma}

\Cref{lem:sgsim} is proven and explained in more detail in~\Cref{sec:proof_lem_sgsim}.

We then proceed to our proof of~\Cref{keylem}.

\begin{proof}
    We will let $\VerSampn,\SGSampn,\SimMixn,\SimSGn$ be as in the previous section with parameters $t=T$ and $t'= \frac{c_{\epsilon}}{\epsilon}$ for some sufficiently large $c_{\epsilon}$.
    
    We will establish the proof of Lemma \ref{keylem} through the following sequence of hybrids:\\
    
    \noindent{\textbf{Hybrid 0:}} $\mathcal{A}^{Mix_n, SG_n}(1^n,c)$. \\
    \noindent{\textbf{Hybrid 1:}} 
    Same as \textbf{Hybrid 0}, but all calls to $U_{n}$ in $Mix_n$ and $SG_n$ are replaced with a freshly sampled Haar random unitary $\wt{U}_{n}$ over $n$ qubits.  Define $\wt{SG}_n$ and $\wt{Mix}_n$ to be the same as $SG_n$ and $Mix_n$, except that queries to $\U_{n}$ are replaced by queries to $\wt{U}_{n}$.\\ 
     \begin{claim}
    With probability $\geq 1 - 2^S\exp\left(-2^{n/2}\right)$ over $U_{n}$,  and with probability 1 over  $\RO_{n}$, for all sufficiently large $n$,
    \begin{align}
    \max_{\mathcal{A}, c}\abs{\Pr[1\gets{\bf Hybrid~1}(1^\secpa)] - \Pr[1\gets{\bf Hybrid~0}(1^\secpa)]} \leq \frac{1}{2^{n/8}}.
    \end{align} 
\end{claim}
\begin{proof}
Let $f(U_n)=\Pr[1\gets\textbf{Hybrid 0}(1^\secpa)]$. Then,
~\Cref{lem:lipshitz} implies that $f$ is $2T$ Lipshitz. Now observe that $\Pr[1\gets\textbf{Hybrid 1}(1^\secpa)]=\mathbb{E}_{{U_n}}\Pr[1\gets\textbf{Hybrid 1}(1^\secpa)]$.

Invoking the strong concentration of the Haar measure in \Cref{thm:conc} with $t = \delta$ and $L = 2T$ gives us 
\begin{align}
        \Pr_{U_n}\left[
        \left|\Pr_{\wt{U}_n}[1\gets\textbf{Hybrid 1}(1^\secpa)] - \Pr[1\gets\textbf{Hybrid 0}(1^\secpa)]\right| \geq \delta\right] \\
        \leq 2\exp\left(-\frac{(2^{n} - 2)\delta^2}{96T^2}\right)
        \leq \exp\left(-2^{n/2}\right).
\end{align}

for all sufficiently large $n$.
\mor{Are you actually showing not lemma but corollary? I mean are you assuming $T=poly$?}\takashi{I got the same impression. Otherwise, $T$ should remain in the above bound.}
Setting $\delta = 2^{-n/8}$ and taking the union bound over all adversaries $\A$ and all $c$ gives us that, with probability at least $1-2^{S}\cdot \exp\left(-2^{n/2}\right)$ 
over the choice of $U_{n}$ and with probability $1$ over the choice of $\RO_n$, for all sufficiently large $n$
\begin{align}
\max_{\mathcal{A}, c}\abs{\Pr[1\gets\textbf{Hybrid 1}(1^\secpa)] - \Pr[1\gets\textbf{Hybrid 0}(1^\secpa)]} \leq \frac{1}{2^{n/8}}+\exp\left(-2^{(n/2)+1}\right).
\end{align}
\mor{for all sufficiently large $n$?}
\end{proof}

    \noindent{\textbf{Hybrid 2:}} Same as \textbf{Hybrid 1}, except that the oracle $\wt{SG}_n$ is replaced 
    with the following oracle $SG'_n$:
    \begin{itemize}
        \item  $SG'_n:$ Let 
        $\wt{\ket{\phi_x}}=\wt{U}_n\ket{x}$. 
        Then swap $\ket{0}$ and 
        $\frac{1}{\sqrt{2^n}}\sum_{x\in \{0,1\}^n}\omega_{2^n}^{\RO^*_n(x)}\ket{x}\wt{\ket{\phi_x}}$, 
        and act as identity everywhere else. Here, $\RO^*_n$ is a freshly sampled random oracle that's independent of $\RO_n$. \footnote{Here $\omega_{2^n}$ is the $2^n$-th root of unity. 
        Moreover, we interpret $\RO_n(x)\in\{0,1\}^n$ as an element of $[2^n]$.} 
    \end{itemize}
    \begin{claim}
 For all $n$ and for all $U_n,\RO_n$,        
 \begin{align}
\Pr[1\gets{\bf Hybrid~1}(1^\secpa)] = \Pr[1\gets{\bf Hybrid~2}(1^\secpa)].
 \end{align}
 \mor{for all $n$?}
    \end{claim}
    \begin{proof}
    \eli{This is a little sketchy, we need to argue that $\wt{Mix}_n$ is phase invariant otherwise we can't just replace. Here's another attempt:} We observe that ${\bf Hybrid-2}$ is actually identical to ${\bf Hybrid-1}$, except instead of sampling $\wt{U}_n$ as a Haar random unitary over $n$ qubits, we instead sample ${U}'_n$ as a Haar random unitary over $n$ qubits and set $\wt{U}_n = U'_n \cdot S^{\RO^*}$ where $S^{RO^*}$ is the unitary which maps $\ket{x} \to \omega_{2^n}^{\RO^*(x)}\ket{x}$. In particular, since the behavior $\wt{Mix}_n$ is independent of the phase of $\wt{U}\ket{x}$ for all $x$, $\wt{Mix}_n$ acts the same when instantiated with either $\wt{U}_n$ or $U'_n$.\eli{Old version:
        Recall that $\frac{1}{\sqrt{2^n}}\sum_{x\in \{0,1\}^n}\ket{x}\wt{\ket{\phi_x}}=\frac{1}{\sqrt{2^n}}\sum_{x\in \{0,1\}^n}\ket{x} \wt{U}_n\ket{x} $ and when $\wt{U}_n$ is haar-random, for any fixed  $\RO^*$,  $\wt{U}_n\sum_{x\in\{0,1\}^n}\omega_{2^n}^{\RO^*_n(x)}\ket{x}\bra{x}$ is haar random from unitary invariance. So $\wt{SG}_n$ and $SG'_n$ are identical. Since $\wt{Mix}_n$ remains unchanged,  the 2 hybrids are identical. }
    \end{proof}
    \noindent{\textbf{Hybrid 3:}} Same as \noindent{\textbf{Hybrid 2:}}, except $SG'_n$ is simulated by the simulator $\text{Sim}_{{SG}_{n}}^{SG_{\text{Samp},n}}$.
    \begin{claim}
        For all $n$, with probability $1$ over $U_n$ and   
        $\RO_n$, 
        \begin{align}
        \abs{\Pr[1\gets{\bf Hybrid~3}(1^\secpa)] - \Pr[1\gets{\bf Hybrid~2}(1^\secpa)]}\leq \epsilon.
        \end{align}
        \mor{for all $n$?}
    \end{claim}
    \begin{proof}
        First, observe that swapping between $\ket{0}$ and $\frac{1}{\sqrt{2^n}}\sum_{x\in \{0,1\}^n\setminus \{0\}}\omega_{2^n}^{\RO^*_n(x)}\ket{x}\wt{\ket{\phi_x}}$ is the same as applying $R_{\ket{\phi-}}$, where $\ket{\phi-}$ is $\ket{0}-\frac{1}{\sqrt{2^n}}\sum_{x \in \{0,1\}^n\setminus \{0\}}\omega_{2^n}^{\RO^*_n(x)}\ket{x}\wt{\ket{\phi_x}}$.  Let $\mathcal{B}$ be the algorithm from~\Cref{lem:simref}. We know that
        \begin{align}
        \|R_{\ket{\phi-}}(\cdot)R_{\ket{\phi-}}^\dagger - \mathcal{B}(\ket{\phi-}^{\otimes t'},\cdot)\|_{\diamond} \leq O(1/t'),
        \end{align} 
        where $t =\poly(\secpa)$. 
        Now, from \Cref{lem:sgsim}, we can conclude that the final state prepared by $\text{Sim}_{{SG}_n}^{SG_{\text{Samp},n}}$, i.e., 
        \begin{align}
        \left(\sum_{\pi \in Sym(t)} \bigotimes_{i=1}^t\ket{k_{\pi(i)}}_{\mathbf{A_i}}\wt{\ket{\phi_{k_{\pi(i)}}}}_{\mathbf{B_i}}\right)\left(\sum_{\pi \in Sym(t)} \bigotimes_{i=1}^t\bra{k_{\pi(i)}}_{\mathbf{A_i}}\wt{\bra{\phi_{k_{\pi(i)}}}}_{\mathbf{B_i}}\right)
        \end{align}
        is identical as a mixed state to $\ket{\phi-}\bra{\phi-}^{\otimes t}$.  
        By the construction of $\SimSGn$,
        \begin{align}
        \|SG'_n - \text{Sim}_{{SG}_{n}}^{SG_{\text{Samp},n}}\|_{\diamond} \leq O(1/t').
        \end{align}
        Therefore, by setting $t'$ sufficiently large and applying~\Cref{fact},\mor{Link is broken} the claim follows.
    \end{proof}
     \noindent{\textbf{Hybrid 4:}} Same as \noindent{\textbf{Hybrid 3}} except $\wt{Mix}_n$ is replaced with the simulator $\text{Sim'}_{{Mix}_n}^{Ver_{\text{Samp},n}}$, which is exactly the same as $\text{Sim}_{{Mix}_n}^{Ver_{\text{Samp},n}}$, except that controlled on register $\textbf{H}$,  $\RO_n(x,y)$ is added to the register $\mathbf{D}$ instead of $\wt{\RO}_n(x,y)$, i.e.,
        $|x\rangle_{\mathbf{A}}|y\rangle_{\mathbf{B}}|z\rangle_{\mathbf{D}}\rightarrow |x\rangle_{\mathbf{A}}|y\rangle_{\mathbf{B}}|z\oplus \RO_n(x,y)\rangle_{\mathbf{D}}$.
        $\text{Sim'}_{{Mix_n}}^{Ver_{\text{Samp},n}}$ is precisely defined as follows:
       \begin{enumerate}
        \item Initialize 3 ancilla registers $\ket{0}_{\mathbf{F}}\ket{0}_{\mathbf{G}}\ket{0}_{\mathbf{H}}$.
        \item On input $\ket{x}_{\mathbf{A}}\ket{y}_{\mathbf{B}}\ket{\psi}_{\mathbf{C}}\ket{z}_{\mathbf{D}}\ket{b}_{\mathbf{E}}$, do the following:
        \begin{itemize}
            \item Query $Ver_{\text{Samp},n}$ 
            on input $|x\rangle_{\mathbf{A}}|\psi\rangle_{\mathbf{C}}|0\rangle_{\mathbf{F}}$.  
            \item Query $Ver_{\text{Samp},n}$ 
            on input $|y\rangle_{\mathbf{B}}|\psi\rangle_{\mathbf{C}}|0\rangle_{\mathbf{G}}$.
            \item Compute the OR of the contents of registers $\mathbf{F}$ and $\mathbf{G}$, and XOR the result into the register $\mathbf{H}$.
        \end{itemize}
        \item Controlled on the register $\mathbf{H}$, add $\RO_n(x,y)$ to the register $\mathbf{D}$, i.e.,
        $|x\rangle_{\mathbf{A}}|y\rangle_{\mathbf{B}}|z\rangle_{\mathbf{D}}\rightarrow |x\rangle_{\mathbf{A}}|y\rangle_{\mathbf{B}}|z\oplus \RO_n(x,y)\rangle_{\mathbf{D}}$. 
        \item Controlled on the register $\mathbf{H}$, flip the bit of the register $\mathbf{E}$.
        \item Uncompute ancilla registers $\mathbf{F,G,H}$ by querying 
        $Ver_{\text{Samp},n}$ 
        again.
    \end{enumerate}\takashi{The definition seems say that $\text{Sim}_{{Mix}_m}^{Ver_{\text{Samp},m}}$ always uses $\RO$ instead of $\RO'$. I'm wondering if this is really what you intend. From what I can see from its proof, I guess what you actually intend may be that it uses $\RO$ when $x \in \text{Supp}(SG_{\text{Samp},m})$ or $y \in \text{Supp}(SG_{\text{Samp},m})$, but otherwise uses $\RO'$. For this definition to make sense, we should also force all the queries to $SG_{\text{Samp},m}$ to happen at the beginning wlog. 
        }\saachi{I think the simulator for SG has this guarantee. }\takashi{I see, you are right. I misunderstood the definition of $Ver_{\text{Samp},m}$; I thought it applies $\ket{k}\wt{\ket{\phi_{k}}}\ket{b}\rightarrow \ket{k}\wt{\ket{\phi_{k}}}\ket{\bar{b}}$ for all $k$ rather than only for $k_1,...,k_t$. Perhaps, it may help to give a brief explanation on why the difference between $\wt{Mix}_n$ and $\text{Sim'}_{{Mix_n}}^{Ver_{\text{Samp},n}}$ may occur only on those states 
        for avoiding such a misunderstanding.} \saachi{elaborated on this hybrid}
    \begin{claim}
        With probability $\geq 1 - \exp\left(-2^n\right)$ \takashi{How was this probability derived? Do we implicitly do the averaging argument?} over $U_{n},\RO_{n}$,    
        \begin{align}
        \max_{\mathcal{A},c}\abs{\Pr[1\gets{\bf Hybrid~3}(1^\secpa)] - \Pr[1\gets{\bf Hybrid~4}(1^\secpa)]}\leq O\left(\frac{T^2}{2^n}\right).
        \end{align}
    \end{claim}
    \begin{proof}
    Let $\{k_1, \dots k_t\}$ be the set of all inputs to $SG_{\text{Samp,n}}$ chosen during \textbf{Initialization}.  Since $Ver_{\text{Samp},n}$ only performs the mapping 
        $\ket{k_i}\wt{\ket{\phi_{k_i}}}\ket{b}\rightarrow \ket{k_i}\wt{\ket{\phi_{k_i}}}\ket{\bar{b}}$ for $k_i \in \{k_1, \dots k_t\}$,  
     by construction of $\text{Sim'}_{{Mix_n}}^{Ver_{\text{Samp},n}}$, the oracles $\wt{Mix}_n$ and $\text{Sim'}_{{Mix_n}}^{Ver_{\text{Samp},n}}$ are identical on all inputs of the following form: 
       \begin{itemize}
           \item $\ket{x}\ket{y}\wt{\ket{\phi_x}}\ket{z}\ket{b}, \forall x \in \{k_1, \dots k_t\}$.
           \item $\ket{x}\ket{y}\wt{\ket{\phi_y}}\ket{z}\ket{b}, \forall y  \in \{k_1, \dots k_t\}$.
       \end{itemize}
    which implies that $\wt{Mix}_n$ and $\text{Sim'}_{{Mix_n}}^{Ver_{\text{Samp},n}}$ only differ on all inputs of the form $\ket{x}\ket{y}\wt{\ket{\phi_x}}\ket{z}\ket{b}, \forall x \notin \{k_1, \dots k_t\}$ and $\ket{x}\ket{y}\wt{\ket{\phi_y}}\ket{z}\ket{b}, \forall y \notin \{k_1, \dots k_t\}$. Let the subspace spanned by all these differing inputs be $S$. Invoking \Cref{cor:o2h}, we can conclude that 
    \[
   \Big| \Pr[\A^{\wt{Mix}_n}=1]-\Pr[\A^{\text{Sim'}_{{Mix_n}}^{Ver_{\text{Samp},n}}}=1]\Big|\leq 2\sum_{i=0}^{T-1}\sqrt{||\Pi_{\cal{S}}\ket{\psi_i^{\wt{U_n}}}||^2}
    \]
where $\Pi_S$ is the projector onto $S$ and $\ket{\psi_i^{\wt{U}_n}}$ is as in~\Cref{cor:o2h}.
    Since $\wt{\RO}$ is queryable only through $\wt{Mix}_n$, for each $i$, $||\Pi_{\cal{S}}\ket{\psi_i^{\wt{\RO_n}}}||$ is at most the probability that the $n$ qubit state $\wt{\ket{\phi_k}}$ for $k \notin \{k_1, \dots k_t\}$ is cloned by $\cal{A}$. From Theorem \ref{lem:comnocloning}, we know that for any adversary, this probability is upper bounded by $O\left(\frac{T^{2}}{2^{n}}\right)$. Therefore, 
    \[
    \Big| \Pr[\A^{\wt{Mix}_n}=1]-\Pr[\A^{\text{Sim'}_{{Mix_n}}^{Ver_{\text{Samp},n}}}=1]\Big|\leq 2 \sqrt{T}\cdot O\left(\frac{T^2}{2^{n}}\right).
    \]\takashi{Is this trivial? I assume this is done based on some kind of BBBV-style argument or one-way to hiding lemma, but then I feel there may be an additional square loss. I think more clarification is needed unless there's any obvious argument I'm missing.}\saachi{need to  add o2h }
     
    \end{proof}
Thus, so far we have established that, with probability $\geq 1-2^S\exp(-2^n)$ over $U_n, \RO_n$, 
\begin{align}
\max_{\mathcal{A},c}\abs{\Pr[1 \gets \textbf{Hybrid 4}(1^\secpa)]- \Pr[1 \gets \textbf{Hybrid 0}(1^\secpa)]}\leq \epsilon + O\left(\frac{T^{2.5}}{2^{n/8}}\right). 
\end{align}

\takashi{
(This is just a suggestion for improving readability. We do not necessarily have to revise in this way.)
As far as I can tell, what is needed to complete the rest of the proof is the following style claim. For all but exponentially small fraction of $\RO$, for any size bounded adversary $(\mathcal{A},c)$, the adversary (with no query) can distinguish $(k_1,...,k_t,\RO(k_1),....,\RO(k_t))$ and $(k_1,...,k_t,u_1,....,u_t)$
only with an exponentially small advantage, where $k_1,...,k_t$ are uniformly random inputs and $u_1,...,u_t$ are uniformly random outputs. In that case, I feel stating this as a lemma and just invoke it in the main proof would be much easier to follow since in this way we can completely remove the mentioning to compressed oracle in the main proof. (We will use the compressed oracle in the proof of the above lemma anyway, but I feel separating the main proof and the proof of the lemma would be easier to follow.)
}

Now we will introduce additional hybrids to switch to the post selected random oracle model
in order to complete the proof. \\
 \noindent{\textbf{Hybrid 5:}} In this hybrid, we move to the post selected random oracle model (\Cref{def:pbfqrom}) to remove the advice $c$. In particular, we will let $s,t$ be functions to be set later. In \noindent{\textbf{Hybrid 5}} the adversary $\mathcal{A}$ is modelled as $(\mathcal{A}_s,\mathcal{A}_t)$, and has access to the oracle $(\text{Sim'}_{Mix}^{Ver_{\text{Samp}}},\text{Sim}_{SG}^{SG_{\text{Samp}}})$ as defined above. Here $\A_s$ makes at most $s$ queries to its oracle and $\A_t$ makes at most $t$ queries to its oracle. Formally, \textbf{Hybrid 5} is defined as follows:  
 \begin{enumerate}
     \item Sample $\RO_n$ as follows: 
 
 \begin{enumerate}
     
            \item Sample $\RO_n$ uniformly at random.
            \item Run $\A_s^{\RO_n} \to b$.
            \item If $b=0$, restart this process from item (a).

 \end{enumerate}
 \item Run $\mathcal{A}^{\Big(\text{Sim'}_{Mix}^{Ver_{\text{Samp}}^{\RO_n}},\text{Sim}_{SG}^{SG_{\text{Samp}}^{\RO_n}}\Big)}_t\rightarrow b'$.
 \end{enumerate}
\mor{Did we show Hybrid 4=Hybrid 5?}
\takashi{It's a bit confusing, but we are not arguing that Hybrid 4 and Hybrid 5 are indistinguishable. (In fact, even the form of the adversary is different in these hybrids.)
Instead, we are considering Hybrid 5-7 to argue that Hybrid 4 is indistinguishable from Hybrid 4' where $\RO'$ is always used instead of $\RO$ in the simulation of $Mix$. 
For avoiding such confusion, it may be better to use different names for Hybrid 5-7.
}
 
 \noindent{\textbf{Hybrid 6:}} This is the same as {\bf Hybrid 5}, but $\RO$ is modeled as a compressed oracle with database register $\ket{D}$. 
 \begin{claim}
   \begin{align}
   \Pr[1\gets {\bf Hybrid~6}(1^\secpa)]=\Pr[1\gets {\bf Hybrid~5}(1^\secpa)].
   \end{align}
   \mor{for all $n$?}
 \end{claim}
\begin{proof}
    This follows directly from Lemma \ref{lem:comporacle}.
\end{proof}
 
  \noindent{\textbf{Hybrid 7:}} This is the same as {\bf Hybrid 6}, but $(\mathcal{A}_s,\mathcal{A}_t)$,  has access to the oracle $(\text{Sim}_{{Mix}_n}^{Ver_{\text{Samp},n}},\text{Sim}_{{SG}_n}^{SG_{\text{Samp},n}})$. 
 \begin{claim}
     \begin{align}
     \max_{(\mathcal{A}_s,\mathcal{A}_t)}\abs{\Pr[1\gets {\bf Hybrid~6}(1^\secpa)]
     -\Pr[1\gets {\bf Hybrid~7}(1^\secpa)]}\leq O\left(\frac{s}{2^{n/2}}\right).
     \end{align}
     \mor{what is $s$?}
 \end{claim}
 \begin{proof}
 
 First, from Theorem \ref{thm:presampledatabase}, we can conclude that, after $\mathcal{A}_s$ postselects on the $\RO$ such that $\mathcal{A}_s^{\RO}\rightarrow 1$, the state on the database register $\rho_D$ will always satisfy $\Tr(\Pi_{\leq s} \rho_D) = 1$.\mor{"the register lies within the projector" does not make sense} Now, define the projection 
 $\Pi_{\{k_1, \dots, k_t\}\cap D= \emptyset}$ 
 onto databases which contain no elements that are in $\{k_1, \dots k_t\}$. 
 Since $k_1, \dots k_t \leftarrow \{0,1\}^n\setminus \{0\}$ is a set of random elements,  
 for all $D$ such that $|D|\leq s$\mor{what is $s$?},
 \begin{align}
     \Pr[D\cap \{k_1, \dots k_t\}=\emptyset]&= 1-  \Pr[\exists x \in D\cap \{k_1, \dots k_t\}]\\
     &\geq 1-\sum_{x\in \{k_1, \dots k_t\}}\Pr[x \in D]\\
     &\geq 1 -\frac{s^2}{2^n}.
 \end{align}
 Note that, if this projector is applied after initialization, then $\textbf{Hybrid 6}$ and $\textbf{Hybrid 7}$ are identically distributed, since in $\textbf{Hybrid 7}$ the fresh random oracle will only be queried on $(x,y)$ such that either $x\in\{k_1, \dots k_t\}$ or $y\in \{k_1, \dots k_t\}$. 
 And therefore the lemma follows by gentle measurement.
 \end{proof}
  Then, combining $\textbf{Hybrid 5}$, $\textbf{Hybrid 6}$ and $\textbf{Hybrid 7}$ and applying triangle inequality implies that 
 \begin{align}
 \max_{\mathcal{A}_s,\mathcal{A}_t}\abs{\Pr[1\gets {\bf Hybrid~5}(1^\secpa)]-\Pr[1\gets {\bf Hybrid~7}(1^\secpa)]}\leq O(s\cdot 2^{-n/2}),
 \end{align}
 which is to say that the oracles 
 $(\text{Sim}_{{Mix}_n}^{Ver_{\text{Samp},n}},\text{Sim}_{{SG}_n}^{SG_{\text{Samp},n}}),$ $(\text{Sim'}_{{Mix}_n}^{Ver_{\text{Samp},n}},\text{Sim}_{{SG}_n}^{SG_{\text{Samp},n}})$ are $(s,t,O(s\cdot 2^{-n/2}))$ indistinguishable in the post selected random oracle model.   

 Setting $\gamma = 2^{-n}, \epsilon=O\left(\frac{S}{2^{n/4}}\right)$, from \Cref{cor:fixedpresampling}, this implies that, with probability $\geq 1-O\left(\frac{S}{2^{n/4}}\right) $ over $\RO_n$, for all $T$ query algorithms $\mathcal{A}$ and
 advice strings $c$ such that $|A|+|c|\leq S$, 
 \begin{align}\label{eq:sim_to_sim_prime}
\abs{\Pr[\mathcal{A}^{(\text{Sim}_{{Mix}_n}^{Ver_{\text{Samp},n}},\text{Sim}_{{SG}_n}^{SG_{\text{Samp},n}})}=1]-\Pr[\mathcal{A}^{(\text{Sim'}_{{Mix}_n}^{Ver_{\text{Samp},n}},\text{Sim}_{{SG}_n}^{SG_{\text{Samp},n}})}=1]}\leq O\left(\frac{S}{2^{n/4}}\right).
 \end{align}

Now, recall that we have already established that
with probability $\geq 1-2^S \exp(-2^{n/2})$ over $U_n, \RO_n$, 
\begin{align}
\max_{\mathcal{A},c}\abs{\Pr[1 \gets \mathcal{A}^{Mix, SG}(1^n,c)]- \Pr[\mathcal{A}^{(\text{Sim'}_{{Mix}_n}^{Ver_{\text{Samp},n}},\text{Sim}_{{SG}_n}^{SG_{\text{Samp},n}})}(1^n,c)=1]}\leq \epsilon + O\left(\frac{T^{2.5}}{2^{n/8}}\right), 
\end{align} 

which implies that with probability $\geq 1-2^S \exp(-2^{n/2})-O\left(\frac{S}{2^{n/4}}\right)\geq 1-O\left(\frac{S}{2^{n/4}}\right)$ (using the fact that $2^S\exp(-2^{n/2})\leq S\cdot 2^{-n/2}$) over $U_n, \RO_n$, for all $T$ query algorithms $\mathcal{A}$ and advice strings $c$
\begin{align}
\abs{\Pr[1 \gets \mathcal{A}^{Mix_n, SG_n}(1^n,c)]- \Pr[1\gets\mathcal{A}^{(\text{Sim}_{{Mix}_n}^{Ver_{\text{Samp},n}},\text{Sim}_{{SG}_n}^{SG_{\text{Samp},n}})}(1^n,c)]}\\
\leq \epsilon + O\left(\frac{T^{2.5}}{2^{n/8}}\right)+ O\left(\frac{S}{2^{n/4}}\right)\leq \epsilon + O\left(\frac{T^{2.5}+S}{2^{n/8}}\right), 
\end{align}

\end{proof}

\section{Quantum Lightning Exists Relative to $(SG,Mix,PSPACE)$.}\label{sec:lightning}

In this section, we show quantum lightning exists relative to the oracle $(SG,Mix,PSPACE)$.

\begin{theorem}\label{thm:lightning}
    With probability $1$ over $\{U_n\}_{n \in \N}$ and $\{\RO_n\}_{n \in \N}$, quantum lightning exists relative to $(SG,Mix,PSPACE)$.
\end{theorem}

\begin{proof}
    Our construction is as follows.
    \begin{enumerate}
        \item $\mathsf{Mint}(1^n)$: Query $|0\rangle$ to $SG_n$ to get 
        , where $|\phi_x\rangle\coloneqq U_n|x\rangle$, and measure the first register to
        get the pair $(k,\ket{\phi_k})$ of the measurement result $k$ and the post-measurement state $|\phi_k\rangle$. 
        Output the serial number $\sigma\coloneqq k$ and the bold state $\ket{\$}\coloneqq \ket{\phi_k}$.
        \item $\mathsf{Ver}(1^n,\sigma, \ket{\$})$: Query $\ket{\sigma}\ket{0}\ket{\$}\ket{0}\ket{0}$ to $Mix_n$, and measure the last register.
        Output the measurement result.
    \end{enumerate}

    \if0
    Note that if $SG_n,Mix_n$ are defined with respect to $U_n$ and $\ket{\phi_k}=U_n\ket{k}$, then this exactly implements
    \begin{enumerate}
        \item $\mathsf{Mint}(1^n)$: Sample $k\gets \{0,1\}^n$. Output $\ket{\phi_k}$.
        \item $\mathsf{Ver}(1^n,\sigma, \ket{\$})$: Apply the measurement $\{\ketbra{\phi_\sigma},I-\ketbra{\phi_\sigma}\}$. Output $1$ if and only if the measurement produces the first result.
    \end{enumerate}
    \fi

    Correctness follows by construction.
    Next, we show security. Let $\A$ be an attacker against the quantum lightning security of $(\mathsf{Mint},\mathsf{Ver})$. 
    In particular, we will take $\A^{SG_n,Mix_n,PSPACE}(1^n,c)$ to be a uniform algorithm with non-uniform advice $c$ and making $T$ queries to its oracles, where $|\A|+|c| \leq \poly(n)$. 
    We will first show that with all but negligible probability over $U_n,\RO_n$ (and for any fixed $\{U_\ell\}_{\ell\in \N\setminus \{n\}},\{\RO_\ell\}_{\ell \in \N\setminus \{n\}}$), for all $c$, 
    \begin{align}
    \Pr[\A(1^\secpa,c)\text{ wins the quantum lightning game}]\leq \negl(n). 
    \end{align}
    In particular, let $p(n)$ be any polynomial. 
    We will show that for all sufficiently large $n$, 
    \begin{align}
    \Pr[\A(1^\secpa,c)\text{ wins the quantum lightning game}]\leq 1/p(n).
    \end{align} \takashi{This only holds for all sufficiently large $n$. The same comment applies to many parts of the following proof (and the other sections as well). If it's bothering to state this every time, I'm fine with omitting it, but I feel we should at least mention it at some point.}
    To show it, we define the following sequence of games.

    \begin{enumerate}
        \item $\mathbf{G}_1$: It is the original security game of quantum lightning. 
        \begin{enumerate}
            \item Run $\A^{SG,Mix,PSPACE}(1^n,c)\to (x,\rho_{\mathbf{AB}})$.
            \item Apply the measurement $\{\ketbra{\phi_x,\phi_x}, I - \ketbra{\phi_x,\phi_x}\}$ to $\rho_{\mathbf{AB}}$. 
            Output $\top$ if measurement results in the first outcome.
            Otherwise, output $\bot$.
        \end{enumerate}
        \item $\mathbf{G}_2$: It is the same as $\mathbf{G}_1$ except that it replaces all calls to $SG_n$ and $Mix_n$ with queries to 
      $\text{Sim}_{{SG}_{n}}^{SG_{\text{Samp},n}}$ and $\text{Sim}_{{Mix}_n}^{Ver_{\text{Samp},n}}$ respectively, from~\Cref{cor:keylem} instantiated with error $\epsilon=\frac{1}{6p(n)}$. \takashi{$\epsilon=\frac{1}{6p(n)}$?} 
        \item $\mathbf{G}_3$: It is the same as $\mathbf{G}_2$ except that it outputs $\bot$ 
        if the initialization for $\SGSampn$ samples the same $k$ twice.
        \item $\mathbf{G}_4$: It is the same as $\mathbf{G}_3$ except that it samples each $\ket{\phi_k}$ independently at random instead of guaranteeing they are all orthogonal. In particular, $\mathbf{G}_4$ is the quantum lightning game where the oracles $SG_n,Mix_n$ are replaced by $\text{Sim}_{{SG}_{n}}^{SG'_{\text{Samp},n}},\text{Sim}_{{Mix}_n}^{Ver'_{\text{Samp},n}}$ with $\SGSampn'$ and $\VerSampn'$ defined as follows.
        \begin{enumerate}
            \item On initialization, sample $k_1\neq \dots \neq k_t$ uniformly at random from $\{0,1\}^n\setminus \{0\}$, and sample $\ket{\phi_{k_i}}\gets \HaarSt{2^{n}}$.
            \item On the $i$th query, $\SGSampn'$ will output $\ket{k_i}\ket{\phi_{k_i}}$.
            \item $\VerSampn'$ is the map such that, for all $i \in [t]$, 
            \begin{align}
            \ket{k_i}\ket{\phi_{k_i}}\ket{b}\mapsto \ket{k_i}\ket{\phi_{k_i}}\ket{\bar{b}}
            \end{align}
            and acts as the identity for all orthogonal states.
        \end{enumerate}
    \end{enumerate}

    Now we show each hybrids are close with each other.
    \begin{enumerate}
        \item By~\Cref{cor:keylem}, with probability $1-\negl(n)$ over the choice of oracle $SG_n,Mix_n$, for all $\A,c$,
        \begin{align}
        \abs{\Pr[1\gets\mathbf{G}_1(1^n)] - \Pr[1\gets\mathbf{G}_2(1^n)]}   \leq \frac{1}{6p(n)}.
        \end{align} 
        \item We know that $\abs{\Pr[1\gets\mathbf{G}_2(1^n)] - \Pr[1\gets\mathbf{G}_3(1^n)]} \leq \Pr[\bot\gets\mathbf{G}_3(1^n)]$. But this is exactly the probability that there is a repeat among $\leq T = \poly(n)$ samples of a random value from $\{0,1\}^n$. By the birthday bound, we get that with probability $1$ over the choice of $SG_n,Mix_n$, for all $\A,c$,
        \begin{align}
        \abs{\Pr[1\gets\mathbf{G}_2(1^n)] - \Pr[1\gets\mathbf{G}_3(1^n)]} \leq \frac{T(n)^2}{2^n} \leq \frac{1}{6p(n)}.
        \end{align}
        \item ~\Cref{cor:ortho} immediately gives with probability $1$ over the choice of $SG_n,Mix_n$, for all $\A,c$,
        \begin{align}
        \abs{\Pr[1\gets\mathbf{G}_3(1^n)] - \Pr[1\gets\mathbf{G}_4(1^n)]} \leq  \negl(n).
        \end{align}
    \end{enumerate}

    Therefore by triangle inequality, with probability $1-\negl(n)$ over the choice of $SG,Mix$, for all $\A,c$,
    \begin{align}\label{eq:diff_G1_and_G4}
        \abs{\Pr[1\gets\mathbf{G}_1(1^n)] - \Pr[1\gets\mathbf{G}_4(1^n)]} \leq \frac{1}{2p(n)}.
    \end{align}

So it is sufficient to establish that $\Pr[1\gets\mathbf{G}_4(1^n)] \leq \text{negl}(n)$. Let $\A^{SG,Mix,PSPACE}$ be any adversary such that $\Pr[\mathbf{G}_4\to 1] = \epsilon$. We will construct an inefficient adversary $\A'^{P_{\ket{\psi}}}$ cloning a single Haar-random state $\ket{\psi}$ with access to a verification oracle $P_{\ket{\psi}}$.

    $\A'$ behaves as follows.
    \begin{enumerate}
        \item On input $\ket{\psi}$.
        \item Sample $k_1\neq \dots\neq k_t \gets \{0,1\}^n\setminus \{0\}$.
        \item Sample $i^* \randfrom [t]$. Set $\ket{\phi_{k_{i^*}}} = \ket{\psi}$.
        \item For $i\neq i^*$, sample $\ket{\phi_{k_i}}$ from the Haar distribution.
        \item Define $\wt{\SGSampn},\wt{\VerSampn}$ as follows:
        \begin{enumerate}
            \item On the $i$th query, $\wt{\SGSampn}$ outputs $\ket{k_i}\ket{\phi_{k_i}}$.
            \item $\VerSampn'$ is the map such that, for all $i \in [t]$, 
            \begin{align}
            \ket{k_i}\ket{\phi_{k_i}}\ket{b}\mapsto \ket{k_i}\ket{\phi_{k_i}}\ket{\bar{b}}
            \end{align}
            and acts as the identity for other orthogonal states.
        \end{enumerate}
        \item Run $(y,\rho)\gets\A^{\text{Sim}_{SG_n}^{\wt{\SGSamp,n}},\text{Sim}_{Mix_n}^{\wt{\VerSamp,n}},PSPACE}(1^n,c)$.
        \item If $y = k_{i^*}$, output $\rho$.
    \end{enumerate}

    Note that the view of $\A$ is exactly the same view as in $\mathbf{G}_4$. 
    Therefore as long as $\A'$ correctly guesses the index $\A$ wins at, $\A'$ will also win. That is,
    \begin{align}\label{eq:Aprime_wins_and_G4}
    \Pr[\A'\text{ wins cloning game}] \geq \Pr[1\gets\mathbf{G}_4(1^n)]\cdot \Pr[k_{i^*}=y] \geq \frac{\epsilon}{t}.
    \end{align}

    By~\Cref{lem:comnocloning}, we know that any adversary making at most $2^{n/4}$ queries cannot win the cloning game for a single Haar random state with probability above $2^{-n/2}$. Since $\A'$ makes at most polynomially many queries, we have that $\Pr[\A'(1^\secpa)\text{ wins cloning game}] \leq 2^{-n/2}$. And so $\epsilon = \Pr[\mathbf{G}_4 \to 1] \leq t\cdot 2^{-n/2}=\negl(n)$.

    Thus, by \Cref{eq:Aprime_wins_and_G4},  $\Pr[1\gets\mathbf{G}_4(1^n)] \leq \negl(n)$, and therefore, by \Cref{eq:diff_G1_and_G4}  for all advice $c$, with probability $\geq 1 - \negl(n)$ over $SG_n,Mix_n$, for all $\A,c$, for all polynomials $p(n)$,
    \begin{align}
    \Pr[\A(1^n, c)\text{ wins the quantum lightning game}] \leq \frac{1}{p(n)}.
    \end{align}

    Therefore  with probability $\geq 1-\negl(n)$ over $SG_n,Mix_n$, for all $\A,c$,
    \begin{align}
    \Pr[\A(1^n, c)\text{ wins the quantum lightning game}] \leq \negl(n).
    \end{align}
    
    But note that since $\sum_{n=1}^\infty \negl(n)$ converges, by the Borel-Cantelli lemma~\cite{Borel,Cantelli} $\A$ achieves negligible advantage for all but finitely many input lengths $n\in \N$ with probability $1$ over 
    $\{U_{\ell}\}_{\ell \in \N},\{\RO_\ell\}_{\ell \in \N}$.
    In particular, $(\mathsf{Mint},\mathsf{Ver})$ is a quantum lightning scheme.
\end{proof}

\paragraph{Establishing $\mathbf{BQP}=\mathbf{QMA}$}
The main barrier to showing $\mathbf{BQP}=\mathbf{QMA}$ relative to our choice of oracles $Mix,SG,PSPACE$ is that $Mix$ relies on the random oracle. However, consider the following oracle $\{Ver_{\ell}\}_{\ell \in \mathbb{N}}$:
\begin{itemize}
    \item $Ver_{\ell}$: denotes the unitary which takes as input $(x,\ket{\psi})$, and performs the following mapping:
    \begin{align}
        \ket{k}\ket{\phi_k}\ket{b}\mapsto \ket{k}\ket{\phi_k}\ket{\ol{b}},
    \end{align}
        where $\ket{\phi_k}= U_{\ell}\ket{k}$
    and acts as identity on all orthogonal states. 
\end{itemize}

Now, it is easy to see that quantum lightning still exists relative to $SG,Ver$ even against inefficient adversaries (although the same doesn't apply for the other primitives we consider). We have the following construction:
\begin{enumerate}
        \item $\mathsf{Mint}(1^n)$: Query $|0\rangle$ to $SG_n$ to get 
        , where $|\phi_x\rangle\coloneqq U_n|x\rangle$, and measure the first register to
        get the pair $(k,\ket{\phi_k})$ of the measurement result $k$ and the post-measurement state $|\phi_k\rangle$. 
        Output the serial number $\sigma\coloneqq k$ and the bold state $\ket{\$}\coloneqq \ket{\phi_k}$.
        \item $\mathsf{Ver}(1^n,\sigma, \ket{\$})$: Query $\ket{\sigma}\ket{\$}\ket{0}$ to $Ver_n$, and measure the last register.
        Output the measurement result.
    \end{enumerate}

    The proof of security is essentially identical to the one relative to $SG, Mix$. 
    Unfortunately, it is not clear that $\mathbf{BQP}^{SG,Ver,PSPACE}=\mathbf{QMA}^{SG,Ver,PSPACE}$ directly. However, following the techniques of \cite{TQC:Kre21}, it is not difficult to define a recursive oracle $\mathcal{C}$ (essentially $PSPACE^{\SGSamp,\VerSamp}$) such that $\mathbf{BQP}^{SG,Ver,\mathcal{C}}=\mathbf{QMA}^{SG,Ver,\mathcal{C}}$.
\section{QCCC NIKE Exists Relative To $(SG,Mix,PSPACE)$}\label{app:ke}
\mor{In the following, we construct two-party QCCC NIKE. A generalization to constant multiparty cases can be done similarly.}
\mor{Do we mention about poly multiparty case?}
In the following, we construct two-party QCCC NIKE. Using a modified $Mix_n$ oracle which takes in an arbitrary number of inputs, it is not difficult to give an oracle relative to which multiparty NIKE exists but $\mathbf{BQP}=\mathbf{QCMA}$.

\begin{theorem}\label{thm:keexists}
    With probability $1$ over $\{\U_n\}_{n\in \N}$, $\{\RO_n\}_{n \in \N}$, relative to the oracle $(SG,Mix,PSPACE)$, QCCC NIKE exist.
\end{theorem}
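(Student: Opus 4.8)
The plan is to instantiate the idealized key-exchange protocol from the technical overview and verify its two properties directly, deriving security from \Cref{lem:testlem} (and \Cref{cor:testcor}) via a simulation argument. Recall the protocol: Alice queries $SG_\secpa$ to obtain $(x,\ket{\phi_x})$ and sends $x$; Bob queries $SG_\secpa$ to obtain $(y,\ket{\phi_y})$ and sends $y$; Alice outputs $Mix_\secpa(x,y,\ket{\phi_x})$ and Bob outputs $Mix_\secpa(y,x,\ket{\phi_y})$, both obtaining the shared key $f_\secpa(x,y)\in\{0,1\}^\secpa$ (the ``output bit'' of the definition is read as this string, or as any fixed bit of it). Correctness is perfect: feeding $Mix_\secpa$ the classical labels $x,y$ in its first two registers and $\ket{\phi_x}=U_\secpa\ket{x,0^\secpa}$ in its third, the map $U_\secpa^\dagger$ sends the third register to $\ket{x,0^\secpa}$, which matches the pattern $(a,0^\secpa)$ with $a=x$, so the output is $f_\secpa(x,y)$; symmetrically for Bob, so $a=b=f_\secpa(x,y)$ always.

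For security, fix $\secpa$ and a QPT adversary. Because the whole argument is information-theoretic, we may regard the adversary --- together with its $PSPACE$ queries and advice, which are independent of $(U_\secpa,f_\secpa)$ --- as an inefficient algorithm $\A$ making $T=\poly(\secpa)$ queries to $SG_\secpa$ and $Mix_\secpa$ on input $\tau=(x,y)$, whose goal is to output $f_\secpa(x,y)$. I would argue in three steps. (i) Each $SG_\secpa$ query returns a fresh uniformly random label, so, letting $App$ be the set of labels $\A$ receives back, $\Pr[\{x,y\}\cap App\neq\emptyset]\le 2T/2^\secpa$. (ii) Replace $Mix_\secpa$ by $Mix'_\secpa$, which acts identically except that it outputs $\bot$ whenever the preimage to which the third register decodes under $U_\secpa^\dagger$ is not in $App$. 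An adversary with $(SG_\secpa,Mix_\secpa)$-access can be simulated by an adversary with $(Apply,Test)$-access that draws the random $SG$-labels itself, implements each $Mix$-query by (at most) two sequential $Test$-queries --- one per caller-supplied label, which is faithful because the two relevant rank-one projectors are orthogonal, so the two binary projective measurements compose to the three-outcome measurement inside $Mix$ --- and carries $f_\secpa$ hard-wired (legitimate, since $f_\secpa$ is independent of $U_\secpa$) so it can output $f_\secpa(a,b)$ on a match. Under this simulation $(SG_\secpa,Mix'_\secpa)$ corresponds to $(Apply,Test')$, so \Cref{lem:testlem} yields $\bigl|\Pr[\A^{SG_\secpa,Mix_\secpa}(\tau)=f_\secpa(x,y)]-\Pr[\A^{SG_\secpa,Mix'_\secpa}(\tau)=f_\secpa(x,y)]\bigr|\le O(T)/2^\secpa$. (iii) Relative to $(SG_\secpa,Mix'_\secpa)$, the value $f_\secpa(x,y)$ is returned only in response to a $Mix'_\secpa$-query whose first two registers decode to $(x,y)$ and whose third register decodes to a preimage in $\{x,y\}\cap App$; by (i) this has probability $\le 2T/2^\secpa$, and on its complement every oracle response depends only on $U_\secpa$ and on values of $f_\secpa$ away from $(x,y)$, so $f_\secpa(x,y)$ is uniform and independent of $\A$'s view and is output with probability $2^{-\secpa}$. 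Summing the three contributions, the advantage is $2^{-\secpa}+O(T)/2^\secpa=\negl(\secpa)$.

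To upgrade this to the probability-$1$ statement, I would invoke Borel--Cantelli: correctness holds for every oracle, and for security the average-case bound above, after Markov's inequality over the choice of $(U_\secpa,f_\secpa)$, gives a summable tail bound on each fixed efficient adversary's advantage, so with probability $1$ every such adversary has negligible advantage for all large $\secpa$; intersecting over the countably many uniform machines, advice lengths, and inverse-polynomial thresholds finishes the argument, with the robustness of the information-theoretic bound to polynomial classical advice handled as in \Cref{sec:simfull} (cf.~\cite{TQC:Kre21}).

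The main obstacle I anticipate is step (ii): making the simulation of $(SG_\secpa,Mix_\secpa)$ by an $(Apply,Test)$-adversary completely rigorous --- checking that decomposing the two-label authentication check of $Mix$ into two single-label $Test$-calls reproduces $Mix$'s behaviour on both the classical output and the post-measurement state in every case ($a\neq b$, $a=b$, matched/unmatched), and that the bookkeeping (random labels, hard-wired $f_\secpa$) introduces no dependence on $U_\secpa$ so that \Cref{lem:testlem} genuinely applies. A secondary, routine point is keeping the quantitative bound uniform enough to absorb the union over non-uniform adversaries in the last step.
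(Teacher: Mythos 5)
Your simulation argument (steps (i)--(iii)) is a reasonable way to bound the advantage of a \emph{fixed} adversary averaged over a \emph{freshly sampled} $(U_\secpa,f_\secpa)$, but the route you propose to upgrade this to the probability-$1$ statement does not work, and the gap is exactly the part you flag as ``handled as in \Cref{sec:simfull}.'' Markov applied to $\E_{U,f}[\text{adv}_c]\le \poly(n)/2^n$ gives $\Pr_{U,f}[\text{adv}_c>1/p]\le \poly(n)/2^n$ for each advice string $c$, which is summable and Borel--Cantelli-friendly for a single $c$; but the theorem quantifies over non-uniform adversaries whose advice can depend on the oracle, so before Borel--Cantelli you must union-bound over the $2^{q(n)}$ advice strings of each polynomial length $q$, and $2^{q(n)}\cdot\poly(n)/2^n$ diverges as soon as $q(n)\gg n$. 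Your parenthetical ``advice $\ldots$ independent of $(U_\secpa,f_\secpa)$'' is therefore an assumption you cannot make, and the tail bound Markov yields is three exponentials short of what the union requires. Relatedly, step (ii) invokes \Cref{lem:testlem} for the \emph{fixed} $U_\secpa$, whereas that lemma's $T/2^n$ bound is an average over a \emph{fresh} Haar unitary; it only becomes applicable after you have argued the fixed and fresh unitaries behave alike for your specific algorithm.

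The paper's proof closes precisely this gap by replacing the Markov step with a direct concentration of the adversary's acceptance probability as a function of the fixed oracle: the hybrid $\mathbf{G}_1\to\mathbf{G}_2$ applies Levy's lemma (\Cref{thm:conc} plus \Cref{lem:lipshitz}) to swap $U_n$ for a fresh Haar unitary with failure probability $\exp(-2^n)$, and the hybrids $\mathbf{G}_2\to\mathbf{G}_3\to\mathbf{G}_4$ use the new presampling concentration for random functions (\Cref{thm:functionconc}) to swap the fixed $f_n$ for a fresh one with failure probability $2^{-2^{n/4}}$. Only after these doubly-exponentially-sharp concentration steps is an information-theoretic argument (roughly your (i) and (iii), via \Cref{cor:testcor} in hybrid $\mathbf{G}_5$) applied, and the $\exp(-2^n)+2^{-2^{n/4}}$ tails are what make the union over $2^{\poly(n)}$ advice strings and the Borel--Cantelli intersection go through. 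So the concentration lemmas are not an optional robustness patch to your argument; they are the argument, and your Markov step must be deleted rather than supplemented. Your correctness check and the $Mix$-to-two-$Test$-queries simulation idea are fine and match the spirit of the paper's use of \Cref{lem:testlem}.
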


Our NIKE protocol operates as follows.
\begin{enumerate}
    \item On input $1^\secpa$, Alice queries $|0\rangle$ to $SG_n$ to obtain $\frac{1}{\sqrt{2^n}}\sum_{x \in \{0,1\}^n}\ket{x}\ket{\phi_x}$, and measures the first register
    to obtain the measurement result $x$ and the post-measurement state $\ket{\phi_x}$. Alice sends $x$ to Bob.
    \item On input $1^\secpa$, Bob queries $|0\rangle$ to $SG_n$ to obtain $\frac{1}{\sqrt{2^n}}\sum_{x \in \{0,1\}^n}\ket{x}\ket{\phi_x}$, and measures the first register
    to obtain the measurement result $y$ and the post-measurement state $\ket{\phi_y}$. Bob sends $y$ to Alice.
    \item Alice queries $|x\rangle|y\rangle|\phi_x\rangle|0\rangle|0\rangle$ to $Mix_n$, and measures the fourth register to
    get $\RO_n(x,y)$. Alice outputs $\RO_n(x,y)$.
    \item Bob queries $|x\rangle|y\rangle|\phi_y\rangle|0\rangle|0\rangle$ to $Mix_n$, and measures the fourth register to
    obtain $\RO_n(x,y)$. Bob outputs $\RO_n(x,y)$.
\end{enumerate}

Agreement follows by construction.
It only remains to show security. Without loss of generality we assume the adversary is a $T$-query uniform quantum algorithm $\A^{SG_n,Mix_n,PSPACE}(1^n,c,x,y)$ where $(x,y)$ is the transcript and $c\in \{0,1\}^{\poly(n)}$ is the advice string.
 We will first show that with all but negligible probability over $U_n,\RO_n$ (and for any fixed $\{U_\ell\}_{\ell\in \N\setminus \{n\}},\{\RO_\ell\}_{\ell \in \N\setminus \{n\}}$), for all $c$,  and all polynomials $p$, 
\begin{align}
\Pr_{x,y}[\A(1^n,c,x,y)=\RO_n(x,y)]\leq \frac{1}{p(n)}.
\end{align}

We will use hybrids to switch our NIKE security game to one where we can more easily argue that the adversary has negligible advantage. 
\begin{enumerate}
    \item $\mathbf{G}_1$ is the original security game of the NIKE played with 
    $\A^{SG_n,Mix_n,PSPACE}$. More formally, it runs as follows:
    \begin{enumerate}
        \item Query $|0\rangle$ to $SG_n$ to get $\frac{1}{\sqrt{2^n}}\sum_x|x\rangle|\phi_x\rangle$, and measure the first register to obtain
        the pair $(x,\ket{\phi_x})$ of the measurement result $x$ and the post-measurement state $|\phi_x\rangle$.
        \item 
        Query $|0\rangle$ to $SG_n$ to get $\frac{1}{\sqrt{2^n}}\sum_k|k\rangle|\phi_k\rangle$, and measure the first register to obtain
        the pair $(y,\ket{\phi_y})$ of the measurement result $y$ and the post-measurement state $|\phi_y\rangle$.
        \item Run $\A^{SG_n,Mix_n,PSPACE}(1^\secpa,c,x,y)\to guess$.
        \item Output $1$ if and only if $guess = \RO_n(x,y)$.
    \end{enumerate}
    \item $\mathbf{G}_2$ is the same as $\mathbf{G}_1$, except that we will prevent $\A$ from querying $Mix_n$ on $\ket{\phi_x}$. In particular, every time $\A$ queries the oracle $Mix_n$ on $\rho_{\mathbf{ABCDE}}$, we will first apply the measurement $\{\ketbra{\phi_x},I-\ketbra{\phi_x}\}$ on register $\mathbf{C}$, and if the result is the first option the game will output $0$. Otherwise, we apply $Mix_n$ on the residual state.
    \item $\mathbf{G}_3$ is the same as $\mathbf{G}_2$, except that (analogously to the previous hybrid) we prevent $\A$ from querying $Mix_n$ on $\ket{\phi_y}$. Formally, define $\Pi_{bad} = I_{\mathbf{AB}} \otimes (\ketbra{\phi_x}_{\mathbf{C}} + \ketbra{\phi_y}_{\mathbf{C}}) \otimes I_{\mathbf{DE}}$. $\mathbf{G}_3$ will be the same as $\mathbf{G}_1$, but whenever $\A$ queries $Mix_n$ on $\rho_{\mathbf{ABCDE}}$, it will apply the measurement $\{\Pi_{bad},I-\Pi_{bad}\}$. If the result is the first option, the game will output $0$. Otherwise, we apply $Mix_n$ on the residual state. Note that this measurement can be exactly implemented by making a query to $Mix_n$ on $\ketbra{x}_{\mathbf{A}'}\otimes \ketbra{y}_{\mathbf{B}'}\otimes \rho_{\mathbf{C}}\otimes \ketbra{0}_{\mathbf{D}'\mathbf{E}'}$ measuring if register $\mathbf{E'}$ contains $1$, and then applying $Mix_n$ again to clear registers $\mathbf{D'E'}$.
    \item $\mathbf{G}_4$ is the same as $\mathbf{G}_3$, except that we replace  $SG_n,Mix_n$ with $\text{Sim}^{SG_{Samp,n}}_{SG_{n}},\text{Sim}^{Ver_{Samp,n}}_{Mix_{n}}$ with error $\frac{1}{2p(n)}$. In detail, this will be identical to the following game
    \begin{enumerate}
        \item Sample a fresh random unitary $\wt{U}_n$ and a random oracle $\wt{\RO}_n$ for $\SGSampn,\VerSampn$.
        \item Sample $x,y\gets \{0,1\}^n$.
        \item Define $\wt{\Pi}_{bad} = I_{\mathbf{AB}} \otimes (\wt{U}_n \ketbra{x} \wt{U}_n^\dagger + \wt{U}_n \ketbra{y} \wt{U}_n^\dagger)_{\mathbf{C}} \otimes I_{\mathbf{DE}}$.
        \item Run $\A^{\text{Sim}^{SG_{Samp,n}}_{SG_{n}},\text{Sim}^{Ver_{Samp,n}}_{Mix_{n}}}(1^\secpa,c,x,y)\to guess$, but every time $\A$ makes a $\text{Sim}^{Ver_{Samp,n}}_{Mix_{n}}$ query on a state $\rho_{\mathbf{ABCDE}}$, apply the measurement $\{\wt{\Pi}_{bad},I-\wt{\Pi}_{bad}\}$ on registers $\mathbf{ABC}$ and if the result is the first outcome, the game will output $0$.
        \item Output $1$ if and only if $guess = \wt{\RO}_n(x,y)$.
    \end{enumerate}
\end{enumerate}

\begin{lemma}\label{lem:kecloninghybrid}
    With probability $1$ over $SG_n,Mix_n$, for all $\A,c$
    \begin{align}
    \abs{\Pr[1\gets \mathbf{G}_1(1^\secpa)] - \Pr[1\gets\mathbf{G}_2(1^\secpa)]} \leq \negl(n).
    \end{align}
\end{lemma}

\begin{proof}
    We will show that for all $t$, if $\rho_{\mathbf{ABCDE}}$ is the state input to $Mix_n$ in the $t$-th query, then 
    ${\rm Tr}((I_{\mathbf{ABDE}} \otimes \ketbra{\phi_x}_{\mathbf{C}})\rho_{\mathbf{ABCDE}}) \leq \negl(n)$. The lemma then follows immediately from gentle measurement and induction.

    In fact, this claim follows immediately from the security our quantum lightning construction. In particular, let $\A(\cdot,c)$ be any adversary
    with
    advice $c$ such that the state $\rho_{\mathbf{ABCDE}}$ input to $Mix_n$ in the $t$-th query of $\mathbf{G}_1$ satisfies
    \begin{align}
    {\rm Tr}((I_{\mathbf{ABDE}} \otimes \ketbra{\phi_x}_{\mathbf{C}})\rho_{\mathbf{ABCDE}}) \geq \frac{1}{p(n)}
    \end{align}
    for some polynomial $p$. 
    Then we get an adversary 
    $\A'(1^\secpa,c)$
    for quantum lightning defined as follows. 
    \begin{enumerate}
        \item Run $\mathsf{Mint}(1^n)\to (x,\ket{\phi_x})$ and $\mathsf{Mint}(1^n)\to (y,\ket{\phi_y})$.
        \item Run $\A^{SG_n,Mix_n,PSPACE}(1^\secpa,c,x,y)$ up until the $t$-th query to $Mix_n$, generating a query state $\rho_{\mathbf{ABCDE}}$.
        \item Output $\ketbra{\phi_x}\otimes \rho_\mathbf{C}$.
    \end{enumerate}
    Note that the view of $\A$ generated by $\A'$ is identical to its view in $\mathbf{G}_1$. And therefore $\rho_\mathbf{C}$ measures to $\ket{\phi_x}$ with probability $\frac{1}{p(n)}$. 
    But so does $\ket{\phi_x}$, and therefore $\A'$ breaks the lightning protocol with probability $\frac{1}{p(n)}$.
\end{proof}

Analogously, we also have

\begin{lemma}
    With probability $1$ over $SG_n,Mix_n$, for all $\A,c$
    \begin{align}
    \abs{\Pr[1\gets\mathbf{G}_2(1^\secpa)] - \Pr[1\gets\mathbf{G}_3(1^\secpa)]} \leq \negl(n).
    \end{align}
\end{lemma}

\takashi{We have to bound the difference between $\mathbf{G}_3$ and $\mathbf{G}_4$.}\eli{Sorry, you caught me mid-rewriting this proof. The new hybrids are slightly different. In particular, I instead describe how to implement $\mathbf{G}_3$ using queries to $Mix_n$.}\takashi{Sure!}

\Cref{cor:keylem} immediately implies 

\begin{lemma}
    With probability $1-\negl(n)$ over $U_n,\RO_n$, for all $\A,c$
    \begin{align}
    \abs{\Pr[1\gets\mathbf{G}_4(1^\secpa)] - \Pr[1\gets\mathbf{G}_3(1^\secpa)]} \leq \frac{1}{2p(n)}.
    \end{align} 
\end{lemma}

Finally, it is clear that 
\begin{lemma}
    For all $SG_n,Mix_n$, for all $\A,c$
    \begin{align}
    \Pr[1\gets\mathbf{G}_4(1^\secpa)]\leq \frac{1}{2^n}.
    \end{align} 
\end{lemma}
In particular, $\A(1^\secpa,c,x,y)$ must guess $\RO_n(x,y)$, but its oracles are independent of $\RO_n(x,y)$. Thus, by lazy sampling, it is equivalent if $\RO_n(x,y)$ is sampled after $\A$ produces its output. The lemma follows.

Combining these lemmas, we get that with probability $1-\negl(n)$ over $U_n,\RO_n$, for all $\A,c$,
\begin{align}
\Pr_{x,y}[\A^{SG_n,Mix_n,PSPACE}(1^n,c,x,y)=\RO_n(x,y)] \leq \negl(n).
\end{align}
But note that since $\sum_{n=1}^\infty \negl(n)$ converges, by the Borel-Cantelli lemma~\cite{Borel,Cantelli} $\A$ achieves negligible advantage for all but finitely many input lengths $n\in \N$ with probability $1$ over 
$\{U_\ell\}_{\ell \in \mathbb{N}}$ and $\{\RO_\ell\}_{\ell\in \N}$.
In particular, with probability $1$, this protocol is a secure NIKE scheme relative to $(SG,Mix,PSPACE)$.

\paragraph{Generalizing to QCCC Multiparty NIKE }
We can generalize our 2 party protocol for non interactive key exchange to non interactive key exchange with polynomially many parties under a slightly modified (but still unitary) version of  $Mix= \{Mix_{\ell}\}_{\ell \in \mathbb{N}}$. 
\begin{itemize}
    \item Let $n$ be the security parameter, and let  the number of parties in the multiparty NIKE protocol be $p(n)$, where $p$ is some polynomial. The oracle $Mix^*= \{Mix^*_{\ell}\}_{\ell \in \mathbb{N}}$ is the following unitary:
   
        \begin{align}
    &\sum_{x_1, \dots x_{p(n)}\in\{0,1\}^{\ell}}|x_1\rangle\langle x_1|\otimes\dots \otimes|x_{p(n)}\rangle\langle x_{p(n)}|\otimes(|\phi_{x_1}\rangle\langle\phi_{x_1}|\dots +|\phi_{x_{p(n)}}\rangle\langle\phi_{x_{p(n)}}|)\\
    &\otimes X^{\RO(x_1, \dots x_{p(n)})}\otimes X    \\
    &+\sum_{x_1, \dots x_{p(n)}\in\{0,1\}^{\ell}}|x_1\rangle\langle x_1|\otimes\dots \otimes|x_{p(n)}\rangle\langle x_{p(n)}|\otimes(I-|\phi_{x_1}\rangle\langle\phi_{x_1}|-\dots -|\phi_{x_{p(n)}}\rangle\langle\phi_{x_{p(n)}}|)\\
    &\otimes I \otimes I  .
    \end{align}
    Here, $X$ is the Pauli $X$ operator and $X^r\coloneqq\bigotimes_{i=1}^\secpa X_i^{r_i}$ for any $\secpa$-bit string $r$.
    In other words, $Mix_{\ell}$ performs the following operation:
    \begin{align}
        \ket{x_1}\dots \ket{x_{p(n)}}\ket{\phi_{x_{1}}}\ket{z}\ket{b}&\rightarrow \ket{x_1}\dots \ket{x_{p(n)}}\ket{\phi_x}\ket{z\oplus \RO(x_1, \dots x_{p(n)})}\ket{\bar{b}}\\
        \vdots\\
        \ket{x_1}\dots \ket{x_{p(n)}}\ket{\phi_{x_{p(n)}}}\ket{z}\ket{b}&\rightarrow \ket{x_1}\dots \ket{x_{p(n)}}\ket{\phi_x}\ket{z\oplus \RO(x_1, \dots x_{p(n)})}\ket{\bar{b}}
    \end{align}
    and acts as identity for other orthogonal states. 
   
\end{itemize}
$SG$ is defined identically to before. \\
The 2 party NIKE protocol can be generalized to $p(n)$ parties $P_1, \dots P_{p(n)}$ trivially:
\begin{enumerate}
    \item On input $1^\secpa$, $\forall i \in [p(n)], P_i$ queries $|0\rangle$ to $SG_n$ to obtain $\frac{1}{\sqrt{2^n}}\sum_{x_i \in \{0,1\}^n}\ket{x_i}\ket{\phi_{x_i}}$, and measures the first register
    to obtain the measurement result $x_i$ and the post-measurement state $\ket{\phi_{x_i}}$. $P_i$ sends $x_i$ to $\{P_j\}_{j \neq i}$
   
    \item $\forall i \in [p(n)]$, $P_i$ queries $\ket{x_1}\dots \ket{x_{p(n)}}\ket{\phi_{x_{i}}}\ket{z}\ket{b} $ to $Mix^*_n$, and measures the fourth register to
    get $\RO_n(x_1,\dots x_{p(n)})$. $\forall i \in [p(n)], P_i$ outputs $\RO_n(x_1,\dots x_{p(n)})$.
   
\end{enumerate}
\section{QCCC Commitments Exist Relative To $(SG,Mix,PSPACE)$}\label{sec:commitoracle}

Since this section details a construction, we will show that our construction satisfies stronger security requirements than given in~\Cref{def:com}. Note that if a bit commitment scheme satisfies statistical hiding and sum-binding, then it trivially satisfies computational hiding and weak computational binding.

\begin{definition}[Statistical hiding~\cite{STOC:HaiRei07}]Let $(\Comm,\Rece)$ be a bit-commitment scheme. We say that $(\Comm,\Rece)$ satisfies statistical hiding if for all (inefficient) adversarial receivers $\Rece'$, the transcript of the commit stage between $\Rece'$ and $\Comm$ with input $m$ is statistically close to the transcript of the commit stage between $\Rece'$ and $\Comm$ with input $m'$. That is,
\begin{align}
        \Delta(\Comm(m)\rightleftarrows \Rece', \Comm(m') \rightleftarrows \Rece') \leq \negl(\secpa).
\end{align}
Here $\Comm(m)\rightleftarrows\Rece'$ is the distribution of the transcript of the commit stage between $\Rece'$ and $\Comm$ with input $m$.
\end{definition}

\begin{definition}[Sum-Binding~\cite{EC:Unruh16}]
    Let $(\Comm,\Rece)$ be a bit-commitment scheme. We say that $(\Comm,\Rece)$ satisfies sum-binding if for all QPT adversarial senders 
    $\Comm'$, the probability that $\Comm'$ wins the following game is $\leq \frac{1}{2} + \negl(\secpa)$.
        \begin{enumerate}
            \item In the commit stage, an honest receiver $\Rece$ interacts with the adversarial committer $\Comm'$ to produce a transcript $z$ and the receiver state $\rho_{\Rece}$. $\Comm'$ produces the adversarial committer state $\rho_{\Comm'}$.
            \item The challenger samples a bit $b \randfrom \{0,1\}$.
            \item In the opening stage, the honest receiver $\Rece$ is given $\rho_{\Rece}$ and $z$, while $\Comm'$ is given $z,\rho_{\Comm'},$ and $b$. They then proceed to run the opening stage with the committer replaced by $\Comm'$, and $\Rece$ produces a final output $b'$. $\Comm'$ wins if $b' = b$.
        \end{enumerate}
\end{definition}

\begin{theorem}\label{thm:bitcommitment}
    With probability $1$ over $\{U_m\}_{m\in\mathbb{N}}$ and $\{\RO_m\}_{m\in\mathbb{N}}$, there exists a bit commitment scheme satisfying statistical hiding and sum-binding relative to $SG,Mix,PSPACE$.
\end{theorem}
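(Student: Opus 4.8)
The plan is to analyze the protocol from the technical overview, which we now pin down. To commit to a bit $m$, the committer repeatedly runs $SG_n \to (k,\ket{\phi_k})$ until the first bit of $k$ equals $m$ (capping the number of retries at $n$ to stay strict polynomial time, losing only $2^{-n}$ in correctness), keeps $\ket{\phi_k}$, and sends the last $n-1$ bits $x_{>1}$ of $k$. In the opening phase the receiver runs $SG_n\to(y,\ket{\phi_y})$ and sends $y$; the committer announces the bit $m$ and replies with $c=Mix_n((m,x_{>1}),y,\ket{\phi_{(m,x_{>1})}})$, which deterministically equals $f_n((m,x_{>1}),y)$ because $U_n^{\dagger}\ket{\phi_{(m,x_{>1})}}$ measures to $((m,x_{>1}),0^n)$; the receiver sets $x=(m,x_{>1})$ and accepts $m$ iff $Mix_n(x,y,\ket{\phi_y})=c$, which again deterministically equals $f_n(x,y)$ (we use a fixed convention that the committer's string is the first argument of $Mix_n$ in both places). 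Thus correctness holds with probability $1$. Statistical hiding is in fact perfect: each $SG_n$-output $k$ is uniform, so conditioned on its first bit being $m$, the string $x_{>1}$ is uniform on $\zo^{n-1}$ independently of $m$, hence the commit-phase transcript has the same distribution for every $m$ even against an unbounded receiver. All the work is therefore in sum-binding.

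For sum-binding, fix a QPT adversarial committer $Com'$ with classical advice making at most $T=\poly(n)$ oracle queries, let $x_{>1}$ be the string it sends in the commit phase, and let $K$ be the (random) set of $n$-bit strings returned to $Com'$ by $SG_n$ over the whole execution. First run a birthday argument: except with probability $\poly(T)/2^n$ over the randomness of $SG_n$, no two $SG_n$-outputs to $Com'$ share a common $(n-1)$-bit suffix and the honestly generated $y$ is fresh (distinct from every $SG_n$-output to $Com'$ and from $(0,x_{>1}),(1,x_{>1})$). Call this good event $\neg\mathsf{bad}$ and condition on it. Then $(\beta,x_{>1})\in K$ for at most one bit $\beta$, call it $\beta^*$ (or $\bot$ if neither); since the challenge bit $b$ is drawn uniformly and independently after the commit phase, $\Pr[b=\beta^*]\le 1/2$. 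On the event $b\neq\beta^*$ (together with $\neg\mathsf{bad}$), $Com'$ never receives $\ket{\phi_{(b,x_{>1})}}$ from $SG_n$, and it never receives $\ket{\phi_y}$ at all since communication is classical. As the receiver accepts exactly when $c=f_n((b,x_{>1}),y)$, it remains to show that $\Pr[\mathsf{win}\wedge b\neq\beta^*\wedge\neg\mathsf{bad}]$ is negligible, i.e.\ that $Com'$ cannot predict $f_n((b,x_{>1}),y)$.

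This is where I reuse the hybrid chain from the proof of \Cref{thm:keexists} almost verbatim: (i) replace the fixed $U_n$ by a fresh Haar-random $\wt U_n$, legitimate for all but an $\exp(-2^n)$-fraction of $U_n$ by \Cref{lem:lipshitz} and \Cref{thm:conc}; (ii) replace the fixed $f_n$ by the bit-fixing oracle $\D_{f_n}$ of \Cref{thm:functionconc}, valid for all but a $2^{-2^{n/4}}$-fraction of $f_n$ and fixing at most $2^{3n/4}$ coordinates; (iii) replace $\D_{f_n}$ by a fresh uniform $\wt f_n$, sound because $SG_n$-outputs are uniform and hence miss the $\le 2^{3n/4}$ fixed coordinates except with probability $\poly(T)\cdot 2^{-n/4}$, so by \Cref{cor:testcor} every $Mix_n$ query touches $\wt f_n$ only outside those coordinates, where the bit-fixing oracle is genuinely uniform. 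Then, after reformulating the receiver's accept-check to be performed directly by the game (it merely compares $c$ to $\wt f_n((b,x_{>1}),y)$, which is an equivalent reformulation), apply \Cref{cor:testcor} once more with $S=\{(b,x_{>1}),y\}$: on the event $b\neq\beta^*\wedge\neg\mathsf{bad}$, $SG_n$ never returns an element of $S$ to $Com'$, so we may replace the $Mix_n$ seen by $Com'$ with a variant that additionally refuses any state whose $\wt U_n^{\dagger}$-measurement is $((b,x_{>1}),0^n)$ or $(y,0^n)$, at cost $O(T/2^n)$. In this last game $Com'$'s view contains $\wt f_n$ evaluated at at most $T$ points, none equal to $((b,x_{>1}),y)$, so $\wt f_n((b,x_{>1}),y)$ is uniform and independent of $c$ and $\Pr[\mathsf{win}]\le 2^{-n}$. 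Collecting the terms gives $\Pr[\mathsf{win}]\le \tfrac12+\negl(n)$ for each fixed advice string, and a union bound over the (countably many) QPT adversaries and advice strings together with the Borel-Cantelli lemma upgrades this to ``with probability $1$ over $\{U_n\},\{f_n\}$''.

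The main obstacle — the one place this genuinely departs from the KE proof — is the committer's \emph{legitimately held} key. In the KE setting neither honest party's state is available to the eavesdropper, so one simply forbids $Mix$ from accepting $\ket{\phi_x}$ or $\ket{\phi_y}$; here $Com'$ does hold exactly one of $\ket{\phi_{(0,x_{>1})}},\ket{\phi_{(1,x_{>1})}}$ (whichever $SG_n$ happened to hand it), so one cannot forbid both. The birthday bound (ensuring it holds at most one) combined with the uniform choice of the challenge bit $b$ (ensuring it is the ``wrong'' one with probability $\ge \tfrac12$) is exactly what reduces sum-binding to a statement to which the key lemma \Cref{cor:testcor} applies, and carrying the conditioning events $b\neq\beta^*$ and $\neg\mathsf{bad}$ cleanly through the Haar and random-function hybrids (so that the probabilities of these events and the conditional win probability are all preserved up to $\negl(n)$) is the part that needs the most care.
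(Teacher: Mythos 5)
Your proposal follows essentially the same line as the paper's proof: same protocol, same hybrid chain (Haar concentration via \Cref{lem:lipshitz} and \Cref{thm:conc}, presampling/bit-fixing via \Cref{thm:functionconc}, replacing with a fresh random function, then \Cref{lem:testlem}/\Cref{cor:testcor}), and the same ``the challenge bit $b$ is wrong with probability at least $1/2$'' observation driving sum-binding. One imprecision worth tightening: you define $\beta^*$ via the set $K$ of \emph{all} $SG_n$-outputs, including those obtained during the opening phase after $b$ is revealed, so $\beta^*$ can depend on $b$. A committer who sends a fabricated $x_{>1}$ without ever having received a string with that suffix during the commit phase can, after learning $b$, query $SG_n$ in hopes of hitting $(b,x_{>1})$ — and your $\neg\mathsf{bad}$ event (no two $SG_n$-outputs sharing a suffix, $y$ fresh) does not forbid this, since it only involves a single output with suffix $x_{>1}$. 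The excess probability is only $O(T/2^n)$, but you should add ``no $SG_n$ output obtained after the commit message has suffix $x_{>1}$'' to $\neg\mathsf{bad}$; this is exactly the condition the paper builds into its modified oracle $SG'_n$ in hybrid $\mathbf{G}_5$. The paper also encodes all of the bad events into modified oracles $SG'_n$ and $Mix'_n$ as two dedicated hybrids (rather than conditioning on a good event and then running hybrids), which sidesteps the bookkeeping you flagged of carrying conditioning events intact through the Haar and random-function replacements; the substance is otherwise identical.
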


\begin{proof}
We construct a commitment scheme $(\Com^{SG,Mix,PSPACE},\Rec^{SG,Mix,PSPACE})$ as follows. Note that our commitment scheme has a non-interactive commitment stage and an interactive opening stage.
\begin{enumerate}
    \item To commit to $b$, $\Com$ queries $|0\rangle$ to $SG_n$ to generate $SG_n\ket{0}$ and measures the first register to get $(x, \ket{\phi_x})$. Let $x_1$ be the first bit of $x$ and let $x_{>1}$ be the last $\secpa-1$ bits of $x$. $\Com$ repeats this process until $x_1 = b$. $\Com$ sends $x_{>1}$ to $\Rec$.
    \item To open to a bit $b$, $\Rec$ queries $|0\rangle$ to $SG_n$ to generate $SG_n\ket{0}$ and measures the first register 
    to get $ (y,\ket{\phi_y})$. $\Rec$ sends $y$ to $\Com$.
    \item Upon receiving $y$, $\Com$ computes $Mix_n\ket{x}\ket{y}\ket{\phi_x}\ket{0}\ket{0}$, measures the fourth register to get the measurement result $r$,
    and sends $r$ to $\Rec$.
    \item $\Rec$ computes $Mix_n\ket{y}\ket{b,x_{>1}} \ket{\phi_y}\ket{0}\ket{0}$ and measures the fourth register to get the result $r'$. $\Rec$ accepts if $r=r'$.
    Otherwise, $\Rec$ rejects.
\end{enumerate}

Correctness is clear by construction.
Note that the commitment $x_{>1}$ is a uniformly random string, and therefore $(\Com,\Rec)$ trivially satisfies statistical hiding.

It remains to show computational binding. 
We will define a sequence of hybrids $\mathbf{G}_1,\dots,\mathbf{G}_5$ where each hybrid is defined relative to some $t$-query oracle algorithm $\mathcal{A}^{SG,Mix,PSPACE}(1^n,c)$. 

\begin{enumerate}
    \item $\mathbf{G}_1$ will be the original sum-binding security game. We will rewrite it for simplicity as follows
    \begin{enumerate}
        \item Sample $b\randfrom \{0,1\}$.
        \item Query $|0\rangle$ to $SG_n$ to generate $SG_n\ket{0}$, and measure the first register to get $(y,\ket{\phi_y})$.
        \item Run $(x,\rho_\A)\gets\A^{SG,Mix,PSPACE}(1^\secpa,c)$.
        \item Run $z\gets\A^{SG,Mix,PSPACE}(b,x,y,\rho_{\A}) $.
        \item Generate $Mix_n \ket{y}\ket{b,x}\ket{\phi_y}\ket{0}\ket{0}$ and measure the fourth register. If the result is $z$,
        output 1. Otherwise, output 0.
    \end{enumerate}
    \item $\mathbf{G}_2$ is the same as $\mathbf{G}_1$, except that we will prevent $\A$ from querying $Mix_n$ on $\ket{\phi_y}$. In particular, every time $\A$ queries the oracle $Mix_n$ on $\rho_{\mathbf{ABCDE}}$, we will first apply the measurement $\{\ketbra{\phi_y},I-\ketbra{\phi_y}\}$ on register $\mathbf{C}$, and if the result is the first option the game will output $0$. Otherwise, we apply $Mix_n$ on the residual state. Note that this measurement can be performed by querying $Mix_n$ on $\ketbra{y}_{\mathbf{A}'}\otimes \ketbra{0}_{\mathbf{B}'}\otimes \rho_C \otimes \ketbra{0}_{\mathbf{C'D'}}$, measuring whether register $D'$ contains $\ket{1}$, and then applying $Mix_n$ again to clear out registers $C'$ and $D'$.
    \item $\mathbf{G}_3$ is the same as $\mathbf{G}_2$, except that we replace all calls to $SG_n$ and $Mix_n$ (including the ones used to evaluate $\{\ketbra{\phi_x},I-\ketbra{\phi_x}\}$) with $\SimSGn,\SimMixn$ for $\epsilon = \frac{1}{2p(n)}$.
    \item $\mathbf{G}_4$ is the same as $\mathbf{G}_3$, except that we resample all random oracle outputs $\RO(w, y)$ for all inputs $w$ for which $\A$ does not have access to $\ket{\phi_z}$. In particular, let $k_1,\dots,k_t$ be the inputs for $\SGSampn$ chosen during initialization. After $\A$ outputs the de-commitment $z$, for all $w \notin \{k_1,\dots,k_t\}$, we will replace $\RO'_n(w,y)$ with a fresh random value. 
    \item $\mathbf{G}_5$ will be the same as $\mathbf{G}_4$, except that we require for all $i\neq j \in [t]$, $k_i,k_j$ sampled during initialization must differ in the last $n-1$ bits. That is, $(k_i)_{>1} \neq (k_j)_{>1}$.
\end{enumerate}

\begin{claim}
    With probability $1$ over $SG,Mix$, and for all $\A,c$
    \begin{align}
    \abs{\Pr[1\gets\mathbf{G}_1(1^\secpa)] - \Pr[1\gets\mathbf{G}_2(1^\secpa)]} \leq \negl(n).
    \end{align}
\end{claim}
\begin{proof}
    This follows the exact same proof structure as~\Cref{lem:kecloninghybrid}. Intuitively, if $\A$ ever queries on $\ket{\phi_y}$, since the game itself has another copy of $\ket{\phi_y}$, this would violate security of our quantum lightning scheme.
\end{proof}

\begin{claim}
    For all sufficiently large $n$, with probability $\geq 1 - \negl(n)$ over $U_n,\RO_n$, for all $\A,c$
    \begin{align}
    \abs{\Pr[1\gets\mathbf{G}_3(1^n)] - \Pr[1\gets\mathbf{G}_2(1^n)]} \leq \frac{1}{2p(n)}.
    \end{align}
    \mor{for all sufficiently large $\secpa$?}
\end{claim}

This follows immediately from~\Cref{cor:keylem}

\begin{claim}
    For all $\{U_\ell\}_{\ell \in \N},\{\RO_\ell\}_{\ell\in \N}$, for all $n\in \N$, and for all $\A,c$,
    \begin{align}
    \Pr[1\gets\mathbf{G}_4(1^n)] = \Pr[1\gets\mathbf{G}_3(1^n)].
    \end{align}
    \mor{for all $n$?}
\end{claim}
\begin{proof}
    This follows immediately from the fact that $\A$'s oracle is independent of $\RO'_n(w,y)$ for $w\notin \{k_1, \dots k_t\}$. This is because by construction any query to $\SimMixn$ on input $\rho_{\mathbf{ABCDE}}$ is independent of $\RO(w,y)$ unless either $\rho_{\mathbf{C}}$ has some overlap with $\ketbra{\phi_y}$ or $w \in \{k_1,\dots,k_t\}$. But note that by construction neither game ever queries $\SimMixn$ on $\rho_{\mathbf{ABCDE}}$ where $\rho_{\mathbf{C}}$ has any overlap with $\ketbra{\phi_y}$.
\end{proof}

\begin{claim}
    With probability $1$ over $\{U_\ell\}_{\ell\in\mathbb{N}}$ and $\{\RO_\ell\}_{\ell\in \N}$, for all $\A,c$,
    \begin{align}
    \abs{\Pr[1\gets\mathbf{G}_5(1^n)] - \Pr[1\gets\mathbf{G}_4(1^n)]} \leq \negl(n).
    \end{align}
\end{claim} 

\begin{proof}
    The only way to distinguish these two games is if in $\mathbf{G}_4$ there is some $i\neq j\in [t]$ such that $(k_i)_{>1} = (k_j)_{>1}$. By collision-bound, this occurs with probability $\leq \frac{t^2}{2^{n-1}} \leq \negl(n)$.
\end{proof}

\begin{claim}
    $\Pr[1\gets\mathbf{G}_5(1^n)] \leq \frac{1}{2} + \frac{1}{2^n}$.
\end{claim}

\begin{proof}
    If $b.x$ (i.e. the concatenation of $b$ and $x$) is not in $\{k_1,\dots,k_t\}$, then $\RO_n'(b.x,y)$ will be resampled after the decommitment $z$ is chosen, and so $\A$ will succeed with probability $\frac{1}{2^n}$. However, at the point when $x$ is chosen, $k_1,\dots,k_t$ have all been chosen. If $x\neq (k_i)_{>1}$ for some $i$, then it is necessarily the case that $b.x \notin \{k_1,\dots,k_t\}$. If $x = (k_i)_{>1}$ for some $i$, then unless $b = (k_i)_1$, it will be the case that $b.x \notin \{k_1,\dots,k_t\}$. And so in particular, with probability $\geq \frac{1}{2}$, it will be the case that $b.x \notin \{k_1,\dots,k_t\}$. And so the adversaries success probability is bounded by $\frac{1}{2} + \frac{1}{2^n}$.
\end{proof}

Combining these lemmas, we get that with probability $1-\negl(n)$ over $SG_n,Mix_n$, and for all $\A,c$,
\begin{align}
\Pr[\A(1^n,c,\cdot)\text{ breaks sum-binding}] \leq \frac{1}{2}+\negl(n).
\end{align}

But note that since $\sum_{n=1}^\infty \negl(n)$ converges, by the Borel-Cantelli lemma~\cite{Borel,Cantelli},
$\A$ achieves negligible advantage for all but finitely many input lengths $n\in \N$ with probability $1$ over 
$\{U_\ell\}_{\ell \in \N}$ and $\{\RO_\ell\}_{\ell \in \N}$.

In particular, with probability $1$, this protocol is a computationally sum-binding commitment scheme relative to $(SG,Mix,PSPACE)$.
\end{proof}
\section{$\mathbf{BQP}^{SG,Mix,PSPACE} = \mathbf{QCMA}^{SG,Mix,PSPACE}$}\label{sec:compequivunit}
In this section, we show the equivalence of $\mathbf{BQP}$ and $\mathbf{QCMA}$ relative to $(SG,Mix,PSPACE)$.
\begin{theorem}\label{thm:bqpqcma}
    With probability at least $1-\negl$ over $\{U_\ell\}_{\ell\in\N}$ and $\{\RO_\ell\}_{\ell\in\N}$,
    \begin{align}
    \mathbf{BQP}^{SG,Mix,PSPACE} = \mathbf{QCMA}^{SG,Mix,PSPACE}.
    \end{align}
\end{theorem}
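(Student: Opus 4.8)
The containment $\mathbf{BQP}^{SG,Mix,PSPACE}\subseteq\mathbf{QCMA}^{SG,Mix,PSPACE}$ is trivial (use an empty witness), so the plan is to prove the reverse containment. Fix $L\in\mathbf{QCMA}^{SG,Mix,PSPACE}$ with verifier $\Ver$ making $T=\poly(\secpa)$ oracle queries; by parallel repetition (which goes through relative to oracles) assume WLOG that $\Ver$ has completeness $\ge 0.99$ and soundness $\le 0.01$ while still $T=\poly(\secpa)$. Following~\cite{TQC:Kre21}, the $\mathbf{BQP}$ decider will run a one-time learning phase and then replace $SG,Mix$ by \emph{classically describable} simulated oracles $\widetilde{SG},\widetilde{Mix}$ which, with overwhelming probability over $\{U_n\}_n,\{f_n\}_n$, change $\Pr[\Ver(x,w)\to 1]$ by at most a small constant \emph{simultaneously for all witnesses $w$}; it will then use the $PSPACE$ oracle to compute $\max_w$ of the (now $\mathbf{PSPACE}$-computable) acceptance probability and threshold at $1/2$.

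To build $\widetilde{SG},\widetilde{Mix}$, the decider fixes a constant $c$, chosen large in terms of $T$ and the witness length of $\Ver$, and splits indices at $c\log\secpa$. For $n\le c\log\secpa$ it \emph{learns}: it calls $SG_n$ polynomially many times to coupon‑collect copies of each of the $\le\poly(\secpa)$ states $\ket{\phi_k}$, runs pure‑state tomography (\Cref{thm:statetom}) with inverse‑polynomial error and success probability $1-e^{-\Omega(\secpa)}$ per state to obtain classical descriptions, and then uses these descriptions together with $Mix_n$ queries to read off the entire table of $f_n$ on $\{0,1\}^{2n}$ (domain size $\le\poly(\secpa)$). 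For $n>c\log\secpa$ it instead works with a $\poly(\secpa)$‑size phase‑invariant $2T$‑design on $2n$ qubits (\Cref{lem:tdescons}) in place of $U_n$ and a $\poly(\secpa)$‑describable $T$‑wise independent function in place of $f_n$. It then defines $\widetilde{SG},\widetilde{Mix}$ exactly as $SG,Mix$ but with $U_n,f_n$ replaced by the learned/simulated data; after the learning phase, evaluating $\widetilde{SG},\widetilde{Mix}$ needs no oracle access.

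The core claim is that, except with probability $1-2^{-\secpa^{\Omega(1)}}$ over $\{U_n\}_n,\{f_n\}_n$ and the tomography randomness, for every witness $w$
\[
\Bigl|\Pr[\Ver^{\widetilde{SG},\widetilde{Mix},PSPACE}(x,w)\to1]-\Pr[\Ver^{SG,Mix,PSPACE}(x,w)\to1]\Bigr|\le\tfrac{1}{10}.
\]
For the unitary part this is the argument of~\cite{TQC:Kre21}: on the large‑$n$ blocks the acceptance probability is $2T$‑Lipschitz in $\bigoplus_{n>c\log\secpa}U_n$ (\Cref{lem:lipshitz}), hence within a small constant of its Haar mean except with probability $\exp(-\Omega(\secpa^{2c}/T^2))$ by \Cref{thm:conc}, and the $2T$‑design is indistinguishable from fresh Haar under $T$ queries by \Cref{lem:tdesign}; on the small‑$n$ blocks the learned $U_n$ are the true ones up to tomography error. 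For the function part, \Cref{thm:functionconc} gives, for each $n>c\log\secpa$ and each $w$, a bit‑fixing oracle fixing $\le 2^{3n/4}$ points that is $\negl$‑close to $f_n$ for $\Ver$'s (effectively classical) $f_n$‑queries except with probability $2^{-2^{n/8}}=2^{-\secpa^{\Omega(c)}}$; since $Mix$ outputs $f_n(a,b)$ only when the caller's third register passes the test for $\ket{\phi_a}$ or $\ket{\phi_b}$, which by \Cref{lem:testlem,cor:testcor} forces $a$ (or $b$) to have been output by $\widetilde{SG}_n$ and hence to be uniformly random, the $\le 2^{3n/4}$ fixed points are hit only with negligible probability, so the bit‑fixing oracle is in turn $\negl$‑close to a uniform $f_n'$, which a $T$‑wise independent $f_n'$ reproduces exactly in expectation over its seed — this is precisely the $\mathbf{G}_3\to\mathbf{G}_4\to\mathbf{G}_5$ reasoning of the proof of \Cref{thm:keexists}. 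Chaining the bounds and union‑bounding over the $2^{\poly(\secpa)}$ witnesses and the $\poly(\secpa)$ relevant indices yields the claim; the union bound survives precisely because $c$ is chosen so that both $\exp(-\Omega(\secpa^{2c}/T^2))$ and $2^{-\secpa^{\Omega(c)}}$ dominate $2^{\poly(\secpa)}$.

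Given the claim, the decider runs the learning phase and then queries the $PSPACE$ oracle to compute, exactly, $g:=\max_w \E_{\text{seeds}}\bigl[\Pr[\Ver^{\widetilde{SG},\widetilde{Mix},PSPACE}(x,w)\to1]\bigr]$, where the expectation is over the $2T$‑design seeds and the $T$‑wise‑independent function seeds; this is legitimate since, after learning, $\Ver^{\widetilde{SG},\widetilde{Mix}}$ is a $\poly(\secpa)$‑size quantum circuit with $PSPACE$ gates, whose acceptance probability is $\mathbf{PSPACE}$‑computable, and $\mathbf{PSPACE}^{\mathbf{PSPACE}}=\mathbf{PSPACE}$ can also average over the $2^{\poly(\secpa)}$ seeds and search over the $2^{\poly(\secpa)}$ witnesses. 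By \Cref{lem:tdesign} and $T$‑wise independence, $g$ equals $\max_w$ of the acceptance probability of $\Ver$ with $U_n,f_n$ replaced by fresh Haar/uniform objects for $n>c\log\secpa$, which by the claim is within a small constant of $\max_w\Pr[\Ver^{SG,Mix,PSPACE}(x,w)\to1]$; combined with the amplified gap, accepting iff $g\ge 1/2$ decides $L$ correctly with overwhelming probability over $\{U_n\}_n,\{f_n\}_n$ and the tomography randomness, for all sufficiently large inputs. A Borel--Cantelli argument exactly as in the proofs of \Cref{thm:keexists,thm:bitcommitment} upgrades this to: with probability $1$ over $\{U_n\}_n,\{f_n\}_n$, every $L\in\mathbf{QCMA}^{SG,Mix,PSPACE}$ admits such a decider correct on all but finitely many inputs (the finitely many exceptions being hardcodable), which gives the claimed equality. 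I expect the main obstacle to be the function part: rigorously justifying that $Mix$ only ever evaluates $f$ at uniformly random points, so that the presampling/bit‑fixing source is interchangeable with a uniform oracle, and pushing the witness union bound through the intermediate index range $n\approx c\log\secpa$, which is exactly what dictates the choice of the cutoff constant $c$.
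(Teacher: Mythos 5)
Your proposal is correct and follows essentially the same route as the paper: split the oracle indices at $\Theta(\log\secpa)$, learn the small-$n$ unitaries by tomography and read off the small-$n$ function tables, replace the large-$n$ unitaries/functions by $2T$-designs and ($2T$-wise) independent functions, invoke Haar concentration (\Cref{lem:lipshitz}, \Cref{thm:conc}) and the presampling/bit-fixing concentration (\Cref{thm:functionconc}, \Cref{cor:testcor}) to show the acceptance probability changes by at most a small constant uniformly over all witnesses, then hand the resulting fully-simulable verifier to the $PSPACE$ oracle and finish with a union bound over witnesses and Borel--Cantelli. The paper packages this as a chain of hybrids $V_1,\dots,V_8$ and decides the resulting $\mathbf{QCMA}^{PSPACE}\subseteq\mathbf{PSPACE}$ problem rather than explicitly maximizing over witnesses, but the argument and its ingredients are the same as yours (the only cosmetic slip is that you need a $2T$-wise independent function, not $T$-wise).
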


To show it, we use the following lemma.
\begin{lemma}[Path Recording~\cite{MH24}]\label{thm:pathrecording}
    Let $n,t\in\N$. There exists an efficiently implementable isometry $PR$ operating on an input register $\mathbf{A}$ 
    and a state register $\mathbf{St}$ indistinguishable from a Haar random unitary by oracle access. Formally, let $\A^{(\cdot)}$ be any $t$-query oracle algorithm. Then
    \begin{align}
    \TD\left(\E_{U\gets \HaarUn{2^n}}\left[\A^{U,U^\dagger}\right],{\rm Tr}_{\mathbf{St}}\left(\A^{PR,PR^\dagger}\right)\right) \leq O\left(\frac{t^2}{2^{n/8}}\right).
    \end{align}
    \mor{I guess $\cA^{PR,PR^\dagger}$ is the output state of $\cA^{PR,PR^\dagger}$. Then it does not contain $\mathbf{St}$, so tracing over $\mathbf{St}$ does not make sense.}
\end{lemma}

\begin{proof}[Proof of \Cref{thm:bqpqcma}]
Let $\Pi\in \mathbf{QCMA}^{SG,Mix,PSPACE}$, which implies that there exists a $\mathbf{QCMA}$ verifier $\Ver^{SG,Mix,PSPACE}$ which takes as input 
$x \in \{0,1\}^*$
and a classical witness $w \in \{0,1\}^{q(\abs{x})}$ where $q$ is a polynomial, with completeness $\frac{99}{100}$ and soundness $\frac{1}{100}$. Without loss of generality, we will assume $q(\abs{x}) \geq \abs{x}$.
Let $|x|=n$, and let $T(n)$ be a polynomial upper bound on the running time of $\Ver$ on inputs of length $n$. 

We will now describe a QPT algorithm $\mathcal{B}^{SG,Mix,PSPACE}(x)$ such that, with probability $1-\negl$ over $\{U_\ell\}_{\ell\in\mathbb{N}}$ and $\{\RO_\ell\}_{\ell\in\mathbb{N}}$, $\cal{B}$ decides $\Pi$ on all but finitely many inputs $x$. In particular, we will show that for all but finitely many $x \in \Pi_{yes}$, $\Pr[\mathcal{B}^{SG,Mix,PSPACE}(x)=1]\geq \frac{2}{3}$ and for all but finitely many $x \in \Pi_{no}$, $\Pr[\mathcal{B}^{SG,Mix,PSPACE}(x)=1]\leq \frac{1}{3}$.\mor{What are $\Pi_{yes}$ and $\Pi_{no}$. They were not defined. Are you considering promise problem, not language?}\eli{They are defined in the preliminaries.}  Let $\delta = \frac{1}{600}$ and let $d=\log (c_d q(|x|)^2T(|x|)^4+2)$ for some sufficiently large constant $c_d$. 
\begin{itemize}
    \item For all $\ell\in [d]$,  $\mathcal{B}$ performs tomography on each $U_{\ell}$, producing estimates $\U_{\ell}'$. 
    \takashi{I think putting this sentence here is confusing; In this item we are just explaining how to produce $U'_\ell$, and not explaining how to simulate $U_\ell$. (In fact, at this point, it's unclear for whom $\mathcal{B}$ is simulating the oracle.)} The estimates are produced as follows:
    \begin{itemize}
     \item $\mathcal{B}$ runs $SG_\ell$ on $\ket{0}$ and measures the first register of 
     $\frac{1}{\sqrt{2^\ell}}\sum_{x\in \{0,1\}^\ell \setminus \{0\}}\ket{x}\ket{\phi_x}$.
     $\mathcal{B}$ repeats this until, $\forall k \in [2^\ell]$, there exist $t>2^{d}\cdot \frac{T}{\delta}$ copies of $\ket{\phi_k}$.
    \item  For every $k \in [2^{\ell}]$, $\cB$ runs the state tomography algorithm from Theorem \ref{thm:statetom} on inputs $n=q(n)^2\cdot T(n)$ and $\epsilon = \frac{\delta^2}{2^{2\ell} \cdot T(n)^2}$. This produces an estimate $\ket{\phi'_k}$ such that $|\langle \phi_k|\phi'_k\rangle|^2 \geq 1-\epsilon$ with probability at least $1-e^{-5q(n)^2T(n)}$.
    \item $\mathcal{B}$ defines the map $U'_{\ell}\ket{k}=\ket{\phi'_k}$ for all $k\in [2^\ell]$ \mor{for all $k\in[2^\ell]$}.
 \end{itemize}
    \item Further, for all $\ell\in [d]$, $\mathcal{B}$ learns all functions $\RO_{\ell}$ by querying $SG_\ell$, and produces estimates $\RO_{\ell}'$.  \takashi{Same as above.}
    The estimates are produced as follows:
    \begin{itemize}
        \item $\mathcal{B}$ runs $SG_\ell$ on $\ket{0}$ and measures the first register. $\mathcal{B}$ repeats it until, $\forall k \in [2^\ell]$, there exist $2^d$ copies of $\ket{\phi_k}$. With probability 
        $\geq 1-e^{-q(n)^2}$, 
        this terminates in time polynomial in $n$ for all $\ell$.
        \item For all $x,y\in \{0,1\}^{\ell}$, 
        $\mathcal{B}$ queries $|x\rangle|y\rangle|\phi_x\rangle|0\rangle|0\rangle$ to $Mix_\ell$ and measures the fourth register
        to get $\RO_{\ell}(x,y)$. $\mathcal{B}$ stores the result as $\RO_\ell'(x,y)$.
        \end{itemize}

    \item Next, $\mathcal{B}$ constructs the following $\mathbf{QCMA}^{PSPACE}$ verifier $\wt{\Ver}^{PSPACE}$ which will simulate $\Ver^{SG,Mix,PSPACE}$ by replacing queries to $\{SG_\ell\}_{\ell \in \N}$ and $\{\RO_\ell\}_{\ell \in \N}$ as follows:
    \begin{itemize}
        \item For all $\ell\in [d]$, every query $SG_\ell$ and $Mix_\ell$ make to $U_\ell,\RO_\ell$ are replaced by queries to $U_\ell'$ and $\RO_\ell'$. 
        \item For all $\ell\in [d+1, T(n)]$ simulate all queries to $SG_\ell,Mix_\ell$ with queries to $\SimSGl,\SimMixl$
         with error parameter $\epsilon = \frac{\delta^2}{T(n)}$ with the following modification. Instead of sampling fresh random $\wt{U}_\ell$ and $\wt{\RO}_\ell$ during initialization, $\wt{\Ver}$ will instead simulate these for $\SGSamp$ and $\VerSamp$ using path-recording oracles and compressed oracles respectively. 
    \end{itemize} 
    \item Finally, consider the $\mathbf{QCMA}^{PSPACE}$ problem $\wt{\Pi}$ corresponding to $\wt{\Ver}$ with completeness $\frac{2}{3}$ and soundness $\frac{1}{3}$. Since $\mathbf{QCMA}^A \subseteq \mathbf{PSPACE}^A$ for all classical oracles $A$, and since $\wt{\Pi}\in \mathbf{QCMA}^{PSPACE}\subseteq \mathbf{PSPACE}^{PSPACE}=\mathbf{PSPACE}$, $\mathcal{B}$ will then use the $PSPACE$ oracle to decide $\wt{\Pi}$, and will output $1$ if $x \in \wt{\Pi}_{yes}$ and $0$ if $x\in \wt{\Pi}_{no}$.
\end{itemize}

We will show via a sequence of hybrids that with probability $1-n^3$ over $\{U_\ell\}_{\ell\in\N}$ and $\{\RO_\ell\}_{\ell\in\N}$, for all $(x,w)$,
\begin{align}
\abs{\Pr[\wt{\Ver}^{PSPACE}(x,w) \to 1] - \Pr[\Ver^{SG,Mix,PSPACE}(x,w) \to 1]} \leq \frac{1}{100}.
\end{align}
For a verification algorithm $V_i$, Define $acc(V_i)\coloneqq \Pr[V_i(w,x) \to 1]$, and let $d=\log (3456|x|T(|x|)^4+2)$.
\begin{enumerate}
    \item Let $V_1 = \Ver^{SG,Mix,PSPACE}$.
    
    \item $V_2:$ This is the same as $V_1$ except, for all $\ell \in [d]$, all queries to $\RO_{\ell}$ are replaced with queries to $\RO_{\ell}'$ generated by $\mathcal{B}$. 
    \begin{claim}
        For all $SG,Mix$, for all $\Ver$, with probability $\geq 1-e^{-q(n)^2}$ over the randomness of $\mathcal{B}$, $acc(V_2)=acc(V_1)$.
    \end{claim}
    \begin{proof}
        For all $\ell \in [d]$, $\mathcal{B}$  learns all functions $\RO_{\ell}$ with no error with probability at least $1-e^{-q(n)^2}$, so $acc(V_2)=acc(V_1)$.
    \end{proof}

    \item $V_3$: This is the same as $V_2$ except, for all $\ell \in [d]$, all queries to $\U_\ell$ are replaced with queries to $\U_{\ell}'$ generated by $\mathcal{B}$. 
    \begin{claim}
        For all $SG,Mix$, for all $\Ver$, with probability $\geq 1-e^{-q(n)^2}$ over the randomness of $\mathcal{B}$, $\abs{acc(V_3)-acc(V_2)} \leq 4\delta$.
    \end{claim}
    \begin{proof}
        From~\Cref{thm:statetom}, we have that for each $\ell \in [d]$, for each $k\in \{0,1\}^\ell$, with probability $\geq 1 - e^{-5q(n)T(n)}$, $\abs{\braket{\phi_x|\phi_x'}}^2 \geq 1-\epsilon$, with $\epsilon=\frac{\delta^2}{16\cdot 2^{2\ell}\cdot T^2}$. 
        Taking the union bound over all $d,k$, we get that with probability $\geq 1 - d\cdot 2^d \cdot e^{-5q(n)T(n)} \geq 1-e^{-q(n)}$, for all $\ell \in [d],k\in \{0,1\}^d$, $\abs{\langle\phi_x|\phi_x'\rangle}^2 \geq 1-\epsilon$.

        Let $SG_\ell'$ and $Mix_\ell'$ be $SG_\ell$ and $Mix_\ell$ but with $U_\ell$ replaced by $U'_\ell$. In particular, 
        \begin{align}
                &\abs{\left(\frac{1}{\sqrt{2^\ell-1}}\sum_{k\in \{0,1\}^\ell\setminus \{0\}}\bra{k}\bra{\phi_k'}\right)\left(\frac{1}{\sqrt{2^\ell-1}}\sum_{k\in \{0,1\}^\ell\setminus \{0\}}\ket{k}\ket{\phi_k}\right)}^2\\
                &=\frac{1}{2^\ell - 1}\abs{\sum_{k \in \{0,1\}^\ell \setminus \{0\}} \braket{\phi_k'|\phi_k}}^2\\
                &\geq \frac{2^\ell - 1}{2^\ell - 1}\left(1-\epsilon\right),
        \end{align}
        and therefore by~\Cref{lem:tlem},
        \begin{align}
        \norm{SG_\ell - SG_\ell'}_{\diamond} \leq 2\sqrt{1-\left(1-\frac{\delta}{T(n)}\right)^2}\leq 4\sqrt{\epsilon}.
        \end{align}

        Similarly, by~\Cref{lem:diamond} and the fact that $Mix$ can be seen as applying the unitary version of the measurement $\{\ketbra{\phi_k},I-\ketbra{\phi_k}\}$ in succession for all $k$, we get
        \begin{align}
                \norm{Mix_\ell-Mix_\ell'}_{\diamond} \leq 2\cdot 2^\ell \cdot \sqrt{\epsilon}.
        \end{align}
        \mor{Did we define $Mix_\ell'$?}

        Thus, applying~\Cref{fact}\mor{link is broken} $T$ times and plugging in $\epsilon=\frac{\delta^2}{16\cdot 2^{2\ell}\cdot T^2}$ proves the claim.
    \end{proof}
   \item $V_4:$ This is the same as $V_3$ except, for all $\ell \in [d+1,T(n)]$, queries to $SG_\ell,Mix_\ell$ are replaced with queries to $\SimSGl,\SimMixl$ with error parameter $\frac{\delta^2}{T(n)}$. \takashi{ with error parameter $\epsilon = \frac{\delta^2}{T(n)}$?}    
   \begin{claim}
        With probability $\geq 1-n^{-4}$ over the choice of $SG,Mix$, for all $\Ver$, $\abs{acc(V_4)-acc(V_3)} \leq \delta^2$.
    \end{claim}
    \begin{proof}
        This follows immediately from applying~\Cref{keylem} for each $\ell \in [d+1,T(n)]$. In particular, for each $\ell$, \takashi{Below, shouldn't we replace $n$ with $\ell$?} replacing $SG_\ell,Mix_\ell$ incurs a loss of $\frac{\delta^2}{T(n)} + O\left(\frac{T(n)^{2.5} +q(n)}{2^{d/8}}\right)$  with probability $\geq 1- O(q(n)\cdot 2^{-d/4})$ \takashi{$2^{-d/4}$ instead of $2^{-d/2}$?}  over $\{U_\ell\}_{\ell\in\N},\{\RO_\ell\}_{\ell\in\N}$, and we repeat this $T$ times. 
        For sufficiently large $n,c_d$, $O\left(\frac{T(n)^{2.5} +q(n)}{2^{d/8}}\right) \leq \frac{\delta^2}{2}$ and $T(n)\cdot O(q(n)\cdot 2^{-d/4}) \leq n^{-4}$. \takashi{$2^{-d/4}$ instead of $2^{-d/2}$?}  Note that~\Cref{keylem} relativizes, and so we can apply it even for algorithms which query $SG_{\ell'},Mix_{\ell'}$ for $\ell'\neq \ell$.
    \end{proof}
   \item $V_5:$ This is the same as $V_4$ except for each $\ell \in [d+1,T]$, we replace the unitary sampled by $\SimSGl$
   with the path-recording oracle $PR$ from~\Cref{thm:pathrecording}.
   \begin{claim}
       For all $SG,Mix,\Ver$, $\abs{acc(V_5)-acc(V_4)} \leq \delta$.
   \end{claim}
   \begin{proof}
       This follows directly from~\Cref{thm:pathrecording}. In particular, each replacement incurs a loss of $O\left(\frac{T^2}{2^{d/8}}\right) \leq \frac{\delta}{T}$ for all sufficiently large $n$. And so, the claim follows by repeating for each $\ell\in [d+1,T]$.
   \end{proof}
   \item $V_6$: This is the same as $V_5$, except that for each $\ell \in [d+1,T]$, we replace the random oracle $\wt{\RO}_\ell$ sampled by $\SimSGl$\mor{subscript $\ell$} with a compressed oracle $\mathsf{Ctso}$.   
   \begin{claim}
       For all $SG,Mix,\Ver$, $acc(V_6)=acc(V_5)$.
   \end{claim}
   \begin{proof}
       This follows from Lemma \ref{lem:comporacle}.
   \end{proof}
    \end{enumerate}

    Hence, with probability $1-n^{-3}$ over the choice of $SG,Mix$ we have that for all verifiers $\Ver$, with probability $\geq 4\delta$ over the randomness of $\mathcal{B}$, for all inputs $x\in \{0,1\}^n,w\in \{0,1\}^{poly(n)}$,
    \begin{align}
\abs{\Pr[\wt{\Ver}^{PSPACE}(x,w) \to 1] - \Pr[\Ver^{SG,Mix,PSPACE}(x,w) \to 1]} \leq 6\delta \leq \frac{1}{100}.
\end{align}

    This means that for all $n\in \N$, with probability $1-n^{-3}$ over $SG,Mix$, the following holds
    \begin{enumerate}
        \item For all $x\in \{0,1\}^n \cap \Pi_{yes}$, $\Pr[\mathcal{B}^{SG,Mix,PSPACE}(x) \to 1] \geq 1 - 4\delta$
        \item For all $x \in \{0,1\}^n \cap \Pi_{no}$, $\Pr[\mathcal{B}^{SG,Mix,PSPACE}(x) \to 1] \leq 4\delta$
    \end{enumerate}
    Since $\sum_{i=1}^\infty i\cdot i^{-3} < \infty$, by the Borel-Cantelli lemma~\cite{Borel,Cantelli}, we can swap the quantifiers to say that with probability $1$ over $SG,Mix$, for all but finitely many $n\in \N$,
    \begin{enumerate}
        \item For all $x\in \{0,1\}^n \cap \Pi_{yes}$, $\Pr[\mathcal{B}^{SG,Mix,PSPACE}(x) \to 1] \geq 1 - 4\delta$
        \item For all $x \in \{0,1\}^n \cap \Pi_{no}$, $\Pr[\mathcal{B}^{SG,Mix,PSPACE}(x) \to 1] \leq 4\delta$
    \end{enumerate}
    
    Thus, with probability $1$, there exists some $\mathcal{B}'$ which correctly decides $\Pi$ for all $x$ by hard-coding the points where $\mathcal{B}$ was wrong.

    Therefore, for any fixed $\Pi \in \mathbf{QCMA}^{SG,Mix,PSPACE}$, with probability $1$ over $SG,Mix$, $\Pi \in  \mathbf{BQP}^{SG,Mix,PSPACE}$.

    Taking a union bound over all $\Pi \in \mathbf{QCMA}^{SG,Mix,PSPACE}$, we get that
    \begin{align}
    \mathbf{QCMA}^{SG,Mix,PSPACE} =  \mathbf{BQP}^{SG,Mix,PSPACE}.
    \end{align}
\end{proof}

\section{OWPuzz Constructions}\label{sec:owpz}
\subsection{OWPuzzs From QCCC KE}
\begin{theorem}\label{thm:ketoowpuzz}
    If there exists QCCC KE, then distributional OWPuzzs exist.
\end{theorem}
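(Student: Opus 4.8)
\emph{The plan.} I would let the one-way puzzle sampler simply execute the key-exchange protocol honestly. Concretely, writing $(A,B)$ for the given QCCC KE protocol, define $\Samp(1^\secpa)$ to run $A \leftrightarrows B \to (\tau,a,b)$ and output the key--puzzle pair $(k,s) := (a,\tau)$. Since a QCCC protocol communicates only classical messages and $A$'s output is a classical string, $\Samp$ is a QPT algorithm producing two classical strings, as required. The goal is to show this is a $\beta$-distributional OWPuzz for the constant $\beta \equiv \tfrac13$, which in particular gives a distributional OWPuzz. The intuition (following the construction of distributional OWFs from classical KE in~\cite{FOCS:ImpLub89}) is that correctness pins the shared key down once the transcript is fixed, so a distributional inverter --- which by definition samples from the conditional distribution of $k$ given $s=\tau$ --- is really just recovering the shared key from the transcript, and therefore breaks KE security.

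\emph{Step 1: correctness forces the conditional key distribution to be near-deterministic.} First I would record the standard fact that in a protocol with purely classical communication and no preshared entanglement, conditioned on the full classical transcript $\tau$ the two parties hold a product state, so their outputs $a$ and $b$ are independent given $\tau$. Let $p_\tau$ and $r_\tau$ be the conditional distributions of $a$ and of $b$ given $\tau$, and set $k^\ast_\tau := \argmax_c p_\tau(c)$. Then
\[
1-\negl(\secpa) \;\le\; \Pr[a=b] \;=\; \Exp_\tau\Big[\sum_c p_\tau(c)\,r_\tau(c)\Big] \;\le\; \Exp_\tau\big[\max_c p_\tau(c)\big] \;=\; \Pr[a=k^\ast_\tau],
\]
using correctness, conditional independence, and $\sum_c r_\tau(c)=1$. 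So the transcript-only function $\tau\mapsto k^\ast_\tau$ (possibly inefficient to compute) agrees with the shared key except with negligible probability.

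\emph{Step 2: the reduction.} Suppose for contradiction that there is a non-uniform QPT $\A$ with $\Delta\big((k,s),(\A(1^\secpa,s),s)\big) < \tfrac13$ for infinitely many $\secpa$; fix such an $\secpa$ and let $q_\tau$ be the output distribution of $\A(1^\secpa,\tau)$. From
\[
\Delta\big((k,s),(\A(s),s)\big) \;=\; \Exp_\tau\!\big[\tfrac12\|p_\tau-q_\tau\|_1\big] \;\ge\; \Exp_\tau\big[p_\tau(k^\ast_\tau)-q_\tau(k^\ast_\tau)\big] \;=\; \Pr[a=k^\ast_\tau]-\Pr[\A(\tau)=k^\ast_\tau]
\]
we get $\Pr[\A(\tau)=k^\ast_\tau] > \Pr[a=k^\ast_\tau]-\tfrac13 \ge \tfrac23-\negl(\secpa)$. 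Now consider the KE adversary $\tilde B$ that on input $\tau$ outputs $\A(1^\secpa,\tau)$; it is non-uniform QPT. Since $k^\ast_\tau$ depends only on $\tau$,
\[
\Pr[\tilde B(\tau)=a] \;\ge\; \Pr\big[\A(\tau)=k^\ast_\tau \;\wedge\; a=k^\ast_\tau\big] \;\ge\; \Pr[\A(\tau)=k^\ast_\tau]-\Pr[a\ne k^\ast_\tau] \;>\; \tfrac23-\negl(\secpa).
\]
For the infinitely many $\secpa$ in question this contradicts KE security, which guarantees $\Pr[\tilde B(\tau)=a]\le\negl(\secpa)$ for all sufficiently large $\secpa$. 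Hence no such $\A$ exists and $\Samp$ is a $\tfrac13$-distributional OWPuzz.

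\emph{Main obstacle.} The argument is short and the only real content is Step 1: one must argue that correctness genuinely \emph{forces} $p_\tau$ to concentrate on a single value. The point where care is needed is justifying that conditioning on the classical transcript makes the two parties' outputs independent --- this is precisely the ``classical communication cannot create entanglement'' fact, which holds for QCCC protocols as well. Everything after that is bookkeeping with total variation distance, together with the standard observation that a distributional OWPuzz with a constant inverting gap yields a (distributional) one-way puzzle in the sense of~\cite{STOC:KhuTom24,C:ChuGolGra24}.
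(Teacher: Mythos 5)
Your proposal is correct and takes essentially the same route as the paper: define the puzzle as (transcript, Alice's output), argue via correctness that the conditional key distribution given the transcript is near-deterministic, and turn a distributional inverter into a KE eavesdropper by union-bound bookkeeping. You are in fact slightly more careful than the paper's write-up, which asserts that low concentration of $A_\tau$ would force disagreement without explicitly invoking the conditional independence of $a$ and $b$ given $\tau$ (the ``classical communication cannot create entanglement'' fact) that your Step 1 spells out.
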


\begin{proof}
    Let $(\mathsf{A},\mathsf{B})$ be a QCCC KE protocol. We construct a distributional OWPuzz, $\Samp$, as follows:
    \begin{enumerate}
        \item Run the KE protocol $\mathsf{A}\leftrightarrows \mathsf{B}$, producing a transcript $\tau$ as well as outputs $a,b$.
        \item Output $s\coloneqq \tau, k\coloneqq a$.
    \end{enumerate}

    Define $\mathsf{A}_{\tau}$ to be the distribution of $\mathsf{A}$'s output $a$ conditioned on the transcript being $\tau$. 
    Observe that $\mathsf{A}_{\tau}$ is exactly the distribution over $k$ conditioned on $s = \tau$.
    Let $a_{\tau} = \argmax_a \Pr[\mathsf{A}_{\tau} \to a]$. We can see that with probability $\geq \frac{99}{100}$ over $\tau$, $\Pr[\mathsf{A}_{\tau} \to a_{\tau}] \geq \frac{99}{100}$. Otherwise, $\mathsf{A}$ and $\mathsf{B}$ would disagree on the key with probability $\geq \frac{1}{10000}$, which violates correctness.
    In particular, this means that
    \begin{align}\label{eq:ketoowp1}
        \BigPr{a=a_{\tau}}{\mathsf{A}\leftrightarrows \mathsf{B} \to (\tau,a,b)} \geq \frac{98}{100}.
    \end{align}

    Let us assume towards contradiction that $\Samp$ is not a distributional OWPuzz. Then there exists a QPT adversary $\A$ such that
    \begin{equation}\label{eq:ketoowp2}
        \Delta((\tau,\mathcal{E}(\tau)),(\tau,a)) \leq \frac{1}{100}.
    \end{equation}
    \Cref{eq:ketoowp1,eq:ketoowp2} give us
    \begin{equation}\label{eq:ketoowp3}
        \Pr[a_\tau\gets\mathcal{E}(\tau)] \geq \frac{97}{100}
    \end{equation}
    since otherwise we could inefficiently distinguish $(\tau,\mathcal{E}(\tau))$ from $(\tau,a)$.
    Applying the union bound to~\Cref{eq:ketoowp2,eq:ketoowp3} then gives us
    \begin{equation}
        \Pr[\mathcal{E}(\tau) = a: \mathsf{A}\leftrightarrows \mathsf{B} \to (\tau,a,b)] \geq \frac{95}{100}
    \end{equation}
    and therefore $\mathcal{E}$ is also an attacker for the KE protocol.
    Since the KE protocol $(\mathsf{A},\mathsf{B})$ is secure, so is the distributional OWPuzz $\Samp$.
\end{proof}
\subsection{OWPuzzs From QCCC Commitments}

\begin{theorem}\label{thm:commtoowpuzz}
    If QCCC commitment schemes exist, then distributional OWPuzzs exist.
\end{theorem}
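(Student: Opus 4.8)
The plan is to argue by contradiction: assume QCCC commitments $(\Com,\Rec)$ exist but distributional OWPuzzs do not. The non‑existence of distributional OWPuzzs means that every efficiently sampleable distribution admits efficient conditional sampling: for every QPT sampler $\Samp'(1^\secpa)\to(s,k)$ and every inverse polynomial there is a QPT $\A$ with $\Delta((s,k),(s,\A(s)))$ below that inverse polynomial for infinitely many $\secpa$ (the usual care about quantifiers is handled by bundling samplers and amplifying). I will use this conditional‑sampling power to break the commitment, contradicting its security.

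The central gadget is a \emph{simulated opening}. Write the opening phase as $\ell=\poly(\secpa)$ rounds, where the committer sends $o_i$ and the receiver replies $r_i$. For each $i$ let $\Samp_i$ be the QPT sampler that runs the honest commit phase on a uniformly random input bit, then the honest opening phase up through $r_{i-1}$, and outputs $s=(z,o_1,r_1,\dots,o_{i-1},r_{i-1})$ and $k=o_i$; let $\A_i$ be the corresponding conditional sampler. A corrupted party holding only the commit transcript $z$ can now ``run the opening on its own'' as follows: for $i=1,\dots,\ell$ it produces the committer's message $o_i\gets\A_i(z,o_1,r_1,\dots,o_{i-1},r_{i-1})$, and — if it is the receiver — produces $r_i$ and the final output honestly from its own private state. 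The key step is a hybrid over the $\ell$ rounds showing this simulated interaction is statistically close to the honest opening conditioned on $z$. Two facts drive the induction: (i) each $\A_i$ is robust to small perturbations of its input distribution, since statistical distance cannot increase under the post‑processing $t\mapsto(t,\A_i(t))$, so it suffices to know inductively that the current partial transcript is close to the honest one; and (ii) because \emph{all communication is classical}, conditioned on the transcript the committer's and receiver's private states stay unentangled (the global purified state factorizes), so the honest receiver's real private state is exactly the state that should be fed to its next‑message function — hence the receiver's replies and final output are reproduced faithfully. Telescoping the per‑round error (which can be pushed below any inverse polynomial by choosing the target inversion error small relative to $\ell$) shows that the receiver's final output $m'$ in the simulated opening is, up to negligible error, distributed as the honest receiver's output on a uniformly random committed bit conditioned on $z$; by correctness this is essentially $b\mid z$.

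To finish, split on $\eta\coloneqq\E_z[\min_b\Pr[b\mid z]]$, the residual uncertainty about the committed bit given the commit transcript. If $\eta$ is non‑negligible, consider a corrupted \emph{committer} $\Com'$ that, given the honestly generated $z$ (but not $\rho_{\Com}$), plays $o_i\gets\A_i(\cdots)$ against an honest receiver: the receiver outputs a bit $m'$ that is non‑$\bot$ with overwhelming probability and independent of the true committed bit $b$ given $z$, so $m'$ disagrees with $b$ with probability at least about $\eta$, violating (weak honest) binding. If $\eta$ is negligible, then $b$ is essentially determined by $z$, so a corrupted \emph{receiver} that runs the simulated opening recovers $m'\approx b$ from the commit‑phase transcript alone, distinguishing commitments to $0$ from commitments to $1$ and violating hiding. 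Either way we contradict the security of $(\Com,\Rec)$, so distributional OWPuzzs must exist.

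The main obstacle is making the hybrid argument rigorous: one must control the $\ell$ conditional samplers simultaneously, show that the simulated partial transcripts — and, crucially, the honest receiver's private state, which is only implicitly conditioned on $z$ — remain on distribution as errors accumulate across $\poly(\secpa)$ rounds, and carry out the standard ``infinitely often'' bookkeeping attached to the hypothesis that no distributional OWPuzz exists. Fact (ii) above is precisely where the QCCC (classical‑communication) restriction is essential; with a quantum opening phase the receiver could not consistently maintain a state matching the resampled committer messages, and the simulation would fail.
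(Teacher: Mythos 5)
Your proposal is correct and shares the central mechanism with the paper's proof: the distributional OWPuzz candidate samples a random opening-round prefix, the inverter is used to resample the committer's next message, a round-by-round hybrid controls the accumulated statistical error, and the classical-communication restriction is what makes the honest receiver's private state a well-defined function of the transcript so the resampled opening stays internally consistent. You also correctly anticipate the two places where care is needed — the telescoping of $\ell$ samplers (the paper bundles them into one sampler by putting a uniformly random round index $i$ into the puzzle, so a single inverter serves all rounds with a $1/t$ loss) and the pure-state view of the receiver's state conditioned on a classical transcript.

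Where you genuinely diverge is in the endgame. You split on the residual uncertainty $\eta=\E_z[\min_b\Pr[b\mid z]]$: if $\eta$ is non-negligible, a corrupted committer that replays $\A_i$'s samples makes the honest receiver output a bit that is conditionally independent of the true committed bit given $z$, so it accepts a wrong non-$\bot$ bit with probability $\gtrsim\eta$, breaking weak honest binding; if $\eta$ is negligible, the simulated opening run by a corrupted receiver essentially recomputes $b$ from $z$ alone, breaking hiding. The paper avoids this case split by using binding as an \emph{internal lemma}: it observes that the corrupted committer running $\A$ produces exactly the same output distribution as the simulated opening, so by binding the event $m'\notin\{m,\bot\}$ has negligible probability; combined with the hybrid-plus-correctness bound $\Pr[m'=\bot]\le 1/4+\negl$, this gives $\Pr[m'=m]\ge 3/4-\negl$ unconditionally, yielding a single hiding adversary. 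The two arguments are essentially contrapositives of each other — your case split makes explicit the dichotomy that the paper resolves by invoking binding once — so both are correct; the paper's is a bit tighter since it produces one concrete attacker rather than a non-constructive "one of these two works", and it sidesteps the need to reason about the relation between "non-negligible" (infinitely often) and "negligible" (eventually always) when choosing which attacker to output.
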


\begin{proof}
Let $(\Com,\Rec)$ be any QCCC commitment scheme with an opening stage taking $t$ rounds. We will assume that in each round, the receiver sends one message $x_i$ and receives a response $d_i$.
    
    We will construct a distributional OWPuzz, $\Samp$, as follows:
    \begin{enumerate}
        \item Sample $i\randfrom [t]$.
        \item Sample $m \randfrom \{0,1\}$.
        \item Run the committing stage on $m$, producing a transcript $z$.
        \item Run the first $i$ rounds of the opening stage, producing a transcript $x_1,d_1,\dots,x_i,d_i$.
        \item Output $s \coloneqq (i,z,x_1,d_1,\dots,x_i)$ and $k\coloneqq d_i$.
    \end{enumerate}

    Assume that $\Samp$ is not a distributional OWPuzzs. Then, for all constants $c\ge1$, there exists an adversary $\A$ such that for infinitely many $\secpa\in\mathbb{N}$,
    \begin{align}
    \Delta((i,z,x_1,d_1,\dots,x_i,d_i),(i,z,x_1,d_1,\dots,x_i,\A(i,z,x_1,d_1,\dots,x_i))) \leq \secpa^{-c}.
    \end{align}
However,
    \begin{align}
      &  \Delta((i,z,x_1,d_1,\dots,x_i,d_i),(i,z,x_1,d_1,\dots,x_i,\A(i,z,x_1,d_1,\dots,x_i)))\\
       & = \frac{1}{t}\sum_{i\in[t]}\Delta((z,x_1,d_1,\dots,x_i,d_i),(z,x_1,d_1,\dots,x_i,\A(i,z,x_1,d_1,\dots,x_i)))\\
       & \geq \frac{1}{t}\Delta((z,x_1,d_1,\dots,x_i,d_i),(z,x_1,d_1,\dots,x_i,\A(i,z,x_1,d_1,\dots,x_i)))
    \end{align}
       for each $i\in[t]$.
    Therefore, for each $i\in [t]$,
    \begin{equation}\label{eq:distbound}
        \Delta((z,x_1,d_1,\dots,x_i,d_i),(z,x_1,d_1,\dots,x_i,\A(i,z,x_1,d_1,\dots,x_i))) \leq t\cdot \secpa^{-c}.
    \end{equation}

    We can use $\A$ to build an adversary $\mathcal{B}$ that breaks hiding as follows.
    \begin{enumerate}
        \item Run the committing stage, producing transcript $z$ and receiver state $\wt{\rho}_{\Rece}^0$.
        \item For each $i \in [t]$
        \begin{enumerate}
            \item Run $
             \wt{x}_i
             \gets
            \Rec(i, z, \wt{x}_1, \wt{d}_1, \dots, \wt{x}_{i-1}, \wt{d}_{i-1}, \wt{\rho}_{\Rece}^{i-1}) 
            $ to produce the receiver's message in the $i$th round. Note that this will update $\wt{\rho}_{\Rece}^{i-1}$ to a new state, which we will denote $\wt{\rho}_{\Rece}^i$.
            \item Run $
             \wt{d}_i
             \gets
            \A(i, z, \wt{x}_1,\wt{d}_1,\dots,\wt{x}_i) 
            $ to simulate the committer's message in the $i$th round.
        \end{enumerate}
        \item 
        Produce the receiver's output $m'\gets\Rec(\wt{\rho}_{\Rece}^t)$,
        and output $m'$.
    \end{enumerate}

Now we show that such $m'$ is equal to the originally committed message $m$ with high probability, and therefore $\cB$ breaks hiding.
Intuitively, if $m'$ is another message than $m$, a malicious committer that runs
$\cA$ can open $m'\neq m$, which breaks the binding.
Moreover, we can also show that the probability that $\cB$ outputs $\bot$ is bounded.

More precisely, the argument is the following.
    Note that an adversarial committer could also simulate its messages in this way. That is, define $\Com'$ as follows
    \begin{enumerate}
        \item Let $z$ be the transcript of the committing stage, and let $x_1,d_1,\dots,x_i$ be the current transcript of the opening stage.
        \item Run $d_i\gets\A(i,z,x_1,d_1,\dots,x_i)$.
        \item Output $d_i$.
    \end{enumerate}
    Observe that the resulting distribution on the output of $\Rec$ when interacting with $\Com'$ on a message $m$ is identical to the output of our attacker when run on a transcript $z$ generated from $\Com(m) \rightleftarrows \Rec$. Thus, by binding we have that 
    \begin{align}\label{eq:mprimenotmbot}
        \Pr[\mathcal{B}(z,\rho_{\Rece}) \to m' \notin \{m,\bot\} :m\randfrom \{0,1\},\Com(m) \rightleftarrows \Rec \to (z, \rho_{\Rece})] \leq \negl(\secpa).
    \end{align}
    
    Consider the mixed state $\mathcal{I}_i = (z,\wt{x}_1,\wt{d}_1,\dots,\wt{x}_i,\wt{d}_i,\wt{\rho}_{\Rece}^i)$ produced by $\mathcal{B}$ after round $i$. We will also define the mixed state $\mathcal{J}_i = (z,x_1,d_1,\dots,x_i,d_i,\rho_{\Rece}^i)$ produced by an honest interaction between $\Com$ and $\Rec$ after $i$ rounds of the opening stage.

    We will show by induction that $\norm{\mathcal{I}_i - \mathcal{J}_i}_1 \leq i t \secpa^{-c}$. 
    The base case is trivial, since $\mathcal{I}_0 = \mathcal{J}_0$. 
    For the inductive step, assume that $\norm{\mathcal{I}_i - \mathcal{J}_i}_1 \leq i t \secpa^{-c}$. Define a new distribution $\mathcal{I}_i'$ obtained by applying the $i$th iteration of $\mathcal{B}$ to a sample from $\mathcal{J}_{i-1}$. In particular, $\mathcal{I}_i' = (z,x_1,d_1,\dots,d_{i-1},x_i',d_i',\rho_{\Rece}^{i})$ where $(z,x_1,d_1,\dots,d_{i-1})\randfrom \mathcal{J}_{i-1}$, $x_i' = \Rec(\rho_{\Rec}^{i-1})$ (producing $\rho_{\Rec}^{i}$), and $d_i' = \A(i,z,x_1,\dots,d_{i-1},x_i')$. Since trace distance is contractive under CPTP maps, we have 
    \begin{equation}\label{eq:iiprime}
      \norm{\mathcal{I}_i - \mathcal{I}_i'}_1 \leq \norm{\mathcal{I}_{i-1} - \mathcal{J}_{i-1}}_1 \leq (i-1)t\secpa^{-c}.  
    \end{equation}
    But note that the internal state of the receiver at any point in time is a deterministic function of the transcript. In particular, $\rho_{\Rec}^i$ is the state achieved by running the protocol in superposition and then postselecting on the transcript being $(z,x_1,d_1,\dots,x_i,d_i)$. Furthermore, $x_i'$ and $x_i$ are sampled identically. Thus,
    \begin{align}\label{eq:iprimei}
            &\norm{\mathcal{I}_i' - \mathcal{J}_i}_1\\
            &=\norm{(z,x_1,d_1,\dots,x_i,\A(i,z,x_1,d_1,\dots,x_i),\rho_{\Rec}^i) - (z,x_1,d_1,\dots,x_i,d_i,\rho_{\Rec}^i)}_1\\
            &=\norm{(z,x_1,d_1,\dots,x_i,\A(i,z,x_1,d_1,\dots,x_i)) - (z,x_1,d_1,\dots,x_i,d_i)}_1\\
            &=\Delta((z,x_1,d_1,\dots,x_i,\A(i,z,x_1,d_1,\dots,x_i)),(z,x_1,d_1,\dots,x_i,d_i)) \leq t \secpa^{-c},
    \end{align}
    where the last inequality follows from~\Cref{eq:distbound}.
    Combining~\Cref{eq:iiprime,eq:iprimei} gives us
    \begin{equation}
      \norm{\mathcal{I}_i - \mathcal{J}_i}_1 \leq \norm{\mathcal{I}_i - \mathcal{I}_i'}_1 + \norm{\mathcal{I}_i' - \mathcal{J}_i}_1 \leq (i-1)t\secpa^{-c} + t\secpa^{-c} = i t \secpa^{-c}.  
    \end{equation}
    In particular $\norm{\mathcal{I}_t - \mathcal{J}_t}_1 \leq t^2 \secpa^{-c}$. But we know that running $\Rec$ on the internal state from $\mathcal{J}_t$ will output something which is not $\bot$ with all but negligible probability by correctness. Thus, as long as $\secpa^{-c} \leq \frac{1}{4t^2}$, 
    \begin{equation}\label{eq:mprimenotbot}
        \Pr[\bot \gets \mathcal{B} ] \leq \frac{1}{4} + \negl(\secpa).
    \end{equation}
    Putting together~\Cref{eq:mprimenotmbot,eq:mprimenotbot}, we get 
    \begin{align}
      \Pr[m\gets\mathcal{B}] 
      &= 1 - \Pr[m\not\gets\mathcal{B}] \\
      &\geq 1 - \Pr[\mathcal{B} \to m' \notin \{m,\bot\}] - \Pr[\bot\gets\mathcal{B}]\\
      &\geq \frac{3}{4} - \negl(\secpa),
    \end{align}
    which gives a contradiction. Thus, $\Samp$ is a distributional OWPuzz.
\end{proof}
\subsection{2 Round QKD Implies State Puzzles}

\begin{definition}[State Puzzles~\cite{STOC:KhuTom25}]\label{def:statepuzzles}
    A state puzzle is defined by a uniform QPT algorithm $\Samp(1^\secpa)$ which outputs a pair $(s,\ket{\psi_s})$ of a bit string $s$ and a quantum state $\ket{\psi_s}$ such that given $s$, it is quantum computationally infeasible to output $\rho$ which overlaps notably with $\ket{\psi_s}$.
    Formally, we define $\sigma_{\secpa}$ to be the mixed state corresponding to the output of $\Samp(1^\secpa)$.
    For a non-uniform QPT $\A$, we define $\sigma_{\secpa}^\A$ to be the mixed state corresponding to the following process
    \begin{align}
    (s,\ket{\psi_s})\gets\Samp(1^\secpa)\\
    \rho\gets\A(1^n,s)\\
    \text{Output }(s,\rho).
    \end{align}

    We say that $\Samp$ is a state puzzle if for all QPT $\A$,
    \begin{align}
   \mathsf{TD} (\sigma_\secpa, \sigma^\A_\secpa) \geq 1 - \negl(\secpa).
    \end{align}
\end{definition}

\begin{remark}
    Note that this definition at first seems different than the one in~\cite{STOC:KhuTom25}, which says that the expected overlap between $\rho$ and $\ket{\psi_s}$ is negligible. Formally,
    \begin{align}
    \BigE{{\rm Tr}(\ketbra{\psi_s}\rho)}
    {
    (s,\ket{\psi_s})
    \gets
    \Samp(1^\secpa)
    \\\
    \rho\gets \A(1^n,s)} \leq \negl(\secpa).
    \end{align}
    
    However, these definitions are equivalent. In particular, the best distinguisher between $\sigma$ and $\sigma^\A$ is the map which on input $(s,\ket{\psi_s})$ applies the measurement $\{\ketbra{\psi_s}, I-\ketbra{\psi_s}\}$. But the advantage of this distinguisher is exactly 
    \begin{align}
    1 - \BigE{{\rm Tr}(\ketbra{\psi_s}\rho)}{
    (s,\ket{\psi_s})
    \gets
    \Samp(1^\secpa)
    \\\
    \rho\gets \A(1^n,s)}.
    \end{align}
\end{remark}

For simplicity of presentation, we will not consider quantum advice. Observing the proof, our result also holds for quantum advice.

Informally, QKD describes a two party protocol where two QPT parties, Alice and Bob, send each other authenticated classical messages and unauthenticated quantum messages
that satisfies the following. 
\begin{enumerate}
    \item (Correctness): Either Alice and Bob output the same key, or one of the parties outputs $\bot$.
    \item (Security): If either Alice or Bob does not output $\bot$, then Eve cannot learn their final key.
\end{enumerate}

We consider a weaker notion of security, which we call one-sided computational security. In particular, we will only require that if Bob does not output $\bot$, then Eve cannot predict Bob's key. This definition immediately follows from the more standard stronger definition. Since we build OWPuzzs from the weaker definition, it is also possible to build OWPuzzs from the stronger definition.

\begin{definition}[Modified from~\cite{C:MalWal24}]
A two-round quantum key distribution (2QKD) scheme is defined as a tuple of algorithms 
$(\first,\second,\decode)$ with the following syntax.
\begin{enumerate}
    \item $\first(1^\secpa) \to (msg,\mu,st)$: A QPT algorithm which, on input the security parameter $1^\secpa$, outputs a message, consisting of a classical component $msg$ and a quantum state $\mu$, as well as an internal quantum state $st$. We allow $\mu$ to be a mixed state, and in particular $\mu$ and $st$ may be entangled.
    \item $\second(msg,\mu) \to \{(resp,\eta, k), \bot\}$: A QPT algorithm which, on input the first message $(msg,\mu)$, outputs a response consisting of a classical component $resp$ and a quantum mixed state $\eta$, as well as a classical key $k\in \{0,1\}^\secpa$. We also allow $\second$ to output a special symbol $\bot$, denoting rejection.
    \item $\decode(st,resp,\eta) \to \{k,\bot\}$: A QPT algorithm which, on input the internal state $st$ and the response $(resp,\eta)$, outputs a key $k\in\{0,1\}^\secpa$. We also allow $\decode$ to output a special symbol $\bot$, denoting rejection.
\end{enumerate}
\end{definition}

\begin{definition}[2QKD Correctness]
    We say that a 2QKD protocol $(\first,\second,\decode)$ satisfies correctness if there exists a negligible function $\negl$ such that for all $\secpa \in \N$, 
    \begin{align}
    \Pr[
    \bot\gets
    \second(msg,\mu) 
    :
     (msg,\mu,st)
     \gets
    \first(1^\secpa) 
    ] \leq \negl(\secpa)
    \end{align}
    and
    \begin{align}
    \BigPr{
    \decode(st,resp,\eta) = k}
    {
     (msg,\mu,st)
     \gets
    \first(1^\secpa) 
    \\ 
    (resp,\eta,k)
    \gets
    \second(msg,\mu) 
    } \geq 1 - \negl(\secpa).
    \end{align}
\end{definition}

\begin{definition}[2QKD One-sided Computational Security]
    For any non-uniform QPT algorithm $\A$, we define the following security game 
    $\mathbf{QKDSec}(\A,1^\secpa)$ as follows.
    \begin{enumerate}
        \item Run 
        $
        (msg,\mu,st)
        \gets
        \first(1^\secpa) 
        $ and send $(msg,\mu)$ to $\A$.
        \item $\A$ returns a quantum state $\rho$, which may be entangled with $\A$'s internal state.
        \item Run 
        $
         \{(resp,\eta,k),\bot\}
         \gets
        \second(msg, \rho) 
        $.
        \item Forward the response to $\A$.
        \item $\A$ outputs $k' \in \{0,1\}^\secpa$.
        \item The game outputs $1$ if $k' = k \neq \bot$.
    \end{enumerate}

    We say a 2QKD protocol $(\first,\second,\decode)$ satisfies one-sided computational security 
    if 
    for any non-uniform QPT algorithm $\A$,
    there exists a negligible function $\negl$ such that 
    \begin{align}
    \Pr[1\gets \mathbf{QKDSec}(\A,1^\secpa)] \leq \negl(\secpa).
    \end{align}
\end{definition}

\begin{theorem}\label{thm:2qkd}
    If there exists a 2QKD protocol satisfying correctness and one-sided computational security, then OWPuzzs exist.
\end{theorem}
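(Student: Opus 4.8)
The plan is to go through the notion of a state puzzle (\Cref{def:statepuzzles}), using the fact that state puzzles are equivalent to one-way puzzles~\cite{cryptoeprint:2024/1490}. Given a 2QKD protocol $(QKDFirst,QKDSecond,QKDDecode)$, I would define $\Samp(1^\secpa)$ to run a coherent (deferred-measurement) version of $QKDFirst(1^\secpa)$ in which all internal randomness is kept in an ancilla register $R$, then measure the classical component $msg$ of Alice's first message in the computational basis, and output the puzzle $s=msg$ together with the key state $\ket{\psi_s}$ being the post-measurement pure state on the registers $(\mu,st,R)$ holding Alice's first quantum message, her internal state, and the ancilla. Two sanity checks make this a syntactically valid state puzzle: conditioned on $msg$ the state on $(\mu,st,R)$ is pure, and tracing out $(st,R)$ gives exactly the reduced state of Alice's first message in an honest execution, so $QKDSecond(msg,\mu)$ behaves identically to the honest protocol.

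Next I would assume for contradiction that $\Samp$ is not a state puzzle, so there is a QPT $\A_{SP}$ and a constant $c$ with $\TD(\sigma_\secpa,\sigma^{\A_{SP}}_\secpa)<1-\secpa^{-c}$ for infinitely many $\secpa$, where $\sigma_\secpa$ and $\sigma^{\A_{SP}}_\secpa$ are the cq-states of \Cref{def:statepuzzles}. From $\A_{SP}$ I would build a QPT adversary $\mathcal{B}$ for the one-sided computational security game $QKDSec$: upon receiving $(msg,\mu)$ in step~1, $\mathcal{B}$ discards the challenger's quantum message, runs $\rho\gets\A_{SP}(1^\secpa,msg)$ and views $\rho$ as a state on registers $(\tilde\mu,\tilde{st},\tilde R)$, and returns $\tilde\mu$ as its reply. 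The challenger then runs $QKDSecond(msg,\tilde\mu)$; if this aborts the game already rejects, and otherwise $\mathcal{B}$ receives $(resp,\eta)$, runs $k'\gets QKDDecode(\tilde{st},resp,\eta)$, and outputs $k'$.

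The analysis compares $QKDSec(\mathcal{B},1^\secpa)$ with an ideal experiment in which $(\tilde\mu,\tilde{st},\tilde R)$ is replaced by the honest key state $\ket{\psi_{msg}}$ sampled jointly with $msg$. Since neither $QKDSecond$ nor $QKDDecode$ touches the ancilla register, the ideal experiment is, after tracing it out, exactly an honest execution of the 2QKD protocol, so by correctness $QKDSecond$ does not abort and $QKDDecode$ recovers $k$, both except with negligible probability; hence the ideal experiment outputs $1$ with probability $\geq 1-\negl(\secpa)$. Both experiments are obtained by applying the same fixed sequence of CPTP maps followed by the comparison $k'=k\neq\bot$ to the cq-state output of $\Samp$ (namely $\sigma^{\A_{SP}}_\secpa$ in the real game and $\sigma_\secpa$ in the ideal one), so their acceptance probabilities differ by at most $\TD(\sigma^{\A_{SP}}_\secpa,\sigma_\secpa)<1-\secpa^{-c}$. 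Therefore $\Pr[1\gets QKDSec(\mathcal{B},1^\secpa)]\geq\secpa^{-c}-\negl(\secpa)$ for infinitely many $\secpa$, which is non-negligible and contradicts one-sided computational security; hence $\Samp$ is a state puzzle and one-way puzzles exist.

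The step I expect to be the main obstacle is making the reduction from the trace-distance guarantee of the state puzzle to the advantage in $QKDSec$ fully rigorous: one must verify carefully that $\mathcal{B}$'s entire interaction with the challenger, including the internal coin flips of $QKDSecond$, the delivery of $(resp,\eta)$, the run of $QKDDecode$, and the final test $k'=k\neq\bot$, can be packaged as one channel acting on the state puzzle's output register conditioned on $s=msg$, so that data processing transfers the $1-\secpa^{-c}$ bound verbatim to the acceptance gap. A minor but necessary bookkeeping point is confirming that running $QKDFirst$ coherently and measuring only $msg$ preserves the honest joint distribution of $(msg,\mu)$ and that enlarging the key with the purifying register $R$ is harmless.
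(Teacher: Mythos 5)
Your proof is correct and follows essentially the same route as the paper's: define the state puzzle by running a purified $QKDFirst$, measuring $msg$, and taking the residual pure state as the key; then argue by data processing of trace distance that a state-puzzle breaker yields a $QKDSec$ adversary whose acceptance probability is within the trace-distance bound of the honest execution, which accepts with all but negligible probability by correctness. The only surface difference is bookkeeping: the paper absorbs the purifying ancilla into the internal state register $C$ (so the key is over registers $B,C$), while you keep it as a separate register $R$ and note that enlarging the key is harmless; and the paper phrases the comparison via an explicit inefficient adversary $\A''$ that returns the exact honest state, whereas you call the same thing the "ideal experiment" — these are the same argument.
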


\begin{proof}
    We will assume without loss of generality that the combined quantum state $\mu,st$ is a single pure state $\ket{\phi}_{\mathbf{BC}}$ over two registers. We can do this by running $\first$, deferring all measurements to the end. This will end up with a state over four registers $\mathbf{A,B,C,D}$ where $\mathbf{A}$ is the output register for $msg$, 
    $\mathbf{B}$ is the output register for $\mu$, 
    $\mathbf{C}$ is the output register for $st$, 
    and $\mathbf{D}$ is some ancilla register. We will then set $st$ to be the state over the registers $\mathbf{C}$ and $\mathbf{D}$ combined. $\decode$ will simply ignore the $\mathbf{D}$ register. This leads to a protocol with identical behavior, but where the state $\mu,st$ is pure.

    In particular, we will build a state puzzle. $\Samp$ will be defined as follows.
    \begin{enumerate}
        \item Run 
        $
         (msg,\ket{\phi}_{\mathbf{BC}})
         \gets
        \first(1^\secpa) 
        $.
        \item Output $s = msg$, $\ket{\psi_s} = \ket{\phi}_{\mathbf{BC}}$.
    \end{enumerate}

    We then claim that $\Samp$ is a state puzzle. In particular, let $p$ be a polynomial and let $\A$ be a non-uniform QPT adversary breaking state puzzle security of $\Samp$ with advantage $\epsilon(n)$. Formally, for $\sigma_n,\sigma_n^\A$ from~\Cref{def:statepuzzles}
    \begin{align}
    \TD(\sigma_n, \sigma_n^\A) \leq 1 - \epsilon(n)
    \end{align}
    for infinitely many $n$.

    We will construct an attacker $\A'$ breaking $QKD$ one-sided computational security with advantage dependent on $\epsilon$.
    \begin{enumerate}
        \item On receiving $(msg,\mu)$, run 
        $
         \ket{\phi}_{\mathbf{BC}}
         \gets
        \A(1^n,msg) 
        $.
        \item Output $\ket{\phi}_\mathbf{B}$, keeping internal state $\ket{\phi}_\mathbf{C}$.
        \item On receiving $(resp,\eta)$, run 
        $
         k'
         \gets
        \decode(\ket{\phi}_\mathbf{B}, resp, \eta) 
        $.
        \item Output $k'$.
    \end{enumerate}

    We can construct an inefficient adversary $\A''$ which acts the same as $\A'$, but always perfectly inverts the puzzle. That is, $\A''$ is defined as follows.
    \begin{enumerate}
        \item On receiving $(msg,\mu)$, output $\ket{\phi_{msg}}$, the residual state of $\first$'s quantum output conditioned on the classical message being $msg$.
        \item Output $\ket{\phi_{msg}}_\mathbf{B}$, keeping internal state $\ket{\phi_{msg}}_\mathbf{C}$.
        \item On receiving $(resp,\eta)$, run 
        $
         k'
         \gets
        \decode(\ket{\phi_{msg}}_\mathbf{B}, resp, \eta) 
        $.
        \item Output $k'$.
    \end{enumerate}
    By the definition of $\A$,
    \begin{align}
    \TD((msg,\A(1^n,msg)),(msg,\ket{\phi_{msg}})) \leq 1 - \epsilon(n)
    \end{align}
    and so since the only difference between $\mathbf{QKDSec}(\A')$ and $\mathbf{QKDSec}(\A'')$ is that $\A(msg) = \ket{\phi}$ is replaced by $\ket{\phi_{msg}}$, we get
    \begin{align}\label{eq:qkd1}
        \abs{\Pr[1\gets \mathbf{QKDSec}(\A'',1^n)] - \Pr[1\gets \mathbf{QKDSec}(\A',1^n)]} \leq 1- \epsilon(n).
    \end{align}

    But we can also observe that $\mathbf{QKDSec}(\A'',1^n)$ simply runs the honest QKD protocol, 
    with $\A''$ running $\decode$ on the correct state. Thus, by correctness, we have that in $\mathbf{QKDSec}(\A'',1^n)$, the following hold
    \begin{align}
            \Pr[k = \bot] &\leq \negl(\secpa)\\
            \Pr[k' = k] &\geq 1 - \negl(\secpa).
    \end{align}
    Thus, 
    \begin{align}\label{eq:qkd2}
        \Pr[1\gets \mathbf{QKDSec}(\A'',1^n) ] \geq 1 - \negl(\secpa).
    \end{align}
    \Cref{eq:qkd1,eq:qkd2} together give us
    \begin{align}
    \Pr[1 \gets \mathbf{QKDSec}(\A',1^n)] \geq \epsilon(n) - \negl(\secpa)
    \end{align}
    and therefore by 2QKD security, $\epsilon(n) \leq \negl(\secpa)$.
\end{proof}

\begin{theorem}[Corollary of~\Cref{thm:keexists}]
    There exists an oracle relative to which 2QKD exists, but $\mathbf{BQP} = \mathbf{QCMA}$.
\end{theorem}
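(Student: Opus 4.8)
The plan is to derive this as an immediate corollary of the three results already established relative to $(SG,Mix,PSPACE)$: namely~\Cref{thm:keexists} (QCCC KE exists), \Cref{thm:2qkdexists} (2QKD exists), and~\Cref{thm:bqpqcma} ($\mathbf{BQP}^{SG,Mix,PSPACE} = \mathbf{QCMA}^{SG,Mix,PSPACE}$). Concretely, I would take the separating oracle to be $(SG,Mix,PSPACE)$ itself, and simply invoke these theorems together.

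First I would recall why the KE protocol constructed in the proof of~\Cref{thm:keexists} is in fact a $2$QKD protocol. The protocol consists of exactly two classical messages (Alice sends $x$, Bob sends $y$), and the parties then locally compute $Mix_n(x,y,\ket{\phi_x})$ and $Mix_n(x,y,\ket{\phi_y})$ respectively, with no quantum communication at all. In particular, setting $QKDFirst$ to send $msg = x$ with empty quantum component, $QKDSecond$ to send $resp = y$ with empty quantum component and output key $f_n(x,y)$, and $QKDDecode$ to output $f_n(x,y)$, yields a syntactically valid $2$QKD scheme. Correctness is inherited directly from the agreement property of the KE protocol. Validity (that an adversary controlling the quantum channel cannot cause Alice and Bob to disagree without one outputting $\bot$) is automatic here because there is no quantum channel to tamper with, and the classical channel is authenticated; and one-sided computational security follows verbatim from the KE security statement proved in~\Cref{thm:keexists}, which holds even against adversaries with the $PSPACE$ oracle.

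Then I would combine this with~\Cref{thm:bqpqcma}: with probability $1$ over $\{U_n\}_{n\in\N}$ and $\{f_n\}_{n\in\N}$, both "$2$QKD exists relative to $(SG,Mix,PSPACE)$" and "$\mathbf{BQP}^{SG,Mix,PSPACE} = \mathbf{QCMA}^{SG,Mix,PSPACE}$" hold simultaneously (the intersection of two probability-$1$ events still has probability $1$). Fixing any choice of oracles in this intersection gives the desired single oracle.

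There is no real obstacle here; the only thing to be careful about is the bookkeeping that a QCCC KE protocol genuinely fits the $2$QKD definition given in~\Cref{def:statepuzzles}'s surrounding section --- i.e., that allowing empty quantum messages is permitted by the syntax, and that the weaker one-sided security notion is what the KE security bound gives. Since~\Cref{thm:2qkdexists} already records exactly this observation ("every two round QCCC KE protocol is also a $2$QKD protocol"), the present statement is obtained by additionally quoting~\Cref{thm:bqpqcma}, so the proof is one or two lines once those are cited.

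\begin{proof}
    By~\Cref{thm:2qkdexists}, with probability $1$ over $\{U_n\}_{n\in\N}$ and $\{f_n\}_{n\in\N}$, a $2$QKD protocol exists relative to $(SG,Mix,PSPACE)$. By~\Cref{thm:bqpqcma}, with probability $1$ over the same randomness, $\mathbf{BQP}^{SG,Mix,PSPACE} = \mathbf{QCMA}^{SG,Mix,PSPACE}$. The intersection of these two probability-$1$ events is nonempty, so there is a fixed choice of $\{U_n\}_{n\in\N}$ and $\{f_n\}_{n\in\N}$ for which both hold. Relative to $(SG,Mix,PSPACE)$ for this choice, $2$QKD exists and $\mathbf{BQP} = \mathbf{QCMA}$.
\end{proof}
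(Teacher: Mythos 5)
Your proof is correct and takes essentially the same approach as the paper: it observes that the two-round QCCC KE protocol of~\Cref{thm:keexists} is already a $2$QKD protocol (as recorded in~\Cref{thm:2qkdexists}) and combines this with~\Cref{thm:bqpqcma} via a union of probability-$1$ events over the oracle choice. The additional explicit bookkeeping you include (empty quantum component, intersection of measure-$1$ sets) is fine and just spells out what the paper leaves implicit.
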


\begin{proof}
    This oracle is exactly the same as the QCCC KE oracle. In particular, we observe that our QCCC KE protocol consists of exactly two rounds. But any 2 round QCCC KE protocol is trivially also a 2QKD protocol. Therefore the theorem follows.
\end{proof}

\ifnum\llncs=0
\section*{Acknowledgement}
Eli Goldin is supported by a NSF Graduate Student Research Fellowship.
Tomoyuki Morimae is supported by JST CREST JPMJCR23I3, JST Moonshot JPMJMS20615-1-1, JST FOREST, MEXT QLEAP, the Grant-in Aid for Transformative Research Areas (A) 21H05183,
and the Grant-in-Aid for Scientific Research (A) No.22H00522.
\fi

\newcommand{\etalchar}[1]{$^{#1}$}


\appendix
\subsection{Proof of \Cref{lem:sgsim}}
\label{sec:proof_lem_sgsim}
Here we give a proof of \Cref{lem:sgsim}, which is used to show \Cref{keylem}.
\Cref{lem:sgsim} is essentially a generalization of Lemma 4.4 from~\cite{goldin2025translating}. To our knowledge, the idea that superpositions of states with a random phase can be simulated using copies of the states first appeared in~\cite{ITCS:Zhandry24b} (see Lemma 5.5). A generalization of this technique also appears in a concurrent (but not independent) work~\cite{notreleasedyet}. For completeness, we include a full proof along with an explicit description of the simulator.

\begin{lemma}[\Cref{lem:sgsim} restated]
    There exists an efficient oracle algorithm $\Sim^{(\cdot)}(1^t)$ such that the following is true:
    Let $n$ be any natural number.
    Let $\{\ket{\phi_k}\}_{k\in \{0,1\}^n}$ be any collection of states. For a function $f:\{0,1\}^n \to \{0,1\}^n=[2^n]$, define the state 
    \begin{align}
    \ket{\phi^f}\coloneqq\frac{1}{\sqrt{2}}\ket{0} - \frac{1}{\sqrt{2}}\frac{1}{\sqrt{2^n-1}}\sum_{k\in \{0,1\}^n\setminus \{0\}}\omega_{2^n}^{f(k)}\ket{k}\ket{\phi_k}.
    \end{align}
    \takashi{
   We take sum over $k\in \{0,1\}^n\setminus \{0\}$ in the above, but we take sum over $k\in \{0,1\}^n$ in the main proof. We should make them consistent.
    }
    Let $Gen^{\{\ket{\phi_k}\}_k}$ be the CPTP oracle which acts as follows
    \begin{enumerate}
        \item Sample $k\gets \{0,1\}^n\setminus \{0\}$.
        \item Output $\ket{k}\ket{\phi_k}$.
    \end{enumerate}
    Then for all $t$, we have
    \begin{align}
    \E_f\left[|\phi^f\rangle\langle\phi^f|^{\otimes t}\right] = \Sim^{Gen^{\{|\phi_k\rangle\}_k}}(1^t).
    \end{align}
\end{lemma}

\begin{proof}
    For ease of notation, we will write $\ket{\phi_0}=\ket{0}$ and $f(0)=0$. 

    $\Sim(1^t)$ will be defined as follows
    \begin{enumerate}
        \item Sample $\eta\gets Bin(t,1/2)$ where $Bin(t,1/2)$ is the binomial distribution. That is,
        \begin{align}
        \Pr[Bin(t,1/2)=\eta] = \frac{{t\choose \eta}}{2^t}.
        \end{align}
        \item Query $Gen$ a total of $\eta$ times, receiving output states $\ket{k_1}\ket{\phi_{k_1}}\dots \ket{k_\eta}\ket{\phi_{k_\eta}}$.
        \item For $i>\eta$, set $k_i=0$. For ease of notation, we will define $\ket{\phi_0}=\ket{0}$.
        \item Generate the state
        \begin{align}
        \left(\sum_{\pi \in Sym(t)} \bigotimes_{i=1}^t\ket{k_{\pi(i)}}_{\mathbf{A_i}}\ket{0}_{\mathbf{B_i}}\right)\otimes \left(\bigotimes_{j=1}^t \ket{k_j}_{\mathbf{C_j}}\ket{\phi_{k_j}}_{\mathbf{D_j}}\right).
        \end{align}
        \item For all $i,j$, apply $SWAP_{\mathbf{B_i},\mathbf{D_j}}$ controlled on registers 
        $\mathbf{A_i},\mathbf{C_j}$ containing the same value in the standard basis. This results in state
        
        \begin{align}
        \left(\sum_{\pi \in Sym(t)} \bigotimes_{i=1}^t\ket{k_{\pi(i)}}_{\mathbf{A_i}}\ket{\phi_{k_{\pi(i)}}}_{\mathbf{B_i}}\right)\otimes \left(\bigotimes_{j=1}^t \ket{k_j}_{\mathbf{C_j}}\ket{0}_{\mathbf{D_j}}\right).
        \end{align}
        \item Finally, output registers $\mathbf{A_1B_1}\dots \mathbf{A_tB_t}$. This will contain the state
      
        \begin{align}
        \sum_{\pi \in Sym(t)} \bigotimes_{i=1}^t\ket{k_{\pi(i)}}_{\mathbf{A_i}}\ket{\phi_{k_{\pi(i)}}}_{\mathbf{B_i}}.
        \end{align}
    \end{enumerate}

    For a vector $\vec{k}\in (\{0,1\}^n)^t$, define $type(\vec{k}) \in [2^n]^{2^n}$ to be the vector defined by
    \begin{align}
    type(\vec{k})_i\coloneqq\#\text{ times }i\text{ appears in }\vec{k}.
    \end{align}
    We will denote $|T|$ by the number of $\vec{k}$ such that $type(\vec{k})=T$.
    \mor{This is slightly hard to understand because there are two $=$. How about defining $|T|$ by word, such as
    "We will denote $|T|$ by the number of $\vec{k}$ such that $type(\vec{k})=T$"}
    and
    \begin{align}
    types \coloneqq \left\{T\in [2^n]^{2^n}:\sum_i T_i = t\right\}.
    \end{align}
    We will then define
    \begin{align}
    \ket{\psi_T}\coloneqq \frac{1}{\sqrt{|T|}}\sum_{\vec{k}\in T}\bigotimes_{j=1}^t \ket{k_j}\ket{\phi_{k_j}}.
    \end{align}
    Note that after sampling $\vec{k}$, $\Sim^{Gen}(1^t)$ outputs $\ket{\psi_{type(\vec{k})}}$.

    Note that for any given $T$, the probability that $c=T_0$ is chosen is $\frac{{t\choose t-T_0}}{2^t}$. After this, the probability that $k_1,\dots,k_c$ are chosen such that $(k_1,\dots,k_c,0,\dots,0)\in T$ is
    \begin{align}
    \frac{\frac{|T|}{{t\choose t-T_0}}}{(2^n-1)^{t-T_0}}.
    \end{align}
    This is because $T/{t\choose t-T_0}$ is the number of vectors in $T$ which begin with $t-T_0$ non-zeros.
    
    In particular, consider the distribution over types (i.e. elements of $[2^n]^{2^n}$ with total weight $t$) defined by
    \begin{align}
    \Pr[\mathcal{D} \to T] = \frac{{t\choose t-T_0}}{2^t}\cdot \frac{\frac{|T|}{{t\choose t-T_0}}}{(2^n-1)^{t-T_0}} = \frac{|T|}{2^t\cdot (2^n-1)^{t-T_0}}.
    \end{align}
    $\Sim^{Gen}(1^t)$ outputs the state 
    \begin{align}
    \E_{T\gets \mathcal{D}}[\ketbra{\phi_{T}}].
    \end{align}

    We will now take a closer look at the state $\ket{\phi^f}^{\otimes t}$.
    \begin{align}
            \ket{\phi^f}^{\otimes t} = \left(\frac{1}{\sqrt{2}}\ket{0} - \frac{1}{\sqrt{2}}\frac{1}{\sqrt{2^n-1}}\sum_{k\in \{0,1\}^n\setminus \{0\}}\omega_{2^n}^{f(x)}\ket{k}\ket{\phi_k}\right)^{\otimes t}\nonumber\\
            \sum_{\vec{k}\in \{0,1\}^n} \left(\frac{1}{\sqrt{2}}\right)^t\left(\frac{-1}{\sqrt{2^{n}-1}}\right)^{t-type(\vec{k})_0}\bigotimes_{j=1}^t \omega_{2^n}^{f(k_j)}\ket{k_j}\ket{\phi_{k_j}}\nonumber\\
            \sum_{\vec{k}\in \{0,1\}^n} \left(\frac{1}{\sqrt{2}}\right)^t\left(\frac{-1}{\sqrt{2^{n}-1}}\right)^{t-type(\vec{k})_0}\omega_{2^n}^{\sum_{j=1}^tf(k_j)}\bigotimes_{j=1}^t \ket{k_j}\ket{\phi_{k_j}}.
    \label{eq:simeq1}
    \end{align}
    But note that $\sum_{j=1}^t f(k_j) = f \cdot type(k_j)$ where $f$ is viewed as a vector in $[2^n]^{2^n}$. 
    
    And so
    \begin{align}
            \ket{\phi^f}^{\otimes t} &= \sum_{T\in types} \omega_{2^n}^{f\cdot T}\left(\frac{1}{\sqrt{2}}\right)^t\left(\frac{-1}{\sqrt{2^{n}-1}}\right)^{t-T_0} \sum_{\vec{k} \in T}\bigotimes_{j=1}^t \ket{k_j}\ket{\phi_{k_j}}\\
            &=\sum_{T\in types}\omega_{2^n}^{f\cdot T}(-1)^{t-T_0}\left(\sqrt{\frac{|T|}{2^t\cdot (2^n-1)^{t-T_0}}}\right)\ket{\psi_T}.
    \end{align}

    We can then explicitly compute
   
    \begin{align}
           & \E_{f}\left[|\psi^f\rangle\langle\psi^f|^{\otimes t}\right] \\
           &= \frac{1}{(2^n)^{2^n}}\sum_{f,T,T'\in types} \omega_{2^n}^{f\cdot T - f\cdot T'}(-1)^{T_0+T_0'}\sqrt{\frac{|T|}{2^t\cdot (2^n-1)^{t-T_0}}} \sqrt{\frac{|T'|}{2^t\cdot (2^n-1)^{t-T_0'}}} \ket{\psi_T}\bra{\psi_{T'}}\\
           &= \frac{1}{(2^n)^{2^n}}\sum_{f,T\in types}\left(\sum_f \omega_{2^n}^{f\cdot 0}\right) \frac{|T|}{2^t\cdot (2^n-1)^{t-T_0}} \ketbra{\psi_T}\\
            &+\frac{1}{(2^n)^{2^n}}\sum_{T\neq T'\in types} \left(\sum_f \omega_{2^n}^{f\cdot (T-T')}\right)(-1)^{T_0+T_0'}\sqrt{\frac{|T|\cdot |T'|}{2^{2t}\cdot (2^n-1)^{2t-T_0-T_0'}}}  \ket{\phi_T}\bra{\psi_{T'}}.
    \end{align}
    But when $T-T'\neq 0$, 
    \begin{align}
    \sum_f \omega_{2^n}^{f\cdot (T-T')} = 0.
    \end{align}
    and so we finally can compute
    \begin{align}
        \E_{f}\left[\ketbra{\phi^f}^{\otimes t}\right] &= \sum_{T\in types} \frac{|T|}{2^t\cdot (2^n-1)^{t-T_0}}\ketbra{\psi_T}\\
       & =\E_{T\gets \mathcal{D}}\left[\ketbra{\psi_T}\right]\\
       & =\Sim^{Gen}(1^t).
    \end{align}
\end{proof}
\section{Proof of~\Cref{lem:randomortho}}\label{sec:randomortho}

\begin{lemma}[\Cref{lem:randomortho}, restated]
    Let $N,N>0$ be integers with $N \leq M$. Let $\rho$ be the mixed state of dimension $NM$ defined by sampling $N$ states $\ket{\phi_k}\randfrom \HaarSt{M}$ for $k\in[N]$. That is, 
    \begin{align}
        \rho=\E_{\ket{\phi_k}\gets \HaarSt{M}, k\in [N]}\left[\bigotimes_{k=1}^N \ketbra{\phi_k}\right].
    \end{align}
    
    Let $\rho'$ be the same state but where we require that each $\ket{\phi_k},\ket{\phi_{k'}}$ are orthogonal. 
    Formally, $\rho'$ is the mixed state of dimension $NM$ defined as
    \begin{align}
    \rho' = \E_{U\gets\mu_M}\left[\bigotimes_{k=1}^N U|k\rangle\langle k|U^\dagger\right].
    \end{align}

    Then $\TD(\rho,\rho') \leq \frac{N(N+1)}{2M}$.
\end{lemma}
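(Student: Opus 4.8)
The plan is to prove this by a hybrid argument that orthogonalizes the $N$ states one at a time. For $j=0,1,\dots,N$ define the hybrid $H_j$ to be the mixed state obtained by sampling a uniformly random orthonormal $j$-tuple $\ket{\phi_1},\dots,\ket{\phi_j}$ (equivalently $U\ket1,\dots,U\ket j$ for $U\gets\HaarUn M$) together with $N-j$ additional independent states $\ket{\phi_{j+1}},\dots,\ket{\phi_N}\gets\HaarSt M$, and outputting $\bigotimes_{k=1}^N\ketbra{\phi_k}$. Then $H_0=\rho$ and $H_N=\rho'$, so by the triangle inequality it suffices to show $\TD(H_{j-1},H_j)\le\frac{j-1}{M}$ for every $j\in[N]$ and sum: $\sum_{j=1}^N\frac{j-1}{M}=\frac{N(N-1)}{2M}\le\frac{N(N+1)}{2M}$.

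First I would observe that $H_{j-1}$ and $H_j$ can be coupled so that they differ only in the $j$-th tensor factor. This uses the marginal consistency of the Haar measure on orthonormal tuples (the first $j-1$ entries of a uniformly random orthonormal $j$-tuple form a uniformly random orthonormal $(j-1)$-tuple) together with the fact that the last $N-j$ entries are i.i.d.\ Haar in both hybrids. Since the state occupying the $j$-th slot is independent of everything else, we may take its expectation inside the tensor product: in $H_{j-1}$ the $j$-th factor is $\E_{\ket{\phi_j}\gets\HaarSt M}\ketbra{\phi_j}=\frac{I}{M}$, while in $H_j$ it is $\E_{\ket{\phi_j}\gets\text{Haar}(S^\perp)}\ketbra{\phi_j}=\frac{P_{S^\perp}}{M-j+1}$, where $S$ is the (random) $(j-1)$-dimensional span of the first $j-1$ orthonormal states and $P_{S^\perp}$ the projector onto its orthocomplement. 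By multiplicativity of the trace norm over tensor products (the remaining factors are states, contributing factor $1$) and joint convexity of trace distance, $\TD(H_{j-1},H_j)\le\E_S\big[\TD(\frac{I}{M},\frac{P_{S^\perp}}{M-j+1})\big]$.

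Then I would compute this single-factor trace distance directly. The two operators commute, so the trace norm of their difference is read off the eigenvalues: $\frac1M$ on the $(j-1)$-dimensional subspace $S$, and $\frac{1}{M-j+1}-\frac1M=\frac{j-1}{M(M-j+1)}$ on the $(M-j+1)$-dimensional orthocomplement. Hence $\|\frac{I}{M}-\frac{P_{S^\perp}}{M-j+1}\|_1=(j-1)\cdot\frac1M+(M-j+1)\cdot\frac{j-1}{M(M-j+1)}=\frac{2(j-1)}{M}$, so $\TD(\frac{I}{M},\frac{P_{S^\perp}}{M-j+1})=\frac{j-1}{M}$, independent of $S$ by unitary invariance. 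This finishes the bound.

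I do not expect a serious obstacle; the only points needing care are the marginal-consistency observation that couples consecutive hybrids so that just one tensor slot changes, and the standard facts that $\E_{\ket\psi\gets\text{Haar}(V)}\ketbra\psi$ equals the normalized projector onto $V$ and that trace distance is jointly convex. It is worth noting that one must take these expectations \emph{before} comparing states: a direct coupling via Gram--Schmidt followed by subadditivity of trace distance over tensor factors would only give an $O(N^{3/2}/\sqrt M)$ bound, whereas averaging the $j$-th factor first turns the per-step cost from $\sqrt{(j-1)/M}$ into $(j-1)/M$, which is exactly what yields the quadratic-in-$N$ bound. The eigenvalue bookkeeping above is routine and I would just verify it carefully.
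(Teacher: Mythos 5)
Your proof is correct, and the hybrid structure is the same as the paper's: orthogonalize the $N$ tensor factors one at a time and bound each step's trace distance by $O(j/M)$. Where you differ is in the per-step estimate. The paper argues by defining the projector $\Pi_{i+1}$ onto the span of product states that are pairwise orthogonal in the first $i+1$ slots, asserting that $\Pi_{i+1}\rho_i\Pi_{i+1}$ is proportional to $\rho_{i+1}$ and that $I-\Pi_{i+1}$ is the optimal distinguishing measurement, and then computing $\mathrm{Tr}((I-\Pi_{i+1})\rho_i)=i/M$ as the probability that the fresh Haar state falls inside the span of the first $i$. Your route instead conditions on the span $S$ of the first $j-1$ states, notes that the marginal on those slots agrees in $H_{j-1}$ and $H_j$, averages the $j$-th factor in place to get $I/M$ versus $P_{S^\perp}/(M-j+1)$, and reads off the exact trace distance $(j-1)/M$ from the eigenvalues of the two commuting operators, finishing by joint convexity. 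This is cleaner and fully self-contained: you never need the optimality claim for $I-\Pi_{i+1}$, which the paper does not justify (and arguably need not hold in general), and the exact commuting-operator computation replaces the probability-of-collision estimate. As a minor bonus, your bookkeeping yields the slightly tighter bound $\frac{N(N-1)}{2M}$ (the paper's sum $\sum_{i=1}^N i/M=\frac{N(N+1)}{2M}$ is indexed off by one; it should be $\sum_{i=0}^{N-1}i/M$), though both suffice for the stated inequality. One small precision worth keeping in the final write-up: after conditioning on the first $j-1$ states, the $j$-th factor is conditionally independent of the remaining $N-j$ i.i.d.\ Haar factors in both hybrids, which is what licenses averaging it inside the tensor product before comparing; you flag this correctly.
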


\begin{proof}
    Define $\rho_i$ to be the state defined by sampling the first $k$ states to all be orthogonal and the last $N-i$ states uniformly. Formally,
    \begin{align}
    \rho_i = \E_{U\gets\mu_M,|\phi_k\rangle\gets\mu_M^s}\left(\bigotimes_{k=1}^i \U|k\rangle\langle k|U^\dagger\right) \otimes \left(\bigotimes_{k=i+1}^N |\phi_k\rangle\langle\phi_k|\right).
    \end{align}
    Here $\rho_0 = \rho'$ and $\rho_n = \rho$. 
    We will show that for all $i$, $\TD(\rho_i,\rho_{i+1}) \leq \frac{i}{M}$.
    In particular, define 
    \begin{align}
    S_i = \left\{\bigotimes_{k=1}^N\ket{\psi_k} : \text{for all }j,j'\leq i, \braket{\psi_j|\psi_{j'}} = 0\right\}. 
    \end{align}
    Define $\Pi_i$ to be the projector onto the subspace spanned by $S_i$.
    We observe that $\Pi_{i+1} \rho_i$ is proportional to $\rho_{i+1}$. That is, $I - \Pi_{i+1}$ is the measurement which optimally distinguishes $\rho_i$ and $\rho_{i+1}$. In particular,
    \begin{align}
    \TD(\rho_{i},\rho_{i+1}) \leq {\rm Tr}((I-\Pi_{i+1})(\rho_{i} - \rho_{i+1})) = 1 - {\rm Tr}(\Pi_{i+1}\rho_i).
    \end{align}
    But ${\rm Tr}(\Pi_{i+1}\rho_i)$ is the probability that applying $\{\sum_{k\leq i}\ketbra{\psi_k}, I - \sum_{k<i}\ketbra{\psi_k}\}$ to $\rho_i$ results in the first measurement outcome (here $\ket{\psi_k} = U\ket{k}$). So
    \begin{align}
            {\rm Tr}(\Pi_{i+1}\rho_i) &= \BigE{\sum_{k\leq i}\abs{\bra{\psi_{i+1}}U\ket{k}}^2}{U\randfrom \HaarUn{m}\\ \ket{\psi_{i+1}}\randfrom \HaarSt{M}}\\
            &\leq \sum_{k\leq i} \BigE{\abs{\bra{\psi_{i+1}}U\ket{k}}^2}{U\randfrom \HaarUn{m}\\ \ket{\psi_{i+1}}\randfrom \HaarSt{M}}\\
            &\leq i\cdot \E\left[\abs{\braket{\psi|\phi}}^2:\ket{\psi},\ket{\phi}\randfrom \HaarSt{M}\right]\\
            &= \frac{i}{M}.
    \end{align}

    And so, we get that 
    \begin{align}
        \TD(\rho,\rho') \leq \sum_{i=1}^N \frac{i}{M}\\
        = \frac{N(N+1)}{2M}.
    \end{align}
\end{proof}

\end{document}